\pgfplotsset{compat=1.18}
\title{Optimization by Decoded Quantum Interferometry}
\author[1]{Stephen P.~Jordan\footnote{stephenjordan@google.com}}
\author[1]{Noah Shutty\footnote{shutty@google.com}}
\author[2]{Mary Wootters}
\author[1]{Adam Zalcman}
\author[1,3]{\\Alexander Schmidhuber}
\author[1,4]{Robbie King}
\author[1]{Sergei V.~Isakov}
\author[1]{\\ Tanuj Khattar}
\author[1]{Ryan Babbush}
\affil[1]{\small{\it{Google Quantum AI, Venice, CA 90291}}}
\affil[2]{\small{\it{Departments of Computer Science and Electrical Engineering, Stanford University, Stanford, CA 94305}}}
\affil[3]{\small{\it{Center for Theoretical Physics, Massachusetts Institute of Technology, Cambridge, MA 02139}}}
\affil[4]{\small{\it{Department of Computing and Mathematical Sciences, Caltech, Pasadena, CA 91125}}}
\begin{document}

\date{}

\newcommand{\eq}[1]{(\ref{#1})}                   
\newcommand{\sect}[1]{\S\ref{#1}}                 
\renewcommand{\th}{^{\textrm{th}}}                

\renewcommand{\F}{\mathbb{F}} 
\newcommand{\tr}{\mathrm{tr}} 
\renewcommand{\abstractname}{\vspace{-3\baselineskip}}

\definecolor{codegreen}{rgb}{0,0.6,0}
\definecolor{codegray}{rgb}{0.5,0.5,0.5}
\definecolor{codepurple}{rgb}{0.58,0,0.82}
\definecolor{backcolour}{rgb}{0.95,0.95,0.92}

\newtheorem{theorem}{Theorem}[section]
\newtheorem{definition}{Definition}[section]
\newtheorem{conjecture}{Conjecture}[section]
\newtheorem{lemma}{Lemma}[section]
\newtheorem{fact}{Fact}[section]
\newtheorem{corollary}{Corollary}[section]

\theoremstyle{definition}
\newtheorem{remark}{Remark}[section]

\bibliographystyle{unsrt}

\maketitle

\begin{abstract}
Achieving superpolynomial speedups for optimization has long been a central goal for quantum algorithms. Here we introduce Decoded Quantum Interferometry (DQI), a quantum algorithm that uses the quantum Fourier transform to reduce optimization problems to decoding problems. For approximating optimal polynomial fits over finite fields, DQI achieves a superpolynomial speedup over known classical algorithms. The speedup arises because the problem's algebraic structure is reflected in the decoding problem, which can be solved efficiently. We then investigate whether this approach can achieve speedup for optimization problems that lack algebraic structure but have sparse clauses. These problems reduce to decoding LDPC codes, for which powerful decoders are known. To test this, we construct a max-XORSAT instance where DQI finds an approximate optimum significantly faster than general-purpose classical heuristics, such as simulated annealing. While a tailored classical solver can outperform DQI on this instance, our results establish that combining quantum Fourier transforms with powerful decoding primitives provides a promising new path toward quantum speedups for hard optimization problems.
\end{abstract}

\section{Introduction}
\label{sec:introduction}

NP-hardness results suggest that finding exact optima and even sufficiently good approximate optima for worst-case instances of many optimization problems is likely out of reach for polynomial-time algorithms both classical and quantum \cite{T14}. Nevertheless, there remain combinatorial optimization problems, such as the closest vector problem, for which there is a large gap between the best approximation achieved by a polynomial-time classical algorithm \cite{AKS01} and the strongest complexity-theoretic inapproximability result \cite{M15}. When considering average-case complexity such gaps become more prevalent, as few average-case inapproximability results are known. These gaps present a potential opportunity for quantum computers, namely achieving in polynomial time an approximation that requires superpolynomial time to achieve using known classical algorithms.

Quantum algorithms for combinatorial optimization have been the subject of intense research over the last three decades \cite{FGGLLP01, FGG14, H18, BFM22, DP22, KBF23, SLC24}, which has uncovered some evidence of possible superpolynomial quantum speedup for certain optimization problems \cite{FGRV25, LZW23, CLZ22, PUW24, S22, VHG21, EH22}. Nevertheless, the problem of finding superpolynomial quantum advantage for optimization is extremely challenging and remains largely open. 

Here, we propose a quantum algorithm for optimization that uses interference patterns as its main underlying principle. We call this algorithm Decoded Quantum Interferometry (DQI). DQI uses a Quantum Fourier Transform to arrange that amplitudes interfere constructively on symbol strings for which the objective value is large, thereby enhancing the probability of obtaining good solutions upon measurement. Most prior approaches to quantum optimization have been Hamiltonian-based \cite{FGGLLP01, FGG14}, with a notable exception being the superpolynomial speedup due to Chen, Liu, and Zhandry \cite{CLZ22} for finding short lattice vectors, which uses Fourier transforms and can be seen as an ancestor of DQI. Whereas Hamiltonian-based quantum optimization methods are often regarded as exploiting the local structure of the optimization landscape (\textit{e.g.} tunneling across barriers \cite{DBI16}), our approach instead exploits sparsity that is routinely present in the Fourier spectrum of the objective functions for combinatorial optimization problems and can also exploit more elaborate structure in the spectrum if present.

Before presenting evidence that DQI can efficiently obtain approximate optima not achievable by known polynomial-time classical algorithms, we quickly illustrate the essence of the DQI algorithm by applying it to max-XORSAT. We use max-XORSAT as our first example because, although it is not the problem on which DQI has achieved its greatest success, it is the context in which DQI is simplest to explain.

Given an $m \times n$ matrix $B$ with $m > n$, the max-XORSAT problem is to find an $n$-bit string $\mathbf{x}$ satisfying as many as possible among the $m$ linear mod-2 equations $B \mathbf{x} = \mathbf{v}$. Since we are working modulo 2 we regard all entries of the matrix $B$ and the vectors $\mathbf{x}$ and $\mathbf{v}$ as coming from the finite field $\mathbb{F}_2$. The max-XORSAT problem can be rephrased as maximizing the objective function 
\begin{equation}
    \label{eq:obj1}
    f(\mathbf{x}) = \sum_{i=1}^m (-1)^{v_i + \mathbf{b}_i \cdot \mathbf{x}}.
\end{equation}
where $\mathbf{b}_i$ is the $i\th$ row of $B$. Thus $f(\mathbf{x})$ is the number among the $m$ linear equations that are satisfied minus the number unsatisfied. 

From \eq{eq:obj1} one can see that the Hadamard transform of $f$ is extremely sparse: it has $m$ nonzero amplitudes, which are on the strings $\mathbf{b}_1,\ldots,\mathbf{b}_m$. The state $\sum_{\mathbf{x} \in \mathbb{F}_2^n} f(\mathbf{x}) \ket{\mathbf{x}}$ is thus easy to prepare. Simply prepare the superposition $\sum_{i=1}^m (-1)^{v_i} \ket{\mathbf{b}_i}$ and apply the quantum Hadamard transform. (Here, for simplicity, we have omitted normalization factors.) Measuring the state $\sum_{\mathbf{x} \in \mathbb{F}_2^n} f(\mathbf{x}) \ket{\mathbf{x}}$ in the computational basis yields a biased sample, where a string $\mathbf{x}$ is obtained with probability proportional to $f(\mathbf{x})^2$, which slightly enhances the likelihood of obtaining strings of large objective value relative to uniform random sampling.

To obtain stronger enhancement, DQI prepares states of the form
\begin{equation}
    \label{eq:Pdef}
    \ket{P(f)} = \sum_{\mathbf{x} \in \mathbb{F}_2^n} P(f(\mathbf{x})) \ket{\mathbf{x}},
\end{equation}
where $P$ is an appropriately normalized degree-$\ell$ polynomial. The Hadamard transform of such a state always takes the form
\begin{equation}
    \label{eq:step1}
    \sum_{k=0}^\ell \frac{w_k}{\sqrt{\binom{m}{k}}} \sum_{\substack{\mathbf{y} \in \mathbb{F}_2^m \\ |\mathbf{y}|=k}} (-1)^{\mathbf{v} \cdot \mathbf{y}} \ket{B^T \mathbf{y}},
\end{equation}
for some coefficients $w_0,\ldots,w_\ell$. Here $|\mathbf{y}|$ denotes the Hamming weight of the bit string $\mathbf{y}$. The DQI algorithm prepares $\ket{P(f)}$ in five steps. The first step is to prepare the superposition $\sum_{k=0}^\ell w_k \ket{D_{m,k}}$, where
\begin{equation}
\ket{D_{m,k}} = \frac{1}{\sqrt{\binom{m}{k}}} \sum_{\substack{\mathbf{y} \in \mathbb{F}_2^m \\ |\mathbf{y}| = k}} \ket{\mathbf{y}}
\end{equation}
is the Dicke state of weight $k$. Preparing such superpositions over Dicke states can be done using $\mathcal{O}(m^2)$ quantum gates using the techniques of \cite{BE22, WT24}. Second, the phase $(-1)^{\mathbf{v} \cdot \mathbf{y}}$ is imposed by applying the Pauli product $Z_1^{v_1} \otimes \ldots \otimes Z_m^{v_m}$. Third, the quantity $B^T \mathbf{y}$ is computed into an ancilla register using a reversible circuit for matrix multiplication. This yields the state
\begin{equation}
    \sum_{k=0}^\ell \frac{w_k}{\sqrt{\binom{m}{k}}} \sum_{\substack{\mathbf{y} \in \mathbb{F}_2^m \\ |\mathbf{y}|=k}} (-1)^{\mathbf{v} \cdot \mathbf{y}} \ket{\mathbf{y}} \ket{B^T \mathbf{y}}.
\end{equation}
The fourth step is to use the value $B^T \mathbf{y}$ to infer $\mathbf{y}$, which can then be subtracted from $\ket{\mathbf{y}}$, thereby bringing it back to the all zeros state, which can be discarded. (This is known as ``uncomputation'' \cite{NC10}.) The fifth and final step is to apply a Hadamard transform to the remaining register, yielding $\ket{P(f)}$. This sequence of steps is illustrated in Fig. \ref{fig:schematic}.

\begin{figure}
\begin{center}
        \includegraphics[width=0.5\textwidth]{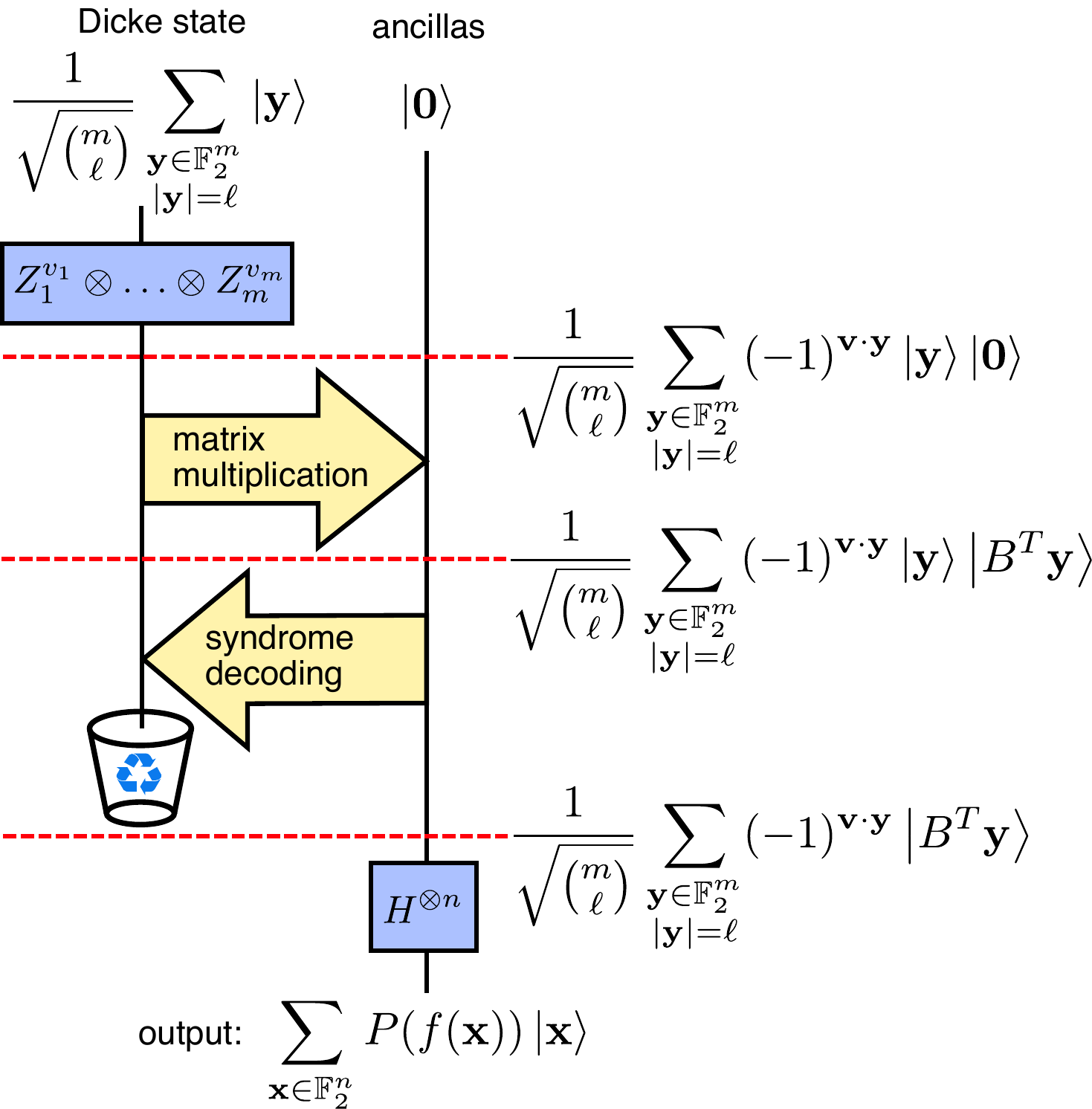}
    \caption{\label{fig:schematic} A schematic illustration of the steps of the DQI algorithm. Since the initial Dicke state is of weight $\ell$, the final polynomial $P$ is of degree $\ell$. Here, for simplicity, we take $w_\ell = 1$ and $w_k=0$ for all $k\neq \ell$.}
    \end{center}
\end{figure}

The fourth step, in which $\ket{\mathbf{y}}$ is uncomputed, is not straightforward because $B$ is a nonsquare matrix and thus inferring $\mathbf{y}$ from $B^T \mathbf{y}$ is an underdetermined linear algebra problem. However, we also know that $|\mathbf{y}| \leq \ell$. The problem of solving this underdetermined linear system with a Hamming weight constraint is precisely the syndrome decoding problem for the classical error correcting code $C^\perp = \{ \mathbf{d} \in \mathbb{F}_2^m : B^T \mathbf{d} = \mathbf{0} \}$ with up to $\ell$ errors. 

In general, syndrome decoding is an NP-hard problem \cite{BMT78}. However, when $B$ is very sparse or has certain kinds of algebraic structure, the decoding problem can be solved by polynomial-time classical algorithms even when $\ell$ is large (\textit{e.g.} linear in $m$). By solving this decoding problem using a reversible implementation of such a classical decoder one uncomputes $\ket{\mathbf{y}}$ in the first register. If the decoding algorithm requires $T$ quantum gates, then the number of gates required to prepare $\ket{P(f)}$ is $\mathcal{O}(T + m^2)$.

Approximate solutions to the optimization problem are obtained by measuring $\ket{P(f)}$ in the computational basis. The higher the degree of the polynomial in $\ket{P(f)}$, the greater one can bias the measured bit strings toward solutions with large objective value. However, this requires solving a harder decoding problem, as the maximum number of errors is equal to the degree of $P$. Next, we summarize how, by making optimal choice of $P$ and judicious choice of decoder, DQI can be a powerful optimizer for some classes of problems.

\section{Results}
\label{sec:results}

Although DQI can be applied more broadly, the most general optimization problem that we apply DQI to in this paper is max-LINSAT, which we define as follows.

\begin{definition}\label{def:linsat}
    Let $\mathbb{F}_p$ be a finite field and let $B \in \mathbb{F}_p^{m \times n}$. For each $i=1,\ldots,m$, let $F_i \subset \mathbb{F}_p$ be an arbitrary subset of $\mathbb{F}_p$, which yields a corresponding constraint $\sum_{j=1}^n B_{ij} x_j \in F_i$. The max-LINSAT problem is to find $\mathbf{x} \in \mathbb{F}_p^n$ satisfying as many as possible of these $m$ constraints.
\end{definition}
\noindent
We focus primarily on the case that $p$ has at most polynomially large magnitude and the subsets $F_1,\ldots,F_m$ are given as explicit lists. The max-XORSAT problem is the special case where $p=2$ and $|F_i| = 1$ for all $i$.

Consider a max-LINSAT instance where the sets $F_1,\ldots,F_m$ each have size $r$. Let $\langle s \rangle$ be the expected number of constraints satisfied by the symbol string sampled in the final measurement of the DQI algorithm. Suppose we have a polynomial-time algorithm that can correct up to $\ell$ bit flip errors on codewords from the code $C^\perp = \{ \mathbf{d} \in \mathbb{F}_p^m : B^T \mathbf{d} = \mathbf{0} \}$. Then, in polynomial time, DQI achieves the following approximate optimum to the max-LINSAT problem
\begin{equation}
    \label{eq:semicircle_general1}
    \frac{\langle s \rangle}{m} = \left( \sqrt{\frac{\ell}{m}\left(1 - \frac{r}{p}\right)} + \sqrt{\frac{r}{p}\left(1 - \frac{\ell}{m}\right)}\right)^2
\end{equation}
if $\frac{r}{p}\leq 1-\frac{\ell}{m}$ and $\frac{\langle s \rangle}{m}=1$ otherwise. See Theorem \ref{thm:semicircle} for the precise statement in the case of perfect decoding and Theorem \ref{thm:imperfect_decoding} for the analogous statement in the presence of decoding errors. This is achieved by a specific optimal choice of the coefficients $w_0,\ldots,w_\ell$, which can be classically precomputed in polynomial time, as described in \sect{sec:semicircle}.

Note that $r/p$ is the fraction of constraints that would be satisfied if the variables were assigned uniformly at random. In the case $r/p=1/2$, \eq{eq:semicircle_general1} becomes the equation of a semicircle, as illustrated in Fig.~\ref{fig:semicircle}. Hence we informally refer to \eq{eq:semicircle_general1} as the ``semicircle law.''

Via \eq{eq:semicircle_general1}, any result on decoding a class of linear codes implies a corresponding result regarding the performance of DQI for solving a class of combinatorial optimization problems which are dual to these codes. This enables two new lines of research in quantum optimization. The first is to harvest the coding theory literature for rigorous theorems on the performance of decoders for various codes and obtain as corollaries guarantees on the approximation achieved by DQI for corresponding optimization problems. The second is to perform computer experiments to determine the empirical performance of classical heuristic decoders, which through equation \eq{eq:semicircle_general1} can be compared against the empirical performance of classical heuristic optimizers. In this manner DQI can be benchmarked instance-by-instance against classical heuristics, even for optimization problems far too large to attempt on present-day quantum hardware. We next describe our results so far from each of these two lines of research.

We first use rigorous decoding guarantees to analyze the performance of DQI on the following problem.

\begin{definition}
    \label{def:OPI}
    Given integers $n < p - 1$ with $p$ prime, an instance of the Optimal Polynomial Intersection (OPI) problem is as follows. Let $F_1,\ldots,F_{p-1}$ be subsets of the finite field $\mathbb{F}_p$. Find a polynomial $Q \in \mathbb{F}_p[y]$ of degree at most $n-1$ that maximizes $f_\text{OPI}(Q) = |\{y \in \{1,\ldots,p-1\} : Q(y) \in F_y\}|$, \textit{i.e.} intersects as many of these subsets as possible.
\end{definition}

\noindent
An illustration of this problem is given in Fig. \ref{fig:curvefitting}. 

\begin{figure}
    \centering
\begin{tikzpicture}[domain=0:11,scale=.5]
    \draw(0,0) -- (11,0);
    \draw(0,0) -- (0,11);
    \node[anchor=west] at (11.5,0) {$\mathbb{F}_p$};
    \node[anchor=south] at (0,11.5) {$\mathbb{F}_p$};
    \foreach \i in {2,4,6,8, 10}
    {
    \draw(\i,-.25) --(\i, 0);
   }
\draw[thick, orange, fill=orange!10, rounded corners] (1.75, 3.75) rectangle (2.25, 9.25);
\draw[thick, orange, fill=orange!10, rounded corners] (3.75, .25) rectangle (4.25, 2.25);
\draw[thick, orange, fill=orange!10, rounded corners] (3.75, 3.75) rectangle (4.25, 6.25);
\draw[thick, orange, fill=orange!10, rounded corners] (3.75, 7.75) rectangle (4.25, 10.25);
\draw[thick, orange, fill=orange!10, rounded corners] (5.75, 5.75) rectangle (6.25, 10.25);
\draw[thick, orange, fill=orange!10, rounded corners] (7.75, .75) rectangle (8.25, 5.25);
\draw[thick, orange, fill=orange!10, rounded corners] (9.75, 4.75) rectangle (10.25, 9.25);
 \node at(6,-.75) {$y_1$};
 \node[orange](l) at (6.5, 12) {$F_{y_1}$};
 \draw[->, orange] (l) to [out=-90,in=90] (6,10.3);
    \draw[color=blue, very thick] plot(\x, {.05 * \x^3 - \x^2 + 5*\x + 2}) node[right] {$Q_1(y)$}; 
    \draw[color=red, very thick,dashed] plot(\x, {-.05 * \x^3 + .8*\x^2 - 3*\x + 8}) node[right] {$Q_2(y)$};
\end{tikzpicture}
\caption{A stylized example of the Optimal Polynomial Intersection (OPI) problem.  For $y_1 \in \mathbb{F}_p$, the orange set above the point $y_1$ represents $F_{y_1}$.  Both of the polynomials $Q_1(y)$ and $Q_2(y)$ represent solutions that have a large objective value, as they each intersect all but one set $F_y$.}
    \label{fig:curvefitting}
\end{figure}

In \sect{sec:OPI} we show that OPI is a special case of max-LINSAT over $\mathbb{F}_p$ with $m = p-1$ constraints in which $B$ is a Vandermonde matrix and thus $C^\perp$ is a Reed-Solomon code. Syndrome decoding for Reed-Solomon codes can be solved in polynomial time out to half the distance of the code, \textit{e.g.} using the Berlekamp-Massey algorithm \cite{B15}. Consequently, in DQI we can take $\ell = \lfloor \frac{n+1}{2} \rfloor$. For the regime where $r/p$ and $n/p$ are constants and $p$ is taken asymptotically large, the fraction of satisfied constraints achieved by DQI using the Berlekamp Massey decoder can be obtained by substituting $\frac{\ell}{m} = \frac{n}{2p}$ into \eq{eq:semicircle_general1}.

OPI and special cases of it have been studied in several domains. In the coding theory literature, OPI is studied under the name \emph{list-recovery}, and in the cryptography literature it is studied under the name \emph{noisy polynomial reconstruction/interpolation}~\cite{NP99,BN00}. OPI can also be viewed as a generalization of the polynomial approximation problem, studied in \cite{GM14, SW05, S05}, in which each set $F_i$ is a contiguous range of values in $\mathbb{F}_p$. In \sect{sec:classical} we analyze the algorithms from these literatures and find that, for the parameter regime addressed by DQI, the best approximation achieved in polynomial time classically is $\frac{1}{2} + \frac{n}{2p}$, via Prange's algorithm. As shown in Figure \ref{fig:OPI_comparison}, for $r/p = 1/2$ and any fixed $0 < n/p < 1$, DQI with the Berlekamp-Massey decoder exceeds the satisfaction fraction achieved by Prange's algorithm in the limit of large $p$. Classically, the only methods we are aware of to exceed the satisfaction fraction achieved by Prange's algorithm are brute force search or slight refinements thereof, which have exponential runtime. Thus, DQI achieves superpolynomial speedup for this problem, assuming no polynomial-time algorithm is found that can match the satisfaction fraction that DQI achieves.

Currently, there are no results directly showing that the OPI problem in the parameter regime that we consider is classically intractable under any standard complexity-theoretic or cryptographic assumptions. However, such results are known for certain limiting cases of the OPI problem, and we propose the task of extending these results to regimes more relevant to DQI for future research. The hardness of the special case of OPI when $|f_i^{-1}(+1)| = 1$, in a certain parameter regime, has been proposed as a cryptographic assumption in \cite{NP06}, which has not been broken to our knowledge. Finding exact optima for OPI with $|f_i^{-1}(+1)| = 1$ can be cast as maximum-likelihood decoding for Reed-Solomon codes, which is known to be NP-hard~\cite{GV05,GGG18}. Finding sufficiently good approximate optima is known to be as hard as discrete log~\cite{CW07,CW08}, but these hardness results do not match the parameter regime addressed by DQI.

As a concrete example, for $n \simeq p/10$ and $r/p \simeq 1/2$, the fraction of constraints satisfied by Prange's algorithm is $0.55$, whereas DQI achieves $1/2 + \sqrt{19}/20 \simeq 0.7179$. As a specific point of comparison, we challenge the algorithms community to beat this by a classical polynomial time algorithm. Interestingly, for these parameters, one statistically expects that solutions satisfying all $p-1$ constraints exist, but they apparently remain out of reach of polynomial time algorithms both quantum and classical.

To find classically intractable instances of OPI solvable by DQI with minimal quantum resources, we find it is advantageous to choose $n/p \simeq r/p \simeq 1/2$. For these parameters DQI achieves satisfaction fraction 0.933. As discussed in \sect{sec:resources}, achieving this using classical algorithms known to us has prohibitive computational cost for $p$ as small as 521. The dominant cost in DQI+BM is the reversible implementation of the subroutine to find the shortest linear feedback shift register (LFSR) used in the Berlekamp-Massey algorithm. In \sect{sec:resources} we use Qualtran \cite{harrigan2024expressing} to find that at $p=521$ the LFSR can be found using approximately $1 \times 10^8$ logical Toffoli gates and $9 \times 10^3$ logical qubits.

We next use computer experiments to benchmark the performance of DQI against classical heuristics on average-case instances from certain families of max-XORSAT with sparse $B$. DQI reduces such problems to decoding problems on codes with sparse parity check matrices. Such codes are known as Low Density Parity Check (LDPC) codes. Polynomial-time classical algorithms such as belief propagation (BP) can decode randomly sampled LDPC codes up to numbers of errors that nearly saturate information-theoretic limits \cite{Gal62, RU01, MM09}. This makes sparse max-XORSAT an enticing target for DQI.  Although we use max-XORSAT as a convenient testbed for DQI, other commonly-studied optimization problems such as max-$k$-SAT could be addressed similarly. Specifically, consider any binary optimization problem in which the objective function counts the number of satisfied constraints, where each constraint is a Boolean function of at most $k$ variables. By taking the Hadamard transform of the objective function, one converts such a problem into an instance of weighted max-$k$-XORSAT, where the number of variables is unchanged and the number of constraints has been increased by at most a factor of $2^k$.

Although we are able to analyze the asymptotic average case performance of DQI rigorously we do not restrict the classical competition to algorithms with rigorous performance guarantees. Instead, we choose to set a high bar by also attempting to beat the empirical performance of classical heuristics that lack such guarantees.

Through careful tuning of sparsity patterns in $B$, we are able to find some families of sparse max-XORSAT instances for which DQI with standard belief propagation decoding finds solutions satisfying a larger fraction of constraints than we are able to find using a comparable number of computational steps by any of the general-purpose classical optimization heuristics that we tried, which are listed in Table \ref{tab:xorboard}. However, unlike our OPI example, we do not put this forth as a potential example of superpolynomial quantum advantage. Rather, we are able to construct a tailored classical algorithm specialized to these instances which, with seven minutes of runtime, finds solutions where the fraction of constraints satisfied slightly beats DQI+BP. As discussed in \sect{sec:wins}, our tailored heuristic is a variant of simulated annealing that assigns temperature-dependent weights to the terms in the cost function determined by how many variables they contain.

The comparison against simulated annealing is complicated by the fact that, as shown in \sect{sec:sa_convergence}, the fraction of clauses satisfied by simulated annealing increases as a function of the duration of the anneal. Thus there is not a unique sharply-defined number indicating the maximum satisfaction fraction reachable by simulated annealing. DQI reduces our sparsity-tuned max-XORSAT problem to an LDPC decoding problem that our implementation of belief propagation solves in approximately 8 seconds on a single core, excluding the time used to load and parse the instance. Thus, a natural point of comparison is the result obtained by simulated annealing with similar runtime. By running our optimized C++ implementation of simulated annealing for 8 seconds, we are only able to reach 0.764. If we allow parallel execution of multiple anneals and increase our runtime allowance, we are able to eventually replicate the satisfaction fraction achieved by DQI+BP using simulated annealing. The shortest anneal that achieved this used five cores and ran for 73 hours, \textit{i.e.} five orders of magnitude longer than our belief propagation decoder. Although dependent on implementation details, we can take this ratio of runtimes as a rough indicator of the ratio of computational steps. In the context of DQI the decoder would need to be implemented as a reversible circuit and subject to overhead due to quantum error correction, so this should not be interpreted as an indicator of quantum versus classical runtime.

\begin{figure}
\begin{center}
\begin{tikzpicture}[scale=1.35]
  \draw[thick] (0,0) rectangle (4,4);

  \node at (2, -0.4) {$n/p$};
  \node at (-0.6, 2) {$\braket{s}/p$};

  \draw[thick] (0,-0.1) -- (0,0.1);
  \node[below] at (0, -0.1) {$0$};

  \draw[thick] (4,-0.1) -- (4,0.1);
  \node[below] at (4, -0.1) {$1$};

  \draw[thick] (.4,3.9) -- (.4,4.1);
  \node[above] at (.4, 4.1) {$\frac{1}{10}$};

\draw[thick] (2,3.9) -- (2,4.1);
\node[above] at (2, 4.1) {$\frac{1}{2}$};

  \draw[thick] (-0.1, 0) -- (0.1, 0);
  \node[left] at (-0.1, 0) {$\tfrac{1}{2}$};

  \draw[thick] (-0.1, 4) -- (0.1, 4);
  \node[left] at (-0.1, 4) {$1$};

  \draw[domain=0:4, smooth, samples=200, variable=\x, thick, blue]
    plot ({\x}, {8 * (0.5 + sqrt((\x/4)/2 * (1 - (\x/4)/2)) - 0.5)});

  \draw[domain=0:4, smooth, samples=200, variable=\x, thick, red]
    plot ({\x}, {8 * ((0.5 + \x/8) - 0.5)});

  \node[blue] at (1.2, 3.6) {DQI+BM};
  \node[red] at (2.95, 2.05) {Prange};

  \draw[thick, dotted] (0.4,0) -- (0.4,4);
  \draw[thick, dotted] (2,0) -- (2,4);
\end{tikzpicture}    \caption{\label{fig:OPI_comparison} Here we plot the expected fraction $\langle s\rangle/p$ of satisfied constraints achieved by DQI with the Berlekamp-Massey decoder and by Prange's algorithm for the OPI problem in the balanced case $r/p = 1/2$, as a function of the ratio of variables to constraints $n/p$. At $n/p=1/10$ Prange's algorithm satisfies a fraction $0.55$ of the clauses whereas DQI satisfies $\langle s \rangle/p = 1/2 + \sqrt{19}/20 \simeq 0.7179$. As a concrete challenge to the classical algorithms community we propose matching or exceeding this value in polynomial time. In our concrete resource estimation in \sect{sec:resources} we consider $n/p = 1/2$, where OPI achieves $\langle s \rangle/p = 1/2 + \sqrt{3}/4 \simeq 0.9330$ and Prange's algorithm achieves $0.75$.}
\end{center}
\end{figure}

\begin{table}
    \begin{center}
    \begin{tabular}[t]{|l|c|}
        \hline
        \textbf{Algorithm} & \textbf{SAT Fraction} \\
        \hline
        Tailored Heuristic (7 min $\times$ 1 core) & 0.880  \\
        \hline
        Long Anneal (73 hrs $\times$ 5 cores) & 0.832 \\
        \hline
        DQI + Belief Propagation & $\hspace{-12pt} \geq 0.831$ \\
        \hline
        Prange's algorithm & 0.812 \\
        \hline
        Short Anneal (8 sec $\times$ 1 core) & 0.764 \\
        \hline
        Greedy Algorithm & 0.666 \\
        \hline
        AdvRand & 0.554 \\
        \hline
    \end{tabular}
    \end{center}
    \caption{\label{tab:xorboard} Here, we compare DQI, using a standard belief propagation decoder, against classical algorithms for a randomly-generated max-XORSAT instance with irregular degree distribution specified in \sect{sec:wins}. We consider an example instance with $31,216$ variables and $50,000$ constraints. The classical algorithms above are defined in \sect{sec:classical}. For simulated annealing the satisfaction fraction grows with runtime, so we report two numbers. The first is the optimum reachable by limiting simulated annealing to the same runtime used by belief propagation to solve the problem to which the max-XORSAT instance is reduced by DQI (8 seconds $\times$ 1 core) and the second is for the shortest anneal that matched satisfaction fraction achieved by DQI+BP (73 hours $\times$ 5 cores).}
\end{table}

\section{Discussion}\label{sec:discussion}

The idea that quantum Fourier transforms could be used to achieve reductions between problems on lattices and their duals originates in the early 2000s in work of Regev, Aharonov, and Ta-Shma \cite{ATS03,R04,AR05,R09}. Linear codes, as considered here, are closely analogous to lattices but over finite fields. By considering lattices with only geometric structure no quantum speedups were found using these reductions until the 2021 breakthrough of Chen, Liu, and Zhandry \cite{CLZ22}, which obtains a superpolynomial speedup for a constraint satisfaction problem by combining these ideas with an intrinsically quantum decoding method. Other recent explorations of Regev-style reductions to general unstructured codes and lattices are given in \cite{chailloux2024quantum, DRT23, EH22}. Here, we restrict attention to codes defined by matrices that are either sparse or algebraically structured and in the latter case are able to obtain an apparent superpolynomial quantum speedup for an optimization problem.

Recently, Yamakawa and Zhandry have also considered the application of Regev-style reductions to a problem with extra structure and obtained quantum advantage \cite{YZ22}. They define an oracle problem that they prove can be solved using polynomially many quantum queries but requires exponentially many classical queries. Their problem is essentially equivalent to max-LINSAT over an exponentially large finite field $\mathbb{F}_{2^t}$, where the sets $F_1,\ldots,F_m$ are defined by random oracles and the matrix $B$ is obtained from a folded Reed-Solomon code. In \sect{sec:folded} we recount the exact definition of the Yamakawa-Zhandry problem and argue that DQI can be extended to the Yamakawa-Zhandry problem and in this case likely yields solutions satisfying all constraints. Although problems with exponentially large $F_1,\ldots,F_m$ defined by oracles are far removed from industrial optimization problems, this limiting case provides evidence against the possibility of efficiently simulating DQI with classical algorithms and thereby ``dequantizing'' it, as has happened with some prior quantum algorithms proposed as potential superpolynomial speedups \cite{CGP22}. More precisely, our argument suggests that DQI cannot be dequantized by any \emph{relativizing} techniques, in the sense of \cite{F94}.

We conclude by noting that the work reported here initiates the exploration of quantum speedups through DQI but is very far from completing it. In particular, we highlight three avenues for future work: multivariate OPI, custom decoders for solving max-XORSAT by DQI, and sampling problems. First, we note that the DQI algorithm can be straightforwardly adapted to solve the multivariate generalization of OPI. As shown in \sect{sec:multivariate}, multivariate OPI gets reduced by DQI to the decoding of Reed-Muller codes. Known polynomial-time classical algorithms can decode all Reed-Muller codes out to half their distance \cite{PW04}. (Reed-Solomon codes are the univariate special case.) Consequently, one expects a region of parameter space for which DQI achieves superpolynomial speedup on multivariate OPI, which includes the speedup on univariate OPI presented here as a special case. Mapping out this region of quantum advantage remains for future work.

Second, we note that our exploration of DQI applied to max-XORSAT is far from exhaustive. In particular, \eq{eq:semicircle_general1} enables a benchmark-driven approach to the development of tailored heuristics for decoding designed to achieve quantum speedup on some class of optimization problems via DQI. This search can be guided by upper bounds on the performance of DQI that, via the semicircle law, follow from information-theoretic limits on decoding. Such an analysis is given in \sect{sec:limits} and shows that for $D$-regular max-$k$-XORSAT instances, the upper bound on the possible performance of DQI with classical decoders is already exceeded by the empirical performance of simulated annealing when $k$ is too small relative to $D$. Additionally, we are able to compare the performance of DQI against the Quantum Approximate Optimization Algorithm (QAOA) for various ensembles of max-$k$-XORSAT instances at $k=2$ and $k=3$ and on all of these QAOA exceeds the upper bound on performance for DQI with classical decoders.

These limits show that, for DQI to achieve advantage on max-$k$-XORSAT, one must either go to large $k$ or move to quantum decoders that exploit the coherence of the bit flip errors. Large-$k$ problems are reduced by DQI to decoding problems in which the parity check matrix is denser than in typical LDPC codes. The increased density degrades the performance of belief propagation. This suggests future research developing decoders to tolerate denser parity check matrices than are typically used. Despite some progress along these lines \cite{FW05,FM07,DD08,TS08,YLB09,TRH09,TRH10} this remains an underexplored area compared to the decoding of codes with very sparse parity check matrices. With quantum decoders, it remains information-theoretically possible for DQI to achieve advantage over known polynomial-time classical and quantum algorithms, even for small $k$. Realizing this potential advantage depends on the development of polynomial-size quantum circuits for this \textit{quantum decoding problem}. Some exciting progress on this problem has been reported in \cite{chailloux2024quantum, PR22,CLZ22}.

Third, we note that DQI produces unbiased samples, in which the probability of obtaining a given solution to an optimization problem is constant across all solutions achieving a given objective value. This guarantee of fair sampling is absent for most classical optimization algorithms and has known applications to very hard problems of approximate counting \cite{SJ89}. \\

\noindent \textbf{Data availability:} The problem instances that we describe, the code used in our computer experiments and resource estimation, and the raw data from our plots are available at \url{https://doi.org/10.5281/zenodo.13327870}. \\

\noindent \textbf{Acknowledgments:} We thank Robin Kothari, Ryan O'Donnell, Edward Farhi, Hartmut Neven, Kostyantyn Kechedzhi, Sergio Boixo, Vadim Smelyanskiy, Yuri Lensky, Dorit Aharonov, Oded Regev, Jarrod McClean, Madhu Sudan, Umesh Vazirani, Yuval Ishai, Brett Hemenway Falk, Oscar Higgott, John Azariah, Ojas Parekh, Jon Machta, Helmut Katzgraber, Craig Gidney, Noureldin Yosri and Dmitri Maslov for useful discussions. MW's work on this project was funded by a grant from Google Quantum AI.\\

\noindent \textbf{Organization:}  In \sect{sec:OptNumSat}, we state Theorem~\ref{thm:semicircle}, which characterizes the performance of DQI on max-LINSAT problems in terms of the ability to solve the corresponding decoding problem. In \sect{sec:OPI} we formally define the OPI problem and apply Theorem~\ref{thm:semicircle} to predict DQI's performance on OPI. In \sect{sec:sparsinstances} we discuss how DQI performs on unstructured sparse instances of sparse max-XORSAT using belief propagation decoding. In \sect{sec:prior} we discuss the long line of prior work related to DQI. In \sect{sec:DQI} we explain the DQI algorithm in detail. In \sect{sec:semicircle} we prove Theorem~\ref{thm:semicircle}. In \sect{sec:beyond_2l} we state and prove ~\ref{thm:imperfect_decoding}, which is an analogue of Theorem~\ref{thm:semicircle} for the setting where $p = 2$ and $\ell$ exceeds half the distance of the code $C^\perp$.
In \sect{sec:classical} we discuss several existing classical and quantum optimization algorithms and compare their performance with DQI. We follow this in \sect{sec:wins} by showing how we construct an instance for which DQI, using belief propagation decoding, can achieve an approximate optimum that is very difficult to replicate using simulated annealing. In \sect{sec:limits} we derive information-theoretic upper bounds on the approximate optima achievable by DQI, which depend on whether one is considering classical or quantum decoders. In \sect{sec:folded} we generalize the max-LINSAT problem and the DQI algorithm to folded codes and extension fields in order to shed some light on DQI's potential applicability to the problem considered by Yamakawa and Zhandry in \cite{YZ22}. In \sect{sec:multivariate} we generalize the OPI problem to multivariate polynomials. Lastly, in \sect{sec:resources} we obtain concrete resource requirements (qubits, Clifford gates, and non-Clifford gates) to apply DQI, using the Berlekamp Massey decoder, to the OPI problem.

\section{Characterizing the Performance of DQI}\label{sec:OptNumSat}

DQI reduces the problem of approximating max-LINSAT to the problem of decoding the linear code $C^\perp$ over $\mathbb{F}_p$ whose parity check matrix is $B^T$. That is,
\begin{equation}
    C^\perp = \{ \mathbf{d} \in \mathbb{F}_p^m : B^T \mathbf{d} = \mathbf{0} \}.
\end{equation}
This decoding problem is to be solved in superposition, such as by a reversible implementation of any efficient classical decoding algorithm. If $C^\perp$ can be efficiently decoded out to $\ell$ errors then, given any appropriately normalized degree-$\ell$ polynomial $P$, DQI can efficiently produce the state
\begin{equation}
    \ket{P(f)} = \sum_{\mathbf{x} \in \mathbb{F}_p^n} P(f(\mathbf{x})) \ket{\mathbf{x}}.
\end{equation}
Upon measuring in the computational basis one obtains a given string $\mathbf{x}$ with probability $P(f(\mathbf{x}))^2$. One can choose $P$ to bias this distribution toward strings of large objective value. Larger $\ell$ allows this bias to be stronger, but requires the solution of a harder decoding problem. 

More quantitatively, in \sect{sec:semicircle}, we prove the following theorem.

\begin{theorem}
    \label{thm:semicircle}
        Given a prime $p$ and $B \in \mathbb{F}_p^{m \times n}$, let $f(\mathbf{x}) = \sum_{i=1}^m f_i (\sum_{j=1}^n B_{ij} x_j)$ be a max-LINSAT objective function. Suppose $|f_i^{-1}(+1)| = r$ for all $i=1,\ldots,m$ and some $r\in\{1,\dots,p-1\}$. Given a degree-$\ell$ polynomial $P$, let $\langle s \rangle$ be the expected number of satisfied constraints for the symbol string obtained upon measuring the corresponding DQI state $\ket{P(f)}$ in the computational basis. Suppose $2 \ell + 1 < d^\perp$ where $d^\perp$ is the minimum distance of the code $C^\perp = \{ \mathbf{d} \in \mathbb{F}_p^m : B^T \mathbf{d} = \mathbf{0} \}$, \textit{i.e.} the minimum Hamming weight of any nonzero codeword in $C^\perp$. In the limit $m \to \infty$, with $\ell/m$ fixed, the optimal choice of degree-$\ell$ polynomial $P$ to maximize $\langle s \rangle$ yields
        \begin{equation}
            \label{eq:semicircle_general}
            \frac{\langle s \rangle}{m} = \left( \sqrt{\frac{\ell}{m}\left(1 - \frac{r}{p}\right)} + \sqrt{\frac{r}{p}\left(1 - \frac{\ell}{m}\right)}\right)^2
        \end{equation}
        if $\frac{r}{p}\leq 1-\frac{\ell}{m}$ and $\frac{\langle s \rangle}{m}=1$ otherwise.
\end{theorem}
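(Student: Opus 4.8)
The plan is to reduce the computation of $\langle s\rangle$ to the extremal eigenvalue of a finite Jacobi (tridiagonal) matrix and then analyze that eigenvalue asymptotically. First I would record the elementary identity $s(\mathbf x)=(f(\mathbf x)+m)/2$, so that $\langle s\rangle=(m+\langle f\rangle)/2$ with $\langle f\rangle=\sum_{\mathbf x}f(\mathbf x)P(f(\mathbf x))^2/\sum_{\mathbf x}P(f(\mathbf x))^2$. The crucial observation is that both $P(f)^2$ and $f\,P(f)^2$ are polynomials in $f$ of degree at most $2\ell$ and $2\ell+1$ respectively, and that the hypothesis $2\ell+1<d^\perp$ lets me replace the empirical distribution of $f(\mathbf x)$ over uniform $\mathbf x\in\mathbb F_p^n$ by the law of an independent sum $S=\sum_{i=1}^m G_i$, where the $G_i$ are i.i.d.\ with $\Pr[G_i=+1]=r/p$ and $\Pr[G_i=-1]=1-r/p$. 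Granting this, $\langle f\rangle=\mathbb E[S\,P(S)^2]/\mathbb E[P(S)^2]$, and since $S=2K-m$ with $K$ the number of satisfied constraints in the independent model, in fact $\langle s\rangle=\mathbb E[K\,P(S)^2]/\mathbb E[P(S)^2]$. The problem becomes: maximize this Rayleigh quotient over polynomials $P$ of degree at most $\ell$.

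The moment-matching step is where I expect the real work, and it is exactly what forces the $d^\perp$ hypothesis. Expanding each $f_i$ in additive characters of $\mathbb F_p$, $f_i(a)=\sum_{c}\hat f_i(c)\,\omega^{ca}$ with $\omega=e^{2\pi i/p}$, a product $\prod_t f_{i_t}(\mathbf b_{i_t}\cdot\mathbf x)$ averaged over $\mathbf x$ collapses, via $\sum_{\mathbf x}\omega^{(\sum_t c_t\mathbf b_{i_t})\cdot\mathbf x}=p^n[\sum_t c_t\mathbf b_{i_t}=\mathbf 0]$, to a sum over character labels whose $\mathbb F_p$-linear combination of rows vanishes. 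For any product of total degree $j\le 2\ell+1<d^\perp$ the number of distinct rows involved is below $d^\perp$, so no nontrivial codeword of $C^\perp$ can appear; the constraint therefore forces the character labels to cancel \emph{within each distinct row index}, and the average factorizes over distinct rows. A short computation identifies each single-row factor with the moment $\mathbb E[G_i^\nu]$ of the $\pm1$ variable above, yielding $\mathbb E_{\mathbf x}[f(\mathbf x)^j]=\mathbb E[S^j]$ for all $j\le 2\ell+1$. (This is equivalent to saying that the $w_k$ in the dual expansion of eq.~(3) are orthonormal and that $f$ acts tridiagonally on them; I would present whichever version is cleaner.)

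Next I would invoke the standard correspondence between such Rayleigh quotients and orthogonal polynomials. For the inner product $\langle g,h\rangle=\mathbb E[g(K)h(K)]$ with $K\sim\mathrm{Bin}(m,r/p)$, multiplication by $K$ is symmetric and tridiagonal in the orthonormal-polynomial basis, so writing $P=\sum_{k\le\ell}\alpha_k p_k$ gives $\mathbb E[K P^2]=\boldsymbol\alpha^{\mathsf T}J_{\ell+1}\boldsymbol\alpha$ and $\mathbb E[P^2]=\|\boldsymbol\alpha\|^2$, whence $\max_P\langle s\rangle=\lambda_{\max}(J_{\ell+1})$, the largest eigenvalue of the top-left $(\ell+1)\times(\ell+1)$ block of the Jacobi matrix. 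These orthogonal polynomials are the Krawtchouk polynomials for $\mathrm{Bin}(m,r/p)$, whose recurrence coefficients are explicit: the diagonal is $a_n=(m-n)\tfrac rp+n(1-\tfrac rp)=m\tfrac rp+n(1-2\tfrac rp)$ and the off-diagonal is $b_n=\sqrt{(n+1)(m-n)\tfrac rp(1-\tfrac rp)}$.

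Finally I would extract the asymptotics. Setting $t=n/m\in[0,\ell/m]$, the entries vary slowly: $a(t)\sim m[\tfrac rp+t(1-2\tfrac rp)]$ and $b(t)\sim m\sqrt{t(1-t)\tfrac rp(1-\tfrac rp)}$. For a tridiagonal matrix with slowly varying entries the top of the spectrum is governed by the upper edge of the local symbol, so $\lambda_{\max}(J_{\ell+1})/m\to\max_{0\le t\le \ell/m}\big[a(t)/m+2b(t)/m\big]$. Parametrizing $t=\sin^2\phi$, $r/p=\sin^2\psi$ shows the unconstrained maximizer is $t^\ast=1-r/p$ with value $1$; hence if $r/p\ge 1-\ell/m$ the interior maximum is attained and $\langle s\rangle/m=1$, while if $r/p\le 1-\ell/m$ the maximum sits at the truncation endpoint $t=\ell/m$, giving $\langle s\rangle/m=\big(\sqrt{\tfrac\ell m(1-\tfrac rp)}+\sqrt{\tfrac rp(1-\tfrac\ell m)}\big)^2$ after simplification (the two cases agree at the boundary). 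The main obstacle is this last step: rigorously justifying that the edge eigenvalue of the truncated, slowly varying Jacobi matrix converges to this symbol maximum — in particular handling the hard-wall truncation at $t=\ell/m$ when $t^\ast$ lies outside $[0,\ell/m]$, and ruling out eigenvalues sticking out past the support edge. I would make this rigorous either through a discrete-WKB / turning-point analysis of the three-term recurrence or by appealing to known asymptotics for the zero distribution of orthogonal polynomials with varying recurrence coefficients (equivalently, the extreme-zero asymptotics of Krawtchouk polynomials), since $\lambda_{\max}(J_{\ell+1})$ is exactly the largest zero of $p_{\ell+1}$.
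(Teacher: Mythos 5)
Your proposal is correct and, after unwinding, produces exactly the same $(\ell+1)\times(\ell+1)$ tridiagonal matrix that the paper uses, but it derives that matrix by a genuinely different argument. The paper's Lemma \ref{thm:genw} computes $\langle s\rangle$ by a direct quantum calculation: it writes the satisfied-count observable in terms of clock operators, conjugates by the QFT into shift operators, and evaluates matrix elements between the Fourier components $|B^T\mathbf{y}\rangle$ of the DQI state, using $2\ell+1<d^\perp$ to force $\mathbf{y}_1=\mathbf{y}_2-a\mathbf{e}_i$ in the surviving terms. You instead note that measuring $|P(f)\rangle$ samples $\mathbf{x}$ with probability proportional to $P(f(\mathbf{x}))^2$, so that $\langle s\rangle$ is a ratio of expectations of polynomials in $f$ of degree at most $2\ell+1$, and you prove a classical moment-matching lemma (via the same character-sum/low-weight-codeword mechanism) showing these moments coincide with those of the independent binomial model; then standard orthogonal-polynomial theory for $\mathrm{Bin}(m,r/p)$ (Krawtchouk polynomials) turns the optimization into the top eigenvalue of a truncated Jacobi matrix. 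Your recurrence coefficients $a_n=(m-n)\tfrac{r}{p}+n(1-\tfrac{r}{p})$ and $b_n=\sqrt{(n+1)(m-n)\tfrac{r}{p}(1-\tfrac{r}{p})}$ match the paper's $A^{(m,\ell,d)}$ exactly under the affine relation $\langle s\rangle=\tfrac{mr}{p}+\tfrac{\sqrt{r(p-r)}}{p}\,\mathbf{w}^\dagger A^{(m,\ell,d)}\mathbf{w}$; indeed the paper's basis of elementary-symmetric-polynomial states is the orthogonal-polynomial basis in disguise. Your handling of the saturated regime $r/p>1-\ell/m$ (interior maximum of the symbol equals $1$, matched by the trivial bound $\langle s\rangle\le m$) is cleaner than the paper's, which appeals to a monotonicity of $\langle s\rangle_{\mathrm{opt}}$ in $r$ that is asserted without proof. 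The only place your plan leans on machinery you have not carried out is the asymptotic statement $\lambda_{\max}(J_{\ell+1})/m\to\max_{0\le t\le\ell/m}\left[a(t)/m+2b(t)/m\right]$; note that no discrete-WKB or Krawtchouk-zero asymptotics are needed, since the paper's elementary argument (Lemma \ref{lem:asymptotic_formula_for_max_eigenvalue}) closes this gap verbatim in your setting: a normalized test vector spread over the $\lceil\sqrt{\ell}\rceil$ coordinates nearest the maximizing $t$ gives the lower bound, and Gershgorin's circle theorem gives the matching upper bound, since each row sum is within $O(\sqrt{m})$ of the local symbol value — which also disposes of your concern about eigenvalues protruding past the spectral edge. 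With that substitution, your outline is a complete proof.
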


Theorem~\ref{thm:semicircle} assumes that $2\ell + 1 < d^\perp$, which is the same as requiring that $C^\perp$ can be in principle decoded from up to $\ell$ worst-case errors.  Further, if this decoding can be done \emph{efficiently}, then the DQI algorithm is also efficient.  In our analysis, we show how to relax these assumptions.  In particular, in  Theorem~\ref{thm:imperfect_decoding} we show that even when $2 \ell + 1 \geq d^\perp$ and it is not possible to decode $\ell$ worst-case errors, an efficient algorithm that succeeds with high probability over random errors can be used in the DQI algorithm to efficiently achieve a fraction of satisfied constraints close to the one given by \eqref{eq:semicircle_general}, at least for max-XORSAT problems $B \mathbf{x} \stackrel{\max}{=} \mathbf{v}$ with average-case $\mathbf{v}$.

For the balanced case $r \to p/2$, \eq{eq:semicircle_general} simplifies to
\begin{equation}
    \label{eq:dqifrac}
    \frac{\langle s \rangle}{m} =  \frac{1}{2} + \sqrt{\frac{\ell}{m} \left( 1 - \frac{\ell}{m} \right)},
\end{equation}
\textit{i.e.} the equation of a semicircle.

DQI reduces the problem of satisfying a large number of linear constraints to the problem of correcting a large number of errors in a linear code. Decoding linear codes is also an NP-hard problem in general \cite{BMT78}. So, one must ask whether this reduction is ever advantageous. We next present evidence that it can be.

\section{Optimal Polynomial Intersection}
\label{sec:OPI}

The problem which provides our clearest demonstration of the power of DQI is the OPI problem, as specified in Definition \ref{def:OPI} and illustrated in Fig. \ref{fig:curvefitting}. In this section, we explain how to apply DQI to OPI, and identify a parameter regime for OPI where DQI outperforms all classical algorithms known to us.

We first observe that OPI is equivalent to a special case of max-LINSAT. Let $q_0,\ldots,q_{n-1} \in \mathbb{F}_p$ be the coefficients in $Q$:
\begin{equation}
    Q(y) = \sum_{j=0}^{n-1} q_j \, y^j.
\end{equation}
Recall that a primitive element of a finite field is an element such that taking successive powers of it yields all nonzero elements of the field. Every finite field contains one or more primitive elements. Thus, we can choose $\gamma$ to be any primitive element of $\mathbb{F}_p$ and re-express the OPI objective function as
\begin{equation}
    f_\text{OPI}(Q) = |\{i \in \{0,1,\ldots,p-2\} : Q(\gamma^i) \in F_{\gamma^i}\}|.
\end{equation}
Next, let
\begin{equation}
    f_i(x) = \left\{ \begin{array}{ll} +1 & \textrm{if $x \in F_{\gamma^i}$} \\   -1 & \textrm{otherwise}
    \end{array}
    \right.
\end{equation}
for $i=0,\ldots,p-2$ and define the matrix $B$ by
\begin{equation}
    \label{eq:BRS}
    B_{ij} = \gamma^{i \times j} \quad i = 0,\ldots,p-2 \quad \quad j = 0,\ldots, n-1.
\end{equation}
Then the max-LINSAT objective function is $f(\mathbf{q})=\sum_{i=0}^{p-2} f_i(\mathbf{b}_i\cdot\mathbf{q})$ where $\mathbf{q}=(q_0,\ldots,q_{n-1})^T\in\mathbb{F}_p^n$ and $\mathbf{b}_i$ is the $i\th$ row of $B$. But $\mathbf{b}_i\cdot\mathbf{q} = Q(\gamma^i)$, so $f(\mathbf{q})=2\cdot f_\text{OPI}(Q)-(p-1)$ which means that the max-LINSAT objective function $f$ and the OPI objective function $f_\text{OPI}$ are equivalent. Thus we have re-expressed our OPI instance as an equivalent instance of max-LINSAT with $m=p-1$ constraints.

We will apply DQI to the case where
\begin{equation} 
    \label{eq:balf}
    |f_i^{-1}(+1)| = \lfloor p/2 \rfloor \quad \forall i=0,\ldots,p-2.
\end{equation}
By \eq{eq:balf}, in the limit of large $p$ we have $|f_i^{-1}(+1)|/p \to 1/2$ and $|f_i^{-1}(-1)|/p \to 1/2$ for all $i$. We call functions with this property ``balanced.'' 

When $B$ has the form \eq{eq:BRS}, then $C^\perp = \{\mathbf{d} \in \mathbb{F}_p^{p-1} : B^T \mathbf{d} = 0\}$ is a Reed-Solomon code with alphabet $\mathbb{F}_p$, block length $p-1$, dimension $p-n-1$, and distance $n + 1$. Note that our definition of $n$ is inherited from the parameters of the max-LINSAT instances that we start with and hence our notations for block length and dimension unfortunately do not conform to standard notations from coding theory.

Maximum likelihood syndrome decoding for Reed-Solomon codes can be solved in polynomial time out to half the distance of the code, \textit{e.g.} using the Berlekamp-Massey algorithm \cite{B15}. Consequently, in DQI we can take $\ell = \lfloor \frac{n+1}{2} \rfloor$. In \eq{eq:dqifrac} we can thus set the number of errors corrected $\ell \to \frac{n}{2}$ and the number of constraints $m \to p$, which shows that the asymptotic performance of DQI using Berlekamp-Massey is
\begin{equation}
\frac{\langle s \rangle_{\mathrm{DQI+BM}}}{p} = \frac{1}{2} + \sqrt{\frac{n}{2p} \left( 1 - \frac{n}{2p} \right)}.
\end{equation}
Here we have approximated $n+1$ by $n$ and $p-1$ by $p$ since this is an asymptotic formula anyway. For exact expressions at finite size see \sect{sec:semicircle}. The largest asymptotic fraction of satisfied clauses for OPI that we know how to obtain classically in polynomial time is
\begin{equation}
\frac{\langle s \rangle_{\mathrm{Prange}}}{m} = \frac{1}{2} + \frac{n}{2p},
\end{equation}
which is achieved by Prange's algorithm. These are plotted in Fig. \ref{fig:OPI_comparison}, where one sees that DQI+BP exceeds Prange's algorithm for all $n/p \in (0,1)$. (See \sect{sec:trunc} for a description of Prange's algorithm.)

Therefore, the Optimal Polynomial Intersection problem demonstrates the power of DQI. Assuming no polynomial-time classical algorithm for this problem is found that can match this fraction of satisfied constraints, this constitutes an example of an superpolynomial quantum speedup. It is noteworthy that our quantum algorithm is not based on a reduction to an Abelian Hidden Subgroup or Hidden Shift problem. The margin of victory for the approximation fraction (0.7179 vs. 0.55) is also satisfyingly large. Nevertheless, it is also of great interest to investigate whether such a quantum speedup can be obtained for more generic constraint satisfaction problems, with less underlying structure, as we do in the next section.

Before moving on to unstructured optimization problems, we make two remarks.

\begin{remark}[Relationship to the work of Yamakawa and Zhandry]\label{rem:YZ}  First, we note that the algorithm of Yamakawa and Zhandry \cite{YZ22}---which solves a version of OPI---does not apply in our setting. As discussed in \sect{sec:list-recovery}, the parameters of our OPI problem are such that solutions satisfying all constraints are statistically likely to exist but these exact optima seem to be computationally intractable to find using known classical algorithms. The quantum algorithm of Yamakawa and Zhandry, when it can be used, produces a solution satisfying all constraints. However, the quantum algorithm of Yamakawa and Zhandry has high requirements on the decodability of $C^\perp$. Specifically, for the ``balanced case'' in which $|f_i(+1)| \simeq |f_i(-1)|$ for all $i$, the requirement is that $C^\perp$ can be decoded from a $1/2$ fraction of random errors. For our OPI example, $C^\perp$ has rate $9/10$. Shannon's noisy-channel coding theorem implies that it is not possible to reliably decode $C^\perp$ in this setting. Thus, the quantum algorithm of Yamakawa and Zhandry is not applicable.
\end{remark}

\begin{remark}[Classical Complexity of OPI]\label{rem:OPIhard}
Proving rigorous classical hardness guarantees for OPI seems like a challenging problem. OPI, and OPI-like problems, have been proposed as cryptographically hard problems.  As discussed in \sect{sec:lattice}, a version of OPI in a different parameter regime was proposed as a hardness assumption for cryptographic applications by \cite{NP99}.  This conjecture was broken by \cite{BN00} using lattice attacks, but we demonstrate in \sect{sec:lattice} that these attacks do not apply in our parameter regime.  Later work \cite{NP06} proposed two updated hardness assumptions, which each would imply the hardness of a special case of OPI.\footnote{In more detail, the first problem assumed to be hard is related to bounded distance decoding for Reed-Solomon codes from random errors, which corresponds to OPI when $|f_i^{-1}(+1)| = 1$.  The second problem can be viewed as a generalization of OPI to \emph{randomly folded} Reed-Solomon codes with $|f_i^{-1}(+1)|$ larger than $1$; we show in Appendix~\ref{sec:folded} that DQI can apply to folded codes, but it does not yield useful attacks in the relevant parameter regime.}  These assumptions have yet to be broken to the best of our knowledge, and DQI does not seem to be an effective attack on them in the parameter regimes of interest.

There are other problems related to OPI that are known to be computationally hard, under standard assumptions. For example, the problem of maximum-likelihood decoding for Reed-Solomon Codes---which is the case of OPI when $|f_i^{-1}(+1)| = 1$ for all $i$---is known to be NP-hard~\cite{GV05,GGG18}.  List-decoding and bounded-distance decoding for Reed-Solomon codes to a large enough radius---also related to OPI when $|f_i^{-1}(+1)| = 1$---is known to be as hard as discrete log~\cite{CW07,CW08}. Theorem~\ref{thm:semicircle} does not provide strong performance guarantees for DQI applied to these problems. It would be very interesting to show that OPI (in a parameter regime $|f_i^{-1}(+1)| \propto p$ where Theorem~\ref{thm:semicircle} does give strong performance guarantees) is classically hard under standard cryptographic assumptions.
\end{remark}

\section{Random Sparse max-XORSAT}
\label{sec:sparsinstances}

In this section, we consider average-case instances from certain families of bounded degree max-$k$-XORSAT. In a max-$k$-XORSAT instance with degree bounded by $D$, each constraint contains at most $k$ variables and each variable is contained in at most $D$ constraints. In other words, the matrix $B \in \mathbb{F}_2^{m \times n}$ defining the instance has at most $k$ nonzero entries in any row and at most $D$ nonzero entries in any column. DQI reduces this to decoding the code $C^\perp$ whose parity check matrix is $B^T$. Codes with sparse parity check matrices are known as Low Density Parity Check (LDPC) codes. Randomly sampled LDPC codes are known to be correctable from a near-optimal number of random errors (asymptotically as $m$ grows)~\cite{Gal62}. Consequently, in the limit of large $m$ they can in principle be decoded up to a number of random errors that nearly saturates the information-theoretic limit dictated by the rate of the code. When $k$ and $D$ are very small, information-theoretically optimal decoding for random errors can be closely approached by polynomial-time decoders such as belief propagation \cite{RU01, MM09}. This makes sparse max-XORSAT a promising target for DQI.

In this section we focus on benchmarking DQI with standard belief propagation decoding (DQI+BP) against simulated annealing on max-$k$-XORSAT instances. We choose simulated annealing as our primary classical point of comparison because it is often very effective in practice on sparse constraint satisfaction problems and also because it serves as a representative example of local search heuristics. Local search heuristics are widely used in practice and include greedy algorithms, parallel tempering, TABU search, and many quantum-inspired optimization methods. As discussed in \sect{sec:local}, these should all be expected to have similar scaling behavior with $D$ on average-case max-$k$-XORSAT with bounded degree. Because of simulated annealing's simplicity, representativeness, and strong performance on average-case constraint satisfaction problems, beating simulated annealing on some class of instances is a good first test for any new classical or quantum optimization heuristic.

It is well-known that max-XORSAT instances become harder to approximate as the degree of the variables is increased \cite{H00, BM15}. Via DQI, a max-XORSAT instance of degree $D$ is reduced to a problem of decoding random errors for a code in which each parity check contains at most $D$ variables. As $D$ increases, with $m/n$ held fixed, the distance of the code and hence its performance under information-theoretically optimal decoding are not degraded at all. Thus, as $D$ grows, the fraction of constraints satisfied by DQI with information-theoretically optimal decoding would not degrade. In contrast, classical optimization algorithms based on local search yield satisfaction fractions converging toward $1/2$ in the limit $D \to \infty$ which is no better than random guessing. Thus as $D$ grows with $k/D$ fixed, DQI with information-theoretically optimal decoding will eventually surpass all classical local search heuristics. (See also Fig. \ref{fig:shannon_regions}.) However, for most ensembles of codes, the number of errors correctable by standard polynomial-time decoders such as belief propagation falls increasingly short of information-theoretic limits as the degree $D$ of the parity checks increases. Thus increasing $D$ generically makes the problem harder both for DQI+BP and for classical optimization heuristics.

Despite this challenge, we are able to find some unstructured families of sparse max-XORSAT instances for which DQI with standard belief propagation decoding finds solutions satisfying a fraction of constraints that is very difficult to replicate using simulated annealing. We do so by tuning the degree distribution of the instances. For example, in \sect{sec:wins}, we generate an example max-XORSAT instance from our specified degree distribution, which has $31,216$ variables and $50,000$ constraints, where each constraint contains an average of 53.973 variables and each variable is contained in an average of 86.451 constraints. We find that DQI with standard belief propagation decoding can find solutions in which the fraction of constraints satisfied is at least $0.831$. It does so by reducing this problem to a decoding problem that can be solved by our implementation of belief propagation in 8 seconds, excluding the time needed to load and parse the instance. In contrast, our implementation of simulated annealing requires approximately 73 hours to reach this, even when allowed five cores in parallel; restricted to 8 seconds of runtime on a single core it is only able to satisfy $0.764$.

Furthermore, as shown in Table \ref{tab:xorboard}, DQI+BP achieves higher satisfaction fraction than we are able to obtain in a comparable number of computational steps using any of the general-purpose classical optimization algorithms that we tried. However, unlike our OPI example, we do not put this forth as an example of quantum advantage. Rather, we are able to construct a tailored classical algorithm specialized to these instances which, within seven minutes of runtime, finds solutions where the fraction of constraints satisfied is $0.88$, thereby slightly beating DQI+BP.

\section{Relation to Other Work}
\label{sec:prior}

DQI is related to a family of quantum reductions that originate with the work of Aharonov, Ta-Shma, and Regev \cite{ATS03, AR05}. In this body of work the core idea is to use the Fourier convolution theorem to obtain reductions between nearest lattice vector problems and shortest lattice vector problems. In this section we summarize the other quantum algorithms using this idea and discuss their relationship to DQI.

In \cite{CLZ22}, Chen, Liu, and Zhandry introduce a novel and powerful intrinsically-quantum decoding method that they call filtering, which can in some cases solve quantum decoding problems for which the analogous classical decoding problems cannot be solved by any known efficient classical algorithm. By combining this decoding method with a Regev-style reduction, they are able to efficiently find approximate optima to certain shortest vector problems defined using the infinity norm for which no polynomial-time classical solution is known. Due to the use of the infinity norm, the problem solved by the quantum algorithm of \cite{CLZ22} is one of satisfying all constraints, rather than the more general problem of maximizing the number of constraints satisfied, as is addressed by DQI. Although conceptualized differently, and implemented over the space of codewords rather than the space of syndromes, the algorithm of \cite{CLZ22} is similar in spirit to DQI and can be regarded as a foundational prior work. It is particularly noteworthy that the apparent superpolynomial quantum speedup achieved in \cite{CLZ22}, through the use of quantum decoders, is obtained for a purely geometrical problem with no algebraic structure.

In \cite{YZ22}, Yamakawa and Zhandry define an oracle problem that they prove can be solved using polynomially many quantum queries but requires exponentially many classical queries. Their problem is essentially a class of instances of max-LINSAT over an exponentially large finite field $\mathbb{F}_{2^t}$, where the functions $f_1,\ldots,f_m$ are defined by random oracles and the matrix $B$ has algebraic structure. In \sect{sec:folded} we recount the exact definition of the Yamakawa-Zhandry problem and show how DQI can be extended to the Yamakawa-Zhandry problem. In the problem defined by Yamakawa and Zhandry, the truth tables for the constraints are defined by random oracles, and therefore, in the language of Theorem \ref{thm:semicircle}, $r/p = 1/2$. Furthermore, the Yamakawa-Zhandry instance is designed such that $C^\perp$ can be decoded with exponentially small failure probability if half of the symbols are corrupted by errors. Thus, in the language of Theorem \ref{thm:semicircle}, we can take $\ell/m = 1/2$. In this case, if we extrapolate \eq{eq:semicircle_general} to the Yamakawa-Zhandry regime, one obtains $\langle s \rangle/m = 1$, indicating that DQI should find a solution satisfying all constraints. The quantum algorithm given by Yamakawa and Zhandry for finding a solution satisfying all constraints is different from DQI, but similar in spirit.

Extrapolation of the semicircle law is necessary because the Yamakawa-Zhandry example does not satisfy $2 \ell + 1 < d^\perp$, and therefore the conditions of Theorem \ref{thm:semicircle} are not met. In Theorem \ref{thm:imperfect_decoding} we prove a variant of the semicircle law for $2 \ell + 1 > d^\perp$ for max-XORSAT with average case $\mathbf{v}$. The Yamakawa-Zhandry problem uses a random oracle, which is the direct generalization to $\mathbb{F}_q$ of average case $\mathbf{v}$. It seems likely that the proof of Theorem \ref{thm:imperfect_decoding} can be generalized to from $\mathbb{F}_2$ to $\mathbb{F}_q$ with arbitrary $q$ at the cost of additional technical complications but without the need for conceptual novelty. In this case the claim that DQI encompasses the Yamakawa-Zhandry upper bound on quantum query complexity could be made rigorous.

An important difference between our OPI results and the exponential quantum query complexity speedup of Yamakawa and Zhandry is that the latter depends on $\mathbb{F}_q$ being an exponentially large finite field. This allows Yamakawa and Zhandry to obtain an information-theoretic exponential classical query-complexity lower bound, thus making their separation rigorous. In contrast, our apparent superpolynomial speedup for OPI uses a finite field of polynomial size, in which case the truth tables are polynomial size and known explicitly. This regime is more relevant to real-world optimization problems. The price we pay is that the classical query complexity in this regime is necessarily polynomial. This means that the information-theoretic argument of \cite{YZ22} establishing an exponential improvement in query complexity does not apply in our setting. Instead, we argue heuristically for an superpolynomial quantum speedup, by comparing to known classical algorithms.

In \cite{DRT23}, Debris-Alazard, Remaud, and Tillich construct a quantum reduction from the approximate shortest codeword problem on a code $C$ to the bounded distance decoding problem on its dual $C^\perp$. The shortest codeword problem is closely related to max-XORSAT: the max-XORSAT problem is to find the codeword in $C = \{B \mathbf{x} \,|\, \mathbf{x} \in \mathbb{F}_2^n\}$ that has smallest Hamming distance from a given bit string $\mathbf{v}$, whereas the shortest codeword problem is to find the codeword in $C\setminus\{\mathbf{0}\}$ with smallest Hamming weight. Roughly speaking, then, the shortest codeword problem is the special case of max-XORSAT where $\mathbf{v}=\mathbf{0}$ except that the trivial solution $\mathbf{0}$ is excluded. Although conceptualized quite differently, the reduction of \cite{DRT23} is similar to the $p=2$ special case of DQI in that it is achieved via a quantum Hadamard transform.

Another interesting connection between DQI and prior work appears in the context of planted inference problems, such as planted $k$XOR, where the task is to recover a secret good assignment planted in an otherwise random $k$XOR instance. It has been recently shown that quantum algorithms can achieve polynomial speedups for planted $k$XOR by efficiently constructing a \emph{guiding state} that has improved overlap with the planted solution \cite{schmidhuber2025quartic}. Curiously, the $\ell\th$ order guiding state studied in \cite{schmidhuber2025quartic} seems related to the $\ell\th$ order DQI state presented here. The key conceptual difference is that, to obtain an assignment that satisfies a large number of constraints, the DQI state in this work is measured in the computational basis, whereas the guiding state in \cite{schmidhuber2025quartic} is measured in the eigenbasis of the so-called Kikuchi matrix, and subsequently rounded. It is interesting that for the large values of $\ell$ studied in this work, the (Fourier-transformed) Kikuchi matrix asymptotically approaches a diagonal matrix such that its eigenbasis is close to the computational basis. 

Next, we would like to highlight \cite{chailloux2024quantum} in which Chailloux and Tillich introduce a method for decoding superpositions of errors, which they use in the context of a Regev-style reduction. Their method to solve the quantum decoding problem is based on a technique called \emph{unambiguous state discrimination} (USD). Roughly speaking, USD is a coherent quantum rotation which allows us to convert bit-flip error into erasure error. If the resulting erasure error rate is small enough, one can decode perfectly using Gaussian elimination. Interestingly, this is efficient regardless of the sparsity of the code. Curiously, however, Chailloux and Tillich find that the resulting performance of this combination of quantum algorithms yields performance for the shortest codeword problem that exactly matches that of Prange's algorithm \cite{P62}.

Lastly, we note that since we posted the first version of this manuscript as an arXiv preprint in August, 2024 there have already been some works that nicely build upon DQI. In particular, we would like to highlight \cite{CT24} in which Chailloux and Tillich observe that for $p > 2$ the distribution over errors on a given corrupted symbol in the decoding problem faced by DQI is non-uniform. Information-theoretically this is more advantageous than uniform errors. Furthermore, they show that this advantage can be exploited by polynomial-time decoders, thereby improving the approximation ratio efficiently achievable by DQI. Additionally, in \cite{PBH25}, a concrete resource analysis is carried out for DQI in which the decoding of $C^\perp$ is carried out using a reversible circuit implementing information set decoding.

\section{Decoded Quantum Interferometry}
\label{sec:DQI}

Here we describe in full detail the Decoded Quantum Interferometry algorithm, illustrated in Fig. \ref{fig:dqi_algo}. We start with DQI for max-XORSAT, which is the special case of max-LINSAT with $\mathbb{F} = \mathbb{F}_2$. Then we describe DQI with $\mathbb{F} = \mathbb{F}_p$ for any prime $p$. In each case, our explanation consists of a discussion of the quantum state we intend to create followed by a description of the quantum algorithm used to create it. The generalization of DQI to extension fields is given in \sect{sec:folded}.

\begin{figure}
    \centering
    \begin{tikzpicture}[domain=0:25, scale=.7]
    \draw[gray] (17.5, 17.05) rectangle ++(6, 9.35);
    \node[gray, anchor=north west] at (17.75, 26.25) {Legend:};
    \draw[thick, brown, fill=brown!10, rounded corners] (18, 23.5) rectangle ++(5, 1.75) node[pos=.5, align=center] {quantum register \\ allocations};
    \draw[thick, violet, fill=violet!10, rounded corners] (18, 21.5) rectangle ++(5, 1.75) node[pos=.5, align=center] {unitary \\ operations};
    \draw[thick, teal, fill=teal!10] (18, 19.5) rectangle ++(5, 1.75) node[pos=.5, align=center] {pure quantum \\ states};
    \draw[thick, darkgray, fill=darkgray!10, rounded corners] (18, 17.5) rectangle ++(5, 1.75) node[pos=.5, align=center] {non-unitary \\ operations};
    
    \draw[->, gray, thick] (1.5, 6) -- node[right] {time} (1.5, 2);
    
    \draw[thick, brown!30] (2.5, 14.5) -- (2.5, 26);
    \draw[thick, brown, fill=brown!10, rounded corners] (0, 24.75) rectangle ++(5, 1.75) node[pos=.5, align=center] {weight register \\ $\lceil\log_2\ell\rceil$ qubits};
    
    \draw[thick, brown!30] (10.25, 5) -- (10.25, 22);
    \draw[thick, brown, fill=brown!10, rounded corners] (5.5, 21) rectangle ++(9.5, 1.75) node[pos=.5, align=center] {error register \\ $m$ qubits};
    
    \draw[thick, brown!30] (19.5, 0) -- (19.5, 12);
    \draw[thick, brown, fill=brown!10, rounded corners] (15.5, 11.25) rectangle ++(8, 1.75) node[pos=0.5, align=center] {syndrome register \\ $n$ qubits};
    
    \draw[thick, violet, fill=violet!10, rounded corners] (0, 22.5) rectangle ++(5, 1.75) node[pos=.5, align=center] {embed $w_k$ \\ into amplitudes};
    \draw[thick, teal, fill=teal!10] (0, 21) rectangle ++(5, 1) node[pos=.5, align=center] {$\sum_k w_k |k\rangle$};
    
    \draw[thick, violet, fill=violet!10, rounded corners] (0, 19.5) rectangle ++(15, 1) node[pos=.5, align=center] {prepare dependent Dicke state};
    \draw[thick, teal, fill=teal!10] (0, 18) rectangle ++(15, 1) node[pos=.5, align=center] {$\sum_k w_k {m \choose k}^{-1/2} \sum_{\mathbf{y}} |k\rangle  |\mathbf{y}\rangle$};
    
    \draw[thick, violet, fill=violet!10, rounded corners] (0, 15.75) rectangle ++(15, 1.75) node[pos=.5, align=center] {calculate Hamming weight \\ to uncompute weight register};
    \draw[thick, teal, fill=teal!10] (5.5, 14.25) rectangle ++(9.5, 1) node[pos=.5, align=center] {$\sum_k w_k {m \choose k}^{-1/2} \sum_{\mathbf{y}} |\mathbf{y}\rangle$};
    
    \draw[thick, violet, fill=violet!10, rounded corners] (5.5, 12.75) rectangle ++(9.5, 1) node[pos=.5, align=center] {$Z^{v_1}\otimes\ldots\otimes Z^{v_m}$};
    \draw[thick, teal, fill=teal!10] (5.5, 11.25) rectangle ++(9.5, 1) node[pos=.5, align=center] {$\sum_k w_k {m \choose k}^{-1/2} \sum_{\mathbf{y}} (-1)^{\mathbf{v}\cdot\mathbf{y}} |\mathbf{y}\rangle$};
    
    \draw[thick, violet, fill=violet!10, rounded corners] (5.5, 9.75) rectangle ++(18, 1) node[pos=.5, align=center] {multiply by matrix $B^T$};
    \draw[thick, teal, fill=teal!10] (5.5, 8.25) rectangle ++(18, 1) node[pos=.5, align=center] {$\sum_k w_k {m \choose k}^{-1/2} \sum_{\mathbf{y}} (-1)^{\mathbf{v}\cdot\mathbf{y}} |\mathbf{y}\rangle|B^T\mathbf{y}\rangle$};
    
    \draw[thick, violet, fill=violet!10, rounded corners] (5.5, 6) rectangle ++(18, 1.75) node[pos=.5, align=center] {solve bounded distance decoding \\ problem to uncompute error register};
    \draw[thick, teal, fill=teal!10] (15.5, 4.5) rectangle ++(8, 1) node[pos=.5, align=center] {$\sum_k w_k {m \choose k}^{-1/2} \sum_{\mathbf{y}} (-1)^{\mathbf{v}\cdot\mathbf{y}} |B^T\mathbf{y}\rangle$};
    
    \draw[thick, violet, fill=violet!10, rounded corners] (15.5, 3) rectangle ++(8, 1) node[pos=.5, align=center] {$H\otimes\ldots\otimes H$};
    \draw[thick, teal, fill=teal!10] (15.5, 1.5) rectangle ++(8, 1) node[pos=.5, align=center] {$\sum_{\mathbf{x} \in \mathbb{F}_2^n} P(f(\mathbf{x})) |\mathbf{x}\rangle$};
    
    \draw[thick, darkgray, fill=darkgray!10, rounded corners] (0, 14.25) rectangle ++(5, 1) node[pos=.5, align=center] {discard $|0\rangle$};
    \draw[thick, darkgray, fill=darkgray!10, rounded corners] (5.5, 4.5) rectangle ++(9.5, 1) node[pos=.5, align=center] {postselect on $|0\rangle$};
    \draw[thick, darkgray, fill=darkgray!10, rounded corners] (15.5, 0) rectangle ++(8, 1) node[pos=.5, align=center] {measure};
    
    \end{tikzpicture}
    
    \caption{Decoded Quantum Interferometry over $\mathbb{F}_2$. The algorithm begins with a computation, on a classical computer, of the principal eigenvector $(w_0,\ldots,w_\ell)^T$ of a certain matrix $A^{(m,\ell,0)}$. $P(f)$ is a degree-$\ell$ polynomial that enhances the probability of sampling $\mathbf{x}\in\mathbb{F}_2^n$ with high value of the objective $f$. Index $k$ ranges over $\{0,\ldots,\ell\}$ and $\mathbf{y}$ over the set $\{\mathbf{y} \in \mathbb{F}_2^m:|\mathbf{y}| = k\}$ of $m$-bit strings of Hamming weight $k$. Weight register qubits may be reused for the syndrome register. If $2 \ell + 1 < d^\perp$, the postselection succeeds with probability $\geq 1 - \varepsilon_\ell$, where $\varepsilon_\ell$ is the decoding failure rate on random weight-$\ell$ errors.}
    \label{fig:dqi_algo}
    \end{figure}

In this section, we assume throughout that $2 \ell + 1 < d^\perp$. In some cases we find that it is possible to decode more than $d^\perp/2$ errors with high probability, in which case we can use $\ell$ exceeding this bound. This contributes some technical complications to the description and analysis of the DQI algorithm, which we defer to \sect{sec:beyond_2l}.

\subsection{DQI for max-XORSAT}
\label{sec:genl}

\subsubsection{DQI Quantum State for max-XORSAT}
\label{sec:genl_dqi_state}

Recall that the objective function for max-XORSAT can be written in the form $f = f_1 + \ldots + f_m$, where $f_i$ takes the value $+1$ if the $i\th$ constraint is satisfied and $-1$ otherwise. More concretely, the function defined by
\begin{eqnarray}
    f(\mathbf{x}) & = & \sum_{i=1}^m f_i(\mathbf{b}_i \cdot \mathbf{x}) \label{eq:max1} \\
    f_i(\mathbf{b}_i\cdot\mathbf{x}) & = & (-1)^{v_i} (-1)^{\mathbf{b}_i \cdot \mathbf{x}}, \label{eq:max2}
\end{eqnarray}
expresses the number of constraints satisfied minus the the number of constraints unsatisfied by the bit string $\mathbf{x}$.

Given an objective of the form $f = f_1 + \ldots + f_m$ and a degree-$\ell$ univariate polynomial
\begin{equation}
    P(f) = \sum_{k = 0}^\ell \alpha_k f^k,
\end{equation}
we can regard $P(f)$ as a degree-$\ell$ multivariate polynomial in $f_1,\ldots,f_m$. Since $f_1,\ldots,f_m$ are $\pm1$-valued we have $f_i^2 = 1$ for all $i$. Consequently, we can express $P(f)$ as
\begin{equation}
    \label{eq:Psym}
    P(f) = \sum_{k = 0}^\ell u_k P^{(k)}(f_1,\ldots,f_m),
\end{equation}
where $P^{(k)}(f_1,\ldots,f_m)$ is the degree-$k$ elementary symmetric polynomial, \textit{i.e.} the unweighted sum of all $\binom{m}{k}$ products of $k$ distinct factors from $f_1,\ldots,f_m$.

For simplicity, we will henceforth always assume that the overall normalization of $P$ has been chosen so that the state $\ket{P(f)} = \sum_{\mathbf{x} \in \mathbb{F}_2^n} P(f(\mathbf{x})) \ket{\mathbf{x}}$ has unit norm. In analogy with equation \eqref{eq:Psym}, we will write the DQI state $\ket{P(f)}$ as a linear combination of $\ket{P^{(0)}},\ldots,\ket{P^{(\ell)}}$, where
\begin{equation}
    \label{eq:definition_of_ket_for_elementary_symmetric_polynomial}
    \ket{P^{(k)}} := \frac{1}{\sqrt{2^n \binom{m}{k}}} \sum_{\mathbf{x} \in \mathbb{F}_2^n} P^{(k)}(f_1(\mathbf{b}_1 \cdot \mathbf{x}),\ldots,f_m(\mathbf{b}_m \cdot \mathbf{x})) \ket{\mathbf{x}}.
\end{equation}
Thus, by \eq{eq:max2},
\begin{eqnarray}
    P^{(k)}(f_1,\ldots,f_m) & = & \sum_{\substack{i_1,\ldots,i_k \\ \mathrm{distinct}}} f_{i_1} \times \ldots \times f_{i_k} \\
    & = & \sum_{\substack{i_1,\ldots,i_k \\ \mathrm{distinct}}} (-1)^{v_{i_1} + \ldots + v_{i_k}} (-1)^{(\mathbf{b}_{i_1} + \ldots + \mathbf{b}_{i_k}) \cdot \mathbf{x}} \\
    & = & \sum_{\substack{\mathbf{y} \in \mathbb{F}_2^m \\ | \mathbf{y}| = k}} (-1)^{\mathbf{v} \cdot \mathbf{y}} (-1)^{(B^T \mathbf{y}) \cdot \mathbf{x}},
\end{eqnarray}
where $|\mathbf{y}|$ indicates the Hamming weight of $\mathbf{y}$. From this we see that the Hadamard transform of $\ket{P^{(k)}}$ is
\begin{equation}
    \ket{\widetilde{P}^{(k)}} := H^{\otimes n} \ket{P^{(k)}} = \frac{1}{\sqrt{\binom{m}{k}}} \sum_{\substack{\mathbf{y} \in \mathbb{F}_2^m \\ | \mathbf{y}| = k}} (-1)^{\mathbf{v} \cdot \mathbf{y}} \ket{B^T\mathbf{y}}. \label{eq:had2}
\end{equation}
For the set of $\mathbf{y} \in \mathbb{F}_2^n$ with $|\mathbf{y}| < d^\perp/2$ the corresponding bit strings $B^T \mathbf{y}$ are all distinct. Therefore, $\ket{\widetilde{P}^{(0)}}, \ldots, \ket{\widetilde{P}^{(\ell)}}$ form an orthonormal set provided $\ell < d^\perp/2$, and so do $\ket{P^{(0)}}, \ldots, \ket{P^{(\ell)}}$. In this case, the DQI state $\ket{P(f)} = \sum_{\mathbf{x} \in \mathbb{F}_2^n} P(f(\mathbf{x})) \ket{\mathbf{x}}$ can be expressed as
\begin{equation}
    \ket{P(f)} = \sum_{k=0}^\ell w_k \ket{P^{(k)}},
\end{equation}
where
\begin{equation}
    w_k = u_k \sqrt{2^n \binom{m}{k}},
\end{equation}
and $\langle P(f)|P(f)\rangle = \|\mathbf{w}\|^2$ where $\mathbf{w}:=(w_0,\ldots,w_\ell)^T$.

\subsubsection{DQI Algorithm for max-XORSAT}
\label{sec:genl_dqi_algo}

With these facts in hand, we now present the DQI algorithm for max-XORSAT. The algorithm utilizes three quantum registers: a \textit{weight register} comprising $\lceil\log_2\ell\rceil$ qubits, an \textit{error register} with $m$ qubits, and a \textit{syndrome register} with $n$ qubits.

As the first step in DQI we initialize the weight register in the state
\begin{equation}
    \sum_{k=0}^\ell w_k \ket{k}.
\end{equation}
The choice of $\mathbf{w} \in \mathbb{R}^{\ell+1}$ that maximizes the expected number of satisfied constraints can be obtained by solving for the principal eigenvector of an $(\ell+1) \times (\ell+1)$ matrix, as described in \sect{sec:semicircle}. Given $\mathbf{w}$, this state preparation can be done efficiently because it is a state of only $\lceil\log_2 \ell\rceil$ qubits. One way to do this is to use the method from \cite{LKS18} that prepares an arbitrary superposition over $\ell$ computational basis states using $\widetilde{O}(\ell)$ quantum gates.

Next, conditioned on the value $k$ in the weight register, we prepare the error register into the uniform superposition over all bit strings of Hamming weight $k$
\begin{equation}
    \to \sum_{k=0}^\ell w_k \ket{k} \frac{1}{\sqrt{\binom{m}{k}}} \sum_{\substack{\mathbf{y} \in \mathbb{F}_2^m \\ |\mathbf{y}| = k}} \ket{\mathbf{y}}.
\end{equation}
Efficient methods for preparing such superpositions using $O(\ell m)$ quantum gates have been devised due to applications in physics, where they are known as Dicke states \cite{BE22, WT24}. Next, we uncompute $\ket{k}$ which can be done easily since $k$ is simply the Hamming weight of $\mathbf{y}$. After doing so and discarding the weight register, we are left with the state
\begin{equation}
    \to \sum_{k=0}^\ell w_k \frac{1}{\sqrt{\binom{m}{k}}} \sum_{\substack{\mathbf{y} \in \mathbb{F}_2^m \\ |\mathbf{y}| = k}} \ket{\mathbf{y}}.
\end{equation}
Next, we apply the phases $(-1)^{\mathbf{v} \cdot \mathbf{y}}$ by performing a Pauli-$Z_i$ on each qubit for which $v_i = 1$, at the cost of $O(m)$ quantum gates
\begin{equation}
    \to \sum_{k=0}^\ell w_k \frac{1}{\sqrt{\binom{m}{k}}} \sum_{|\mathbf{y}| = k} (-1)^{\mathbf{v} \cdot \mathbf{y}} \ket{\mathbf{y}}.
\end{equation}
Then, we reversibly compute $B^T \mathbf{y}$ into the syndrome register by standard matrix-vector multiplication over $\mathbb{F}_2$ at the cost $O(mn)$ quantum gates
\begin{equation}
    \to \sum_{k=0}^\ell w_k \frac{1}{\sqrt{\binom{m}{k}}} \sum_{|\mathbf{y}| = k} (-1)^{\mathbf{v} \cdot \mathbf{y}} \ket{\mathbf{y}} \ket{B^T \mathbf{y}}.
\end{equation}
From this, to obtain $\ket{\widetilde{P}^{(k)}}$ we need to use the content $\mathbf{s} = B^T \mathbf{y}$ of the syndrome register to reversibly uncompute the content $\mathbf{y}$ of the  error register, in superposition. This is the hardest part. We next discuss how to compute $\mathbf{y}$ from $\mathbf{s}$.

Recall that $B \in \mathbb{F}_2^{m \times n}$ with $m > n$. Thus, given $\mathbf{s}$, solving $\mathbf{s} = B^T \mathbf{y}$, for $\mathbf{y}$ is an underdetermined linear algebra problem over $\mathbb{F}_2$. It is only the constraint $|\mathbf{y}| \leq \ell$ that renders the solution unique, provided $\ell$ is not too large. This linear algebra problem with a Hamming weight constraint is recognizable as syndrome decoding. The kernel of $B^T$ defines an error correcting code $C^\perp$. The string $\mathbf{s}$ is interpreted as the error syndrome, and $\mathbf{y}$ is interpreted as the error. If $\ell$ is less than half the minimum distance of $C^\perp$ then the problem of inferring the error from the syndrome has a unique solution. When this solution can be found efficiently, we can efficiently uncompute the content of the error register, which can then be discarded. This leaves
\begin{equation}
    \to \sum_{k=0}^\ell w_k \frac{1}{\sqrt{\binom{m}{k}}} \sum_{|\mathbf{y}| = k} (-1)^{\mathbf{v} \cdot \mathbf{y}} \ket{B^T \mathbf{y}}
\end{equation}
in the syndrome register which we recognize as
\begin{equation}
    = \sum_{k=0}^\ell w_k \ket{\widetilde{P}^{(k)}}.
\end{equation}
By taking the Hadamard transform we then obtain
\begin{equation}
    \to \sum_{k=0}^\ell w_k \ket{P^{(k)}}
\end{equation}
which is the desired DQI state $\ket{P(f)}$.

\subsection{DQI for General max-LINSAT}
\label{sec:genp}

\subsubsection{DQI Quantum State for General max-LINSAT}
\label{sec:genp_dqi_state}

Recall that the max-LINSAT objective takes the form $f(\mathbf{x}) = \sum_{i=1}^m f_i(\mathbf{b}_i \cdot \mathbf{x})$ with $f_i:\mathbb{F}_p \to \{+1,-1\}$. In order to keep the presentation as simple as possible, we restrict attention to the situation where the preimages $F_i:=f_i^{-1}(+1)$ for $i=1,\ldots,m$ have the same cardinality $r:=|F_i|\in\{1,\ldots,p-1\}$. We will find it convenient to work in terms of $g_i$ which we define as $f_i$ shifted and rescaled so that its Fourier transform
\begin{equation}
\label{eq:tildeg}
\tilde{g}_i(y) = \frac{1}{\sqrt{p}} \sum_{x \in \mathbb{F}_p} \omega_p^{y x} g_i(x)
\end{equation}
vanishes at $y=0$ and is normalized, \textit{i.e.}  $\sum_{x\in\mathbb{F}_p}|g_i(x)|^2=\sum_{y\in\mathbb{F}_p}|\tilde{g}_i(y)|^2=1$. (Here and throughout, $\omega_p = e^{i 2 \pi /p}$.) In other words, rather than using $f_i$ directly, we will use
\begin{equation}
    \label{eq:gf}
    g_i(x) := \frac{f_i(x) - \overline{f}}{\varphi}
\end{equation}
where $\overline{f} := \frac{1}{p}\sum_{x\in\mathbb{F}_p} f_i(x)$ and $\varphi := \left(\sum_{y\in\mathbb{F}_p}|f_i(y)-\overline{f}|^2\right)^{1/2}$. Explicitly, one finds
\begin{eqnarray}
\bar{f} & = & \frac{2 r}{p} - 1 \label{eq:fbar}\\
\varphi & = & \sqrt{4 r \left( 1 - \frac{r}{p} \right)}. \label{eq:varphi}
\end{eqnarray}

By \eq{eq:gf}, the sums
\begin{eqnarray}
f(\mathbf{x}) & = & f_1(\mathbf{b}_1 \cdot \mathbf{x}) + \ldots + f_m(\mathbf{b}_m \cdot \mathbf{x}) \\
g(\mathbf{x}) & = & g_1(\mathbf{b}_1 \cdot \mathbf{x}) + \ldots + g_m(\mathbf{b}_m \cdot \mathbf{x})
\end{eqnarray}
are related according to
\begin{equation}
f(\mathbf{x}) = g(\mathbf{x}) \varphi + m \overline{f}.
\end{equation}
We transform the polynomial $P(f(\mathbf{x}))$ into an equivalent polynomial $Q(g(\mathbf{x}))$ of the same degree by substituting in this relation and absorbing the relevant powers of $\varphi$ and $m \overline{f}$ into the coefficients. That is, $Q(g(\mathbf{x})) = P(f(\mathbf{x}))$. As shown in Appendix \ref{app:esp}, $Q(g(\mathbf{x}))$ can always be expressed as a linear combination of elementary symmetric polynomials. That is,
\begin{equation}
    \label{eq:Qsym}
    Q(g(\mathbf{x})) := \sum_{k=0}^\ell u_k P^{(k)}(g_1(\mathbf{b}_1 \cdot \mathbf{x}), \dots, g_m(\mathbf{b}_m \cdot \mathbf{x})).
\end{equation}

As in the case of DQI for max-XORSAT above, we write the DQI state
\begin{equation}
    \ket{P(f)} = \sum_{\mathbf{x} \in \mathbb{F}_p^n} P(f(\mathbf{x})) \ket{\mathbf{x}} = \sum_{\mathbf{x} \in \mathbb{F}_p^n} Q(g(\mathbf{x})) \ket{\mathbf{x}} = \ket{Q(g)}
\end{equation}
as a linear combination of $\ket{P^{(0)}},\ldots,\ket{P^{(\ell)}}$ defined as
\begin{equation}
    \label{eq:normalized}
    \ket{P^{(k)}} := \frac{1}{\sqrt{p^{n-k} \binom{m}{k}}} \sum_{\mathbf{x} \in \mathbb{F}_p^n} P^{(k)}(g_1(\mathbf{b}_1 \cdot \mathbf{x}),\ldots,g_m(\mathbf{b}_m \cdot \mathbf{x})) \ket{\mathbf{x}}.
\end{equation}
By definition,
\begin{equation}
    \label{eq:Pk}
    P^{(k)}(g_1(\mathbf{b}_1 \cdot \mathbf{x}),\ldots,g_m(\mathbf{b}_m \cdot \mathbf{x})) = \sum_{\substack{i_1,\ldots,i_k \\ \mathrm{distinct}}} \prod_{i \in \{i_1,\ldots,i_k\}} g_i(\mathbf{b}_i \cdot \mathbf{x}).
\end{equation}
Substituting \eq{eq:tildeg} into \eq{eq:Pk} yields
\begin{eqnarray}
    P^{(k)}(g_1(\mathbf{b}_1 \cdot \mathbf{x}),\ldots,g_m(\mathbf{b}_m \cdot \mathbf{x})) & = & \sum_{\substack{i_1,\ldots,i_k \\ \mathrm{distinct}}} \prod_{i \in \{i_1,\ldots,i_k\}} \left( \frac{1}{\sqrt{p}} \sum_{y_i \in \mathbb{F}_p} \omega_p^{ -y_i \mathbf{b}_i \cdot \mathbf{x}} \ \tilde{g}_i(y_i) \right) \\
    & = &  \sum_{\substack{\mathbf{y} \in \mathbb{F}_p^m \\ |\mathbf{y}| = k }} \frac{1}{\sqrt{p^k}} \  \omega_p^{-(B^T \mathbf{y}) \cdot \mathbf{x}} \ \prod_{\substack{i=1 \\ y_i \neq 0}}^m \tilde{g}_i(y_i).
\end{eqnarray}
From this we see that the Quantum Fourier Transform of $\ket{P^{(k)}}$ is
\begin{equation}
    \label{eq:FTgenp}
    |\widetilde{P}^{(k)}\rangle := F^{\otimes n}|P^{(k)}\rangle = \frac{1}{\sqrt{\binom{m}{k}}} \sum_{\substack{\mathbf{y}\in\mathbb{F}_p^m\\|\mathbf{y}|=k}} \left( \prod_{\substack{i=1\\y_i \neq 0}}^m \tilde{g}_i(y_i) \right)
    |B^T \mathbf{y}\rangle
\end{equation}
where $F_{ij}=\omega_p^{ij}/\sqrt{p}$ with $i,j=0,\dots,p-1$. As in the case of max-XORSAT earlier, if $|\mathbf{y}| < d^\perp/2$, then $B^T \mathbf{y}$ are all distinct. Therefore, if $\ell < d^\perp/2$ then $\ket{\widetilde{P}^{(0)}},\ldots,\ket{\widetilde{P}^{(\ell)}}$ form an orthonormal set and so do $\ket{P^{(0)}},\ldots,\ket{P^{(\ell)}}$. Thus, the DQI state $\ket{P(f)} = \sum_{\mathbf{x} \in \mathbb{F}_p^n} P(f(\mathbf{x})) \ket{\mathbf{x}}$ can be expressed as
\begin{equation}
    \ket{P(f)} = \sum_{k=0}^\ell w_k \ket{P^{(k)}}
\end{equation}
where
\begin{equation}
    w_k = u_k\sqrt{p^{n-k} \binom{m}{k}},
\end{equation}
and $\langle P(f)|P(f)\rangle = \|\mathbf{w}\|^2$.

\subsubsection{DQI Algorithm for General max-LINSAT}
\label{sec:genp_dqi_algo}

DQI for general max-LINSAT employs three quantum registers: a \textit{weight register} comprising $\lceil\log_2\ell\rceil$ qubits, an \textit{error register} with $m\lceil\log_2 p\rceil$ qubits, and a \textit{syndrome register} with $n\lceil\log_2 p\rceil$ qubits. We will consider the error and syndrome registers as consisting of $m$ and $n$ subregisters, respectively, where each subregister consists of $\lceil\log_2 p\rceil$ qubits\footnote{One can also regard each subregister as a logical $p$-level quantum system, \textit{i.e.} a qudit, encoded in $\lceil\log_2 p\rceil$ qubits.}. We will also regard the ordered collection of least significant\footnote{Least significant qubit is the one which is in the state $\ket{1}$ when the subregister stores the value $1\in\mathbb{F}_p$.} qubits from the $m$ subregisters of the error register as the fourth register. We will refer to this $m$ qubit register as the \textit{mask register}.

The task of DQI is to produce the state $\ket{Q(g)}$, which by construction is equal to $\ket{P(f)}$. We proceed as follows. First, as in the $p=2$ case, we initialize the weight register in the normalized state
\begin{equation}
    \label{eq:start}
    \sum_{k=0}^\ell w_k \ket{k}.
\end{equation}
Next, conditioned on the value $k$ in the weight register, we prepare the mask register in the corresponding Dicke state
\begin{equation}
    \to \sum_{k=0}^\ell w_k \ket{k} \frac{1}{\sqrt{\binom{m}{k}}} \sum_{\substack{\boldsymbol{\mu} \in \{0,1\}^m \\ |\boldsymbol{\mu}| = k}} \ket{\mathbf{\boldsymbol{\mu}}}.
\end{equation}
Then, we uncompute $|k\rangle$ using the fact that $k=|\boldsymbol{\mu}|$, which yields
\begin{equation}
    \to \sum_{k=0}^\ell w_k \frac{1}{\sqrt{\binom{m}{k}}} \sum_{\substack{\boldsymbol{\mu} \in \{0,1\}^m \\ |\boldsymbol{\mu}| = k}} \ket{\mathbf{\boldsymbol{\mu}}}.
\end{equation}
Subsequently, we will turn the content of the error register from a superposition of \textit{bit} strings $\boldsymbol{\mu}\in\{0,1\}^m$ of Hamming weight $k$ into a superposition of \textit{symbol} strings $\mathbf{y}\in\mathbb{F}_p^m$ of Hamming weight $k$. Let $G_i$ denote an operator acting on $\lceil\log_2 p\rceil$ qubits such that
\begin{equation}\label{eq:g_i}
    G_i\ket{0} = \ket{0},\quad\quad G_i\ket{1} = \sum_{y\in\mathbb{F}_p} \tilde{g}_i(y) \ket{y}.
\end{equation}
By reparametrizing from $f_1,\ldots,f_m$ to $g_1,\ldots,g_m$ we ensured that $\tilde{g}_i(0) = 0$ for all $i$, so
\begin{equation}\label{eq:g_i_1}
    G_i\ket{1} = \sum_{y\in\mathbb{F}^*_p} \tilde{g}_i(y) \ket{y}
\end{equation}
where $\mathbb{F}_p^* = \mathbb{F}_p\setminus\{0\}$. This guarantees that $G_i\ket{0}=\ket{0}$ is orthogonal to $G_i\ket{1}$, so we may assume that $G_i$ is unitary. It may be realized, \textit{e.g.} using the techniques from \cite{LKS18}. The fact that $\tilde{g}_i(0) = 0$ also implies that the parallel application $G:=\prod_{i=1}^mG_i$ of each $G_i$ to the respective subregister of the error register preserves the Hamming weight. More precisely, the symbol string on every term in the expansion of $G\ket{\boldsymbol{\mu}}$ in the computational basis has the same Hamming weight as $\boldsymbol{\mu}$. Indeed, using \eqref{eq:g_i} and \eqref{eq:g_i_1}, we find
\begin{equation}
    \sum_{\substack{\boldsymbol{\mu} \in \{0,1\}^m \\ |\boldsymbol{\mu}| = k}} G \ket{\boldsymbol{\mu}}
    = \sum_{\substack{\mathbf{y} \in \mathbb{F}_p^m \\ |\mathbf{y}|=k}} \tilde{g}_{y(1)}\left(y_{y(1)}\right) \ldots \tilde{g}_{y(k)}\left(y_{y(k)}\right) \ket{\mathbf{y}}
\end{equation}
where $y_i$ denotes the $i\th$ entry of $\mathbf{y}$, and $y(j)$ denotes the index of the $j\th$ nonzero entry of $\mathbf{y}$. Thus, by applying $G$ to the error register, we obtain
\begin{gather}
    \to \sum_{k=0}^\ell w_k \frac{1}{\sqrt{\binom{m}{k}}} \sum_{\substack{\mathbf{y} \in \mathbb{F}_p^m \\ |\mathbf{y}|=k}} \tilde{g}_{y(1)}\left(y_{y(1)}\right) \ldots \tilde{g}_{y(k)}\left(y_{y(k)}\right) \ket{\mathbf{y}}.
\end{gather}
Next, we reversibly compute $B^T\mathbf{y}$ into the syndrome register, obtaining the state
\begin{equation}
    \to \sum_{k=0}^\ell w_k \frac{1}{\sqrt{\binom{m}{k}}} \sum_{\substack{\mathbf{y} \in \mathbb{F}_p^m \\ |\mathbf{y}|=k}} \tilde{g}_{y(1)}\left(y_{y(1)}\right) \ldots \tilde{g}_{y(k)}\left(y_{y(k)}\right) \ket{\mathbf{y}} \ket{B^T \mathbf{y}}.
\end{equation}
The task of finding $\mathbf{y}$ from $B^T \mathbf{y}$ is the bounded distance decoding problem on $C^\perp = \{ \mathbf{y} \in \mathbb{F}_p^m : B^T \mathbf{y} = \mathbf{0} \}$. Thus uncomputing the content of the error register can be done efficiently whenever the bounded distance decoding problem on $C^\perp$ can be solved efficiently out to distance $\ell$.

Uncomputing disentangles the syndrome register from the error register leaving it in the state
\begin{gather}
    \to \sum_{k=0}^\ell w_k \frac{1}{\sqrt{\binom{m}{k}}} \sum_{\substack{\mathbf{y} \in \mathbb{F}_p^m \\ |\mathbf{y}|=k}} \tilde{g}_{y(1)}\left(y_{y(1)}\right) \ldots \tilde{g}_{y(k)}\left(y_{y(k)}\right) \ket{B^T \mathbf{y} } \\
    = \sum_{k=0}^\ell w_k \frac{1}{\sqrt{\binom{m}{k}}} \sum_{\substack{\mathbf{y} \in \mathbb{F}_p^m \\ |\mathbf{y}|=k}} \left( \prod_{\substack{i=1\\y_i \neq 0}}^m \tilde{g}_i(y_i) \right) \ket{B^T \mathbf{y}}\label{eq:beforeFT}.
\end{gather}
Comparing \eq{eq:beforeFT} with \eq{eq:FTgenp} shows that this is equal to
\begin{equation}
= \sum_{k=0}^\ell w_k \ket{\widetilde{P}^{(k)}}.
\end{equation}
Thus, by applying the inverse Quantum Fourier Transform on $\mathbb{F}_p^n$ to the syndrome register we obtain
\begin{equation}
    \to \sum_{k=0}^\ell w_k \ket{P^{(k)}}. \label{eq:Fourier_transformed_DQI_state}
\end{equation}
By the definition of $w_k$ this is $\ket{Q(g)}$, which equals $\ket{P(f)}$.

\section{Optimal Expected Fraction of Satisfied Constraints}
\label{sec:semicircle}

In this section we prove Theorem \ref{thm:semicircle} for the asymptotic performance of DQI as well as quantify the finite-size corrections. Before doing so, we make two remarks. First, we note that the condition $2 \ell + 1 < d^\perp$ is equivalent to saying that $\ell$ must be less than half the distance of the code, which is the same condition needed to guarantee that the decoding problem used in the uncomputation step of DQI has a unique solution. This condition is met by our OPI example. It is not met in our irregular max-XORSAT example. In \sect{sec:beyond_2l} we show that the semicircle law \eq{eq:semicircle_general} remains a good approximation even beyond this limit, for average case $\mathbf{v}$. Second, we note that Lemma \ref{thm:genw} gives an \emph{exact} expression for the expected number of constraints satisfied by DQI at any finite size and for any choice of polynomial $P$, in terms of an $(\ell+1) \times (\ell+1)$ quadratic form. By numerically evaluating optimum of this quadratic form we find that the finite size behavior converges fairly rapidly to the asymptotic behavior of Theorem \ref{thm:semicircle}, as illustrated in Fig. \ref{fig:semicircle}.

\begin{figure}
\begin{center}
    \includegraphics[width=0.6\textwidth]{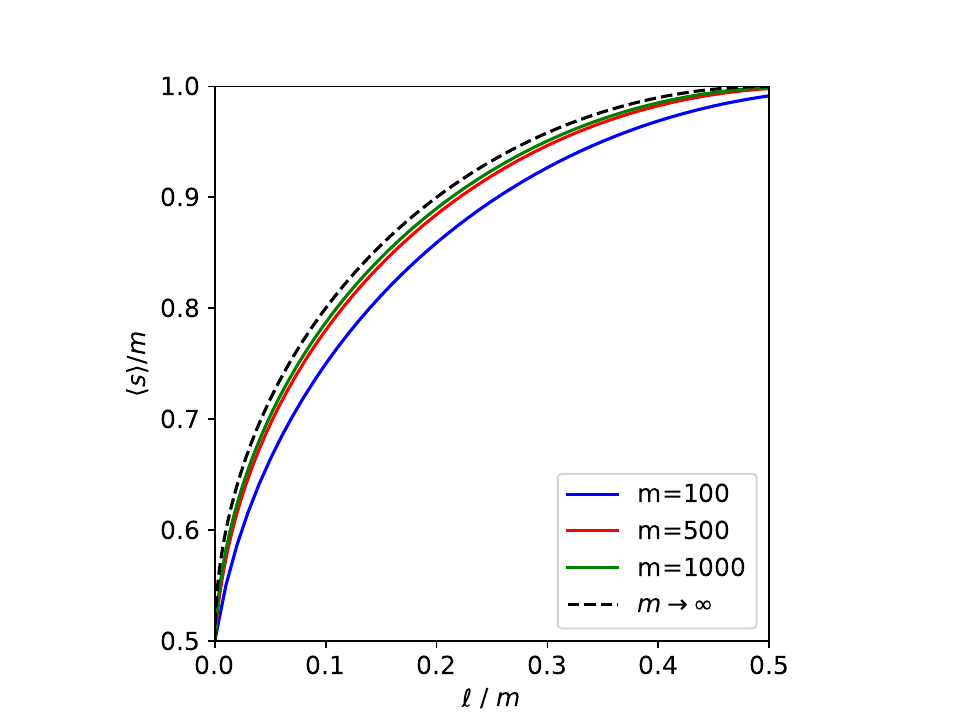}
    \caption{\label{fig:semicircle} Here we plot the expected fraction $\langle s\rangle/m$ of satisfied constraints, as dictated by Lemma \ref{thm:genw}, upon measuring $\ket{P(f)}$ when $P$ is the optimal degree-$\ell$ polynomial. We show the balanced case where $|f_i^{-1}(+1)| \simeq p/2$ for all $i$. Accordingly, the dashed black line corresponds to the asymptotic formula \eq{eq:semicircle_general} with $r/p = 1/2$.}
    \end{center}
\end{figure}

\subsection{Preliminaries}

Recall from \sect{sec:genp_dqi_state} that the quantum state $|\widetilde{P}(f)\rangle=F^{\otimes n}|P(f)\rangle$ obtained by applying the Quantum Fourier Transform to a DQI state $|P(f)\rangle = \sum_{\mathbf{x}\in\mathbb{F}_p^n}P(f(\mathbf{x}))|\mathbf{x}\rangle$
is
\begin{equation}
\ket{\widetilde{P}(f)} = \sum_{k=0}^\ell \frac{w_k}{\sqrt{\binom{m}{k}}} \sum_{\substack{\mathbf{y} \in \mathbb{F}_p^m\\
|\mathbf{y}|=k}} \left( \prod_{\substack{i=1\\y_i \neq 0}}^m \tilde{g}_i(y_i) \right) \ket{B^T \mathbf{y}}
\end{equation}
This can be written more succinctly as
\begin{align}
    |\widetilde{P}(f)\rangle&=\sum_{k=0}^\ell\frac{w_k}{\sqrt{{m \choose k}}}\sum_{\substack{\mathbf{y}\in\mathbb{F}_p^m\\|\mathbf{y}|=k}}\tilde{g}(\mathbf{y}) \ket{B^T \mathbf{y}} \label{eq:tilde_p_f}
\end{align}
by defining for any $\mathbf{y}\in\mathbb{F}_p^m$
\begin{align}\label{eq:tilde_h_for_codewords}
    \tilde{g}(\mathbf{y})=\prod_{\substack{i=1\\y_i\ne 0}}^m\tilde{g}_i(y_i)
\end{align}
where we regard the product of zero factors as $1$, so that $\tilde{g}_i(\mathbf{0})=1$. Next, we note that
\begin{align}\label{eq:sum_of_squared_magnitudes_of_dqi_amplitudes_is_binomial_coefficient}
    \sum_{\substack{\mathbf{y}\in\mathbb{F}_p^m\\|\mathbf{y}|=k}}|\tilde{g}(\mathbf{y})|^2 &= \sum_{\substack{i_1,\dots,i_k \in \{1,\dots,m\} \\ \text{distinct}}} \left(\sum_{y_1 \in \mathbb{F}_p} |\tilde{g}_{i_1}(y_1)|^2\right) \dots \left(\sum_{y_k \in \mathbb{F}_p} |\tilde{g}_{i_k}(y_k)|^2\right) = {m \choose k}
\end{align}
where we used the fact that $\tilde{g}_i(0)=0$ for all $i$. Lastly, we formally restate the following fact, which we derived in \sect{sec:genp_dqi_state}.
\begin{lemma}\label{lem:dqi_state_normalization_condition}
Let $d^\perp$ denote the minimum distance of the code $C^\perp = \{ \mathbf{d} \in \mathbb{F}_p^m : B^T \mathbf{d} = \mathbf{0} \}$. If $2 \ell < d^\perp$, then $\langle\widetilde{P}(f)|\widetilde{P}(f)\rangle=\|\mathbf{w}\|^2$.
\end{lemma}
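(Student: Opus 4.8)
The plan is to expand the squared norm directly as a double sum and use the minimum-distance hypothesis to eliminate all off-diagonal terms. Starting from the expression \eq{eq:tilde_p_f} for $|\widetilde{P}(f)\rangle$, I would write
\begin{equation}
\langle\widetilde{P}(f)|\widetilde{P}(f)\rangle = \sum_{k=0}^\ell\sum_{k'=0}^\ell \frac{\overline{w_k}\,w_{k'}}{\sqrt{\binom{m}{k}\binom{m}{k'}}} \sum_{\substack{\mathbf{y},\mathbf{y}'\in\mathbb{F}_p^m\\ |\mathbf{y}|=k,\ |\mathbf{y}'|=k'}} \overline{\tilde{g}(\mathbf{y})}\,\tilde{g}(\mathbf{y}')\,\langle B^T\mathbf{y}|B^T\mathbf{y}'\rangle,
\end{equation}
where each $\langle B^T\mathbf{y}|B^T\mathbf{y}'\rangle$ equals $1$ if $B^T\mathbf{y}=B^T\mathbf{y}'$ and $0$ otherwise, since these are computational basis states of the syndrome register.

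The crux of the argument is to show that $\mathbf{y}\mapsto B^T\mathbf{y}$ is injective on vectors of Hamming weight at most $\ell$, so that $\langle B^T\mathbf{y}|B^T\mathbf{y}'\rangle=\delta_{\mathbf{y},\mathbf{y}'}$ throughout the sum. Suppose $B^T\mathbf{y}=B^T\mathbf{y}'$ with $|\mathbf{y}|,|\mathbf{y}'|\le\ell$. Then $\mathbf{y}-\mathbf{y}'$ lies in the kernel of $B^T$, hence is a codeword of $C^\perp$, and by the triangle inequality for Hamming weight $|\mathbf{y}-\mathbf{y}'|\le|\mathbf{y}|+|\mathbf{y}'|\le 2\ell < d^\perp$. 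Since every nonzero codeword of $C^\perp$ has weight at least $d^\perp$, this forces $\mathbf{y}=\mathbf{y}'$. In particular only the $k=k'$, $\mathbf{y}=\mathbf{y}'$ contributions survive.

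Collapsing the double sum to this diagonal yields
\begin{equation}
\langle\widetilde{P}(f)|\widetilde{P}(f)\rangle = \sum_{k=0}^\ell \frac{|w_k|^2}{\binom{m}{k}}\sum_{\substack{\mathbf{y}\in\mathbb{F}_p^m\\ |\mathbf{y}|=k}} |\tilde{g}(\mathbf{y})|^2,
\end{equation}
and I would finish by substituting the normalization identity \eq{eq:sum_of_squared_magnitudes_of_dqi_amplitudes_is_binomial_coefficient}, namely $\sum_{|\mathbf{y}|=k}|\tilde{g}(\mathbf{y})|^2=\binom{m}{k}$, which cancels the binomial factor and leaves $\sum_{k=0}^\ell|w_k|^2=\|\mathbf{w}\|^2$. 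The only substantive step is the injectivity argument; everything else is bookkeeping. I expect the main (minor) subtlety to be simply confirming that the weight bound $|\mathbf{y}-\mathbf{y}'|\le 2\ell$ combined with the strict inequality $2\ell<d^\perp$ is exactly what is needed — note that this hypothesis is slightly weaker than the $2\ell+1<d^\perp$ assumed in Theorem~\ref{thm:semicircle}, yet still suffices here since it already guarantees $|\mathbf{y}-\mathbf{y}'|<d^\perp$.
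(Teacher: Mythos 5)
Your proof is correct and follows essentially the same route as the paper: the paper establishes the lemma (in \S\ref{sec:genp}) by noting that $\mathbf{y}\mapsto B^T\mathbf{y}$ is injective on vectors of weight at most $\ell<d^\perp/2$, so that the states $\ket{\widetilde{P}^{(0)}},\ldots,\ket{\widetilde{P}^{(\ell)}}$ are orthonormal (using the same normalization identity \eqref{eq:sum_of_squared_magnitudes_of_dqi_amplitudes_is_binomial_coefficient}), which is exactly your collapse-to-diagonal computation written in a different order. The only difference is that you spell out the kernel-plus-triangle-inequality argument for injectivity, which the paper asserts without proof, and you correctly observe that the weaker hypothesis $2\ell<d^\perp$ suffices.
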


The proof of Theorem \ref{thm:semicircle} consists of two parts. In the next subsection, we express the expected fraction $\langle s^{(m,\ell)}\rangle/m$ of satisfied constraints as a quadratic form involving a certain tridiagonal matrix. Then, we derive a closed-form asymptotic formula for the maximum eigenvalue of the matrix.   We end this section by combining the two parts into a proof of Theorem \ref{thm:semicircle}.  We note that \cite{BN06} encountered a similar matrix in the context of proving bounds on the size of codes with good distance, and derived a similar expression for the maximum eigenvalue, using similar methods.

\subsection{Expected Number of Satisfied Constraints}

\begin{lemma}
    \label{thm:genw}
    Let $f(\mathbf{x})=\sum_{i=1}^m f_i\left(\sum_{j=1}^n B_{ij} x_j \right)$ be a max-LINSAT objective function with matrix $B\in\mathbb{F}_p^{m\times n}$ for a prime $p$ and positive integers $m$ and $n$ such that $m>n$. Suppose that $|f_i^{-1}(+1)| = r$ for some $r\in\{1,\dots,p-1\}$. Let $P$ be a degree-$\ell$ polynomial normalized so that $\langle P(f)|P(f)\rangle=1$ with $P(f) = \sum_{k=0}^\ell w_k P^{(k)}(f_1,\ldots,f_m)/\sqrt{p^{n-k} {m \choose k}}$ its decomposition as a linear combination of elementary symmetric polynomials. Let $\langle s^{(m,\ell)} \rangle$ be the expected number of satisfied constraints for the symbol string obtained upon measuring the DQI state $\ket{P(f)}$ in the computational basis. If $2 \ell + 1 < d^\perp$ where $d^\perp$ is the minimum distance of the code $C^\perp = \{ \mathbf{d} \in \mathbb{F}_p^m : B^T \mathbf{d} = \mathbf{0} \}$, then
    \begin{equation}\label{eq:s_quadratic}
        \langle s^{(m,\ell)} \rangle = \frac{mr}{p} + \frac{\sqrt{r(p-r)}}{p} \mathbf{w}^\dag A^{(m,\ell,d)} \mathbf{w}
    \end{equation}
    where $\mathbf{w} = (w_0,\ldots,w_\ell)^T$ and $A^{(m,\ell,d)}$ is the $(\ell+1)\times(\ell+1)$ symmetric tridiagonal matrix
    \begin{equation}
    \label{eq:A_def}
        A^{(m,\ell,d)}=\begin{bmatrix}
        0   & a_1 & \\
        a_1 & d   & a_2   & \\
            & a_2 & 2d    & \ddots  & \\
            &     & \ddots &        & a_\ell \\
            &     &        & a_\ell & \ell d
        \end{bmatrix}
    \end{equation}
    with $a_k=\sqrt{k(m-k+1)}$ and $d = \frac{p-2r}{\sqrt{r(p-r)}}$.
\end{lemma}

\begin{proof}
The number of constraints satisfied by $\mathbf{x}\in\mathbb{F}_p^n$ is
\begin{align}\label{eq:def_s}
    s(\mathbf{x}) = \sum_{i=1}^m \mathbb{1}_{F_i}(\mathbf{b}_i\cdot\mathbf{x})
\end{align}
where
\begin{align}
    \mathbb{1}_A(x)=\begin{cases}
    1 & \text{if} \quad x \in A\\
    0 & \text{otherwise}
    \end{cases}
\end{align}
is the \textit{indicator function} of the set $A$. For any $v,x\in\mathbb{F}_p$, we can write
\begin{align}
    \mathbb{1}_{\{v\}}(x) = \frac{1}{p} \sum_{a\in\mathbb{F}_p} \omega_p^{a(x-v)}
\end{align}
so
\begin{align}
    \mathbb{1}_{F_i}(x) = \sum_{v\in F_i} \mathbb{1}_{\{v\}}(x) = \frac{1}{p}\sum_{v\in F_i} \sum_{a\in\mathbb{F}_p} \omega_p^{a(x-v)}.
\end{align}
Substituting into \eqref{eq:def_s}, we have
\begin{align}\label{eq:s_sum}
    s(\mathbf{x}) = \frac{1}{p} \sum_{i=1}^m \sum_{v\in F_i} \sum_{a\in\mathbb{F}_p} \omega_p^{a(\mathbf{b}_i\cdot\mathbf{x}-v)}.
\end{align}
The expected number of constraints satisfied by a symbol string sampled from the output distribution of a DQI state $|P(f)\rangle$ is
\begin{align}
    \langle s^{(m,\ell)}\rangle = \langle P(f)|S_f|P(f)\rangle
\end{align}
where $S_f=\sum_{\mathbf{x}\in\mathbb{F}_p^n}s(\mathbf{x})|\mathbf{x}\rangle\langle\mathbf{x}|$. We can express the observable $S_f$ in terms of the \textit{clock operator} $Z=\sum_{b\in\mathbb{F}_p}\omega_p^b|b\rangle\langle b|$ on $\mathbb{C}^p$ as
\begin{align}
    S_f &= \sum_{\mathbf{x}\in\mathbb{F}_p^n} s(\mathbf{x}) |\mathbf{x}\rangle \langle\mathbf{x}| \\
    &= \frac{1}{p} \sum_{i=1}^m \sum_{v\in F_i} \sum_{a\in\mathbb{F}_p} \sum_{\mathbf{x}\in\mathbb{F}_p^n} \omega_p^{a(\mathbf{b}_i\cdot\mathbf{x}-v)} |\mathbf{x}\rangle \langle\mathbf{x}| \\
    &= \frac{1}{p} \sum_{i=1}^m \sum_{v\in F_i} \sum_{a\in\mathbb{F}_p} \omega_p^{-av} \prod_{j=1}^n \sum_{x_j\in\mathbb{F}_p} \omega_p^{aB_{ij}x_j} |x_j\rangle \langle x_j| \\
    &= \frac{1}{p} \sum_{i=1}^m \sum_{v\in F_i} \sum_{a\in\mathbb{F}_p} \omega_p^{-av} \prod_{j=1}^n Z^{aB_{ij}}.
\end{align}
Next, we apply the Fourier transform $F_{ij}=\omega_p^{ij}/\sqrt{p}$ with $i,j=0,\dots,p-1$ to write $\langle s^{(m,\ell)}\rangle$ in terms of the \textit{shift operator} $X=\sum_{b\in\mathbb{F}_p}|b+1\rangle\langle b|$ on $\mathbb{C}^p$ as
\begin{align}
    \langle s^{(m,\ell)}\rangle&=\frac{1}{p}\sum_{i=1}^m\sum_{v\in F_i}\sum_{a\in\mathbb{F}_p}\omega_p^{-av}\langle P(f)|\prod_{j=1}^nZ_j^{aB_{ij}}|P(f)\rangle\\
    &=\frac{1}{p}\sum_{i=1}^m\sum_{v\in F_i}\sum_{a\in\mathbb{F}_p}\omega_p^{-av}\langle \widetilde{P}(f)|\prod_{j=1}^nX_j^{-aB_{ij}}|\widetilde{P}(f)\rangle.\label{eq:expected_satisfied_count_as_pauli_x}
\end{align}
where we used $FZF^\dagger=X^{-1}$ and $|\widetilde{P}(f)\rangle=F^{\otimes n}|P(f)\rangle$.

Substituting \eqref{eq:tilde_p_f} into \eqref{eq:expected_satisfied_count_as_pauli_x}, we get
\begin{align}
    \langle s^{(m,\ell)}\rangle&=\frac{1}{p}\sum_{k_1,k_2=0}^\ell\frac{w_{k_1}^* w^{\phantom{*}}_{k_2}}{\sqrt{{m \choose k_1}{m \choose k_2}}}\sum_{\substack{\mathbf{y}_1,\mathbf{y}_2\in\mathbb{F}_p^m\\|\mathbf{y}_1|=k_1\\|\mathbf{y}_2|=k_2}} \tilde{g}^*(\mathbf{y}_1) \tilde{g}(\mathbf{y}_2)\sum_{i=1}^m\sum_{v\in F_i}\sum_{a\in\mathbb{F}_p}\omega_p^{-av}\langle B^T\mathbf{y}_1|\prod_{j=1}^nX_j^{-aB_{ij}}|B^T\mathbf{y}_2\rangle
\end{align}

Let $\mathbf{e}_1,\dots,\mathbf{e}_m\in\mathbb{F}_p^m$ denote the standard basis of one-hot vectors. Then
\begin{align}
    \prod_{j=1}^nX_j^{-aB_{ij}}|B^T\mathbf{y}_2\rangle = |B^T\mathbf{y}_2 -a\mathbf{b}_i^T\rangle = |B^T(\mathbf{y}_2-a\mathbf{e}_i)\rangle
\end{align}
so
\begin{align}
    \langle s^{(m,\ell)}\rangle&=\frac{1}{p}\sum_{k_1,k_2=0}^\ell\frac{w^*_{k_1} w^{\phantom{*}}_{k_2}}{\sqrt{{m \choose k_1}{m \choose k_2}}}\sum_{\substack{\mathbf{y}_1,\mathbf{y}_2\in\mathbb{F}_p^m\\|\mathbf{y}_1|=k_1\\|\mathbf{y}_2|=k_2}} \tilde{g}^*(\mathbf{y}_1) \tilde{g}(\mathbf{y}_2)\sum_{i=1}^m\sum_{v\in F_i}\sum_{a\in\mathbb{F}_p}\omega_p^{-av}\langle B^T\mathbf{y}_1|B^T(\mathbf{y}_2-a\mathbf{e}_i)\rangle.\label{eq:expected_satisfied_fraction_in_terms_of_offset}
\end{align}
But $|B^T\mathbf{y}_1\rangle$ and $|B^T(\mathbf{y}_2-a\mathbf{e}_i)\rangle$ are computational basis states, so
\begin{align}
    \langle B^T\mathbf{y}_1|B^T(\mathbf{y}_2-a\mathbf{e}_i)\rangle = \begin{cases}
        1 & \text{if} \quad B^T\mathbf{y}_1 = B^T(\mathbf{y}_2 - a\mathbf{e}_i)\\
        0 & \text{otherwise}.
    \end{cases}
\end{align}
Moreover,
\begin{align}
    B^T\mathbf{y}_1 = B^T(\mathbf{y}_2 - a\mathbf{e}_i) \iff \mathbf{y}_1 - \mathbf{y}_2 + a\mathbf{e}_i\in C^\perp \iff \mathbf{y}_1 = \mathbf{y}_2 - a\mathbf{e}_i
\end{align}
where we used the assumption that the smallest Hamming weight of a non-zero symbol string in $C^\perp$ is $d^\perp > 2\ell +1 \geq k_1+k_2+1$.

There are three possibilities to consider: $|\mathbf{y}_1|=|\mathbf{y}_2|-1$, $|\mathbf{y}_2|=|\mathbf{y}_1|-1$, and $|\mathbf{y}_1|=|\mathbf{y}_2|$. We further break up the last case into $\mathbf{y}_1\ne\mathbf{y}_2$ and $\mathbf{y}_1=\mathbf{y}_2$. Before simplifying \eqref{eq:expected_satisfied_fraction_in_terms_of_offset}, we examine the values of $i\in\{1,\dots,m\}$ and $a\in\mathbb{F}_p$ for which $\langle B^T\mathbf{y}_1|B^T(\mathbf{y}_2-a\mathbf{e}_i)\rangle=1$ in each of the four cases. We also compute the value of the product $\tilde{g}^*(\mathbf{y}_1) \tilde{g}(\mathbf{y}_2)$.

Consider first the case $|\mathbf{y}_1|=|\mathbf{y}_2|-1$. Here, $a \ne 0$ and $i\in\{1,\dots,m\}$ is the position which is zero in $\mathbf{y}:=\mathbf{y}_1$ and $a$ in $\mathbf{y}_2$. Therefore, by definition \eqref{eq:tilde_h_for_codewords}
\begin{align}
    \tilde{g}^*(\mathbf{y}_1) \tilde{g}(\mathbf{y}_2)=|\tilde{g}(\mathbf{y})|^2 \tilde{g}_i(a).
\end{align}
Next, suppose $|\mathbf{y}_2|=|\mathbf{y}_1|-1$. Then $a \ne 0$ and $i\in\{1,\dots,m\}$ is the position which is zero in $\mathbf{y}:=\mathbf{y}_2$ and $-a$ in $\mathbf{y}_1$. Thus,
\begin{align}
    \tilde{g}^*(\mathbf{y}_1) \tilde{g}(\mathbf{y}_2)=|\tilde{g}(\mathbf{y}_2)|^2 \tilde{g}_i^*(-a) = |\tilde{g}(\mathbf{y})|^2 \tilde{g}_i(a).
\end{align}
Consider next the case $|\mathbf{y}_1|=|\mathbf{y}_2|$ and $\mathbf{y}_1\ne\mathbf{y}_2$. Here, $a \ne 0$ and $i\in\{1,\dots,m\}$ is the position which is $z-a$ in $\mathbf{y}_1$ and $z$ in $\mathbf{y}_2$ for some $z\in\mathbb{F}_p\setminus\{0,a\}$. Let $\mathbf{y}\in\mathbb{F}_p^m$ denote the vector which is zero at position $i$ and agrees with $\mathbf{y}_1$, and hence with $\mathbf{y}_2$, on all other positions. Then
\begin{align}
    \tilde{g}^*(\mathbf{y}_1) \tilde{g}(\mathbf{y}_2)=|\tilde{g}(\mathbf{y})|^2\tilde{g}_i^*(z-a)\tilde{g}_i(z)=|\tilde{g}(\mathbf{y})|^2\tilde{g}_i(a-z)\tilde{g}_i(z).
\end{align}
Finally, when $\mathbf{y}:=\mathbf{y}_1=\mathbf{y}_2$, we have
\begin{align}
    \tilde{g}^*(\mathbf{y}_1) \tilde{g}(\mathbf{y}_2)=|\tilde{g}(\mathbf{y})|^2.
\end{align}
In this case, $a=0$ and $i\in\{1,\dots,m\}$.

Putting it all together we can rewrite \eqref{eq:expected_satisfied_fraction_in_terms_of_offset} as
\begin{align}
    \langle s^{(m,\ell)}\rangle&=\frac{1}{p}\sum_{k=0}^{\ell-1}\frac{w_k^* w^{\phantom{*}}_{k+1}}{\sqrt{{m \choose k}{m \choose k+1}}}\sum_{\substack{\mathbf{y}\in\mathbb{F}_p^m\\|\mathbf{y}|=k}}|\tilde{g}(\mathbf{y})|^2\sum_{\substack{i=1\\y_i=0}}^m\sum_{v\in F_i}\sum_{a\in\mathbb{F}_p^*}\omega_p^{-av}\tilde{g}_i(a)\label{eq:s_tilde_h_a_super}\\
    &+\frac{1}{p}\sum_{k=0}^{\ell-1}\frac{w_{k+1}^* w^{\phantom{*}}_k}{\sqrt{{m \choose k+1}{m \choose k}}}\sum_{\substack{\mathbf{y}\in\mathbb{F}_p^m\\|\mathbf{y}|=k}}|\tilde{g}(\mathbf{y})|^2\sum_{\substack{i=1\\y_i=0}}^m\sum_{v\in F_i}\sum_{a\in\mathbb{F}_p^*}\omega_p^{-av}\tilde{g}_i(a)\label{eq:s_tilde_h_a_sub}\\
    &+\frac{1}{p}\sum_{k=1}^{\ell}\frac{|w_{k}|^2}{{m \choose k}}\sum_{\substack{\mathbf{y}\in\mathbb{F}_p^m\\|\mathbf{y}|=k-1}}|\tilde{g}(\mathbf{y})|^2\sum_{\substack{i=1\\y_i=0}}^m\sum_{v\in F_i}\sum_{a\in\mathbb{F}_p^*}\sum_{z\in\mathbb{F}_p\setminus\{0,a\}}\omega_p^{-av}\tilde{g}_i(a-z)\tilde{g}_i(z)\label{eq:s_tilde_h_a_diag}\\
    &+\frac{1}{p}\sum_{k=0}^{\ell}\frac{|w_{k}|^2}{{m \choose k}}\sum_{\substack{\mathbf{y}\in\mathbb{F}_p^m\\|\mathbf{y}|=k}}|\tilde{g}(\mathbf{y})|^2\sum_{i=1}^m\sum_{v\in F_i}\sum_{a\in\{0\}}\omega_p^{-av}.
\end{align}
Remembering that $\tilde{g}_i(0)=0$, we recognize the innermost sums in \eqref{eq:s_tilde_h_a_super} and \eqref{eq:s_tilde_h_a_sub} as the inverse Fourier transform so, for $\mathbf{y}$ with $|\mathbf{y}|=k$, we have
\begin{align}
    \sum_{\substack{i=1\\y_i=0}}^m\sum_{v\in F_i}\sum_{a\in\mathbb{F}_p^*}\omega_p^{-av}\tilde{g}_i(a)=\sqrt{p}\sum_{\substack{i=1\\y_i=0}}^m\sum_{v\in F_i}g_i(v)
    =(m-k)\sqrt{r(p-r)}
\end{align}
where we used \eq{eq:fbar} and \eq{eq:varphi} to calculate
\begin{align}\label{eq:gi_on_Fi}
    g_i(v) = \frac{1-\bar{f}}{\varphi} = \sqrt{\frac{p - r}{pr}}
\end{align}
for $v \in F_i$. Similarly, with $\tilde{g}_i(0)=0$, we recognize the sums indexed by $a$ and $z$ in \eqref{eq:s_tilde_h_a_diag} as the inverse Fourier transform of the convolution of $\tilde{g}_i$ with itself, so
\begin{align}
    \sum_{a\in\mathbb{F}_p}\omega_p^{-av}\sum_{z\in\mathbb{F}_p}\tilde{g}_i(a-z)\tilde{g}_i(z) &= \frac{1}{p}\sum_{a\in\mathbb{F}_p}\omega_p^{-av}\sum_{z\in\mathbb{F}_p}\sum_{x\in\mathbb{F}_p}\omega_p^{x(a-z)}g_i(x)\sum_{y\in\mathbb{F}_p}\omega_p^{yz}g_i(y) \\
    &=\sum_{a,x,y\in\mathbb{F}_p}\omega_p^{a(x-v)}g_i(x)g_i(y)\frac{1}{p}\sum_{z\in\mathbb{F}_p}\omega_p^{(y-x)z} \\
    &=\sum_{x\in\mathbb{F}_p}g_i(x)^2\sum_{a\in\mathbb{F}_p}\omega_p^{a(x-v)}=p \ g_i(v)^2
\end{align}
and, for $\mathbf{y}$ with $|\mathbf{y}|=k-1$, we have
\begin{align}
    \sum_{\substack{i=1\\y_i=0}}^m \sum_{v\in F_i} \sum_{a\in\mathbb{F}_p^*} \sum_{z\in\mathbb{F}_p\setminus\{0,a\}} \omega_p^{-av}\tilde{g}_i(a-z)\tilde{g}_i(z) &= \sum_{\substack{i=1\\y_i=0}}^m \sum_{v\in F_i} \left(\sum_{a,z\in\mathbb{F}_p}\omega_p^{-av}\tilde{g}_i(a-z)\tilde{g}_i(z) - \sum_{z\in\mathbb{F}_p}|\tilde{g}_i(z)|^2\right) \nonumber \\
    &=\sum_{\substack{i=1\\y_i=0}}^m\sum_{v\in F_i}\left(p \ g_i(v)^2-1\right) \\
    &=\sum_{\substack{i=1\\y_i=0}}^m\sum_{v\in F_i}\left(\frac{p - r}{r} -1\right) \\
    &=(m-k+1)(p-2r)
\end{align}
where we used \eqref{eq:gi_on_Fi}. Substituting back into \eqref{eq:s_tilde_h_a_super}, \eqref{eq:s_tilde_h_a_sub}, and \eqref{eq:s_tilde_h_a_diag}, we obtain
\begin{align}
    \langle s^{(m,\ell)}\rangle&=\frac{\sqrt{r(p-r)}}{p} \sum_{k=0}^{\ell-1}w^*_k w^{\phantom{*}}_{k+1}\frac{m-k}{\sqrt{{m \choose k}{m \choose k+1}}}\sum_{\substack{\mathbf{y}\in\mathbb{F}_p^m\\|y|=k}}|\tilde{g}(\mathbf{y})|^2 \\
    &+\frac{\sqrt{r(p-r)}}{p} \sum_{k=0}^{\ell-1}w^*_{k+1} w^{\phantom{*}}_k \frac{m-k}{\sqrt{{m \choose k+1}{m \choose k}}}\sum_{\substack{\mathbf{y}\in\mathbb{F}_p^m\\|y|=k}}|\tilde{g}(\mathbf{y})|^2 \\
    &+\frac{p-2r}{p}\sum_{k=1}^{\ell}|w_{k}|^2\frac{m-k+1}{{m \choose k}}\sum_{\substack{\mathbf{y}\in\mathbb{F}_p^m\\|y|=k-1}}|\tilde{g}(\mathbf{y})|^2 \\
    &+\frac{mr}{p}\sum_{k=0}^{\ell}|w_{k}|^2\frac{1}{{m \choose k}}\sum_{\substack{\mathbf{y}\in\mathbb{F}_p^m\\|y|=k}}|\tilde{g}(\mathbf{y})|^2
\end{align}
and using equation \eqref{eq:sum_of_squared_magnitudes_of_dqi_amplitudes_is_binomial_coefficient}, we get
\begin{align}
    \langle s^{(m,\ell)}\rangle&=\frac{\sqrt{r(p-r)}}{p} \sum_{k=0}^{\ell-1} w_k^* w_{k+1}\sqrt{(k+1)(m-k)} \\
    &+\frac{\sqrt{r(p-r)}}{p} \sum_{k=0}^{\ell-1}w_{k+1}^* w_k\sqrt{(k+1)(m-k)} \\
    &+\frac{p-2r}{p}\sum_{k=0}^\ell |w_k|^2 k \,+\, \frac{mr}{p}.
\end{align}
Defining
\begin{align}\label{eq:def_A_matrix}
    A^{(m,\ell,d)}=\begin{bmatrix}
    0   & a_1 & \\
    a_1 & d   & a_2   & \\
        & a_2 & 2d    & \ddots  & \\
        &      & \ddots &        & a_\ell \\
        &      &        & a_\ell & \ell d
    \end{bmatrix}
\end{align}
where $a_k=\sqrt{k(m-k+1)}$ for $k=1,\dots,\ell$ and $d\in\mathbb{R}$, we can write
\begin{equation}
    \langle s^{(m,\ell)}\rangle = \frac{mr}{p} + \frac{\sqrt{r(p-r)}}{p} \mathbf{w}^\dag A^{(m,\ell,d)} \mathbf{w}
\end{equation}
where $\mathbf{w} = (w_0,\ldots,w_\ell)^T$ and $d=\frac{p-2r}{\sqrt{r(p-r)}}$.
\end{proof}

\subsection{Asymptotic Formula for Maximum Eigenvalue of Matrix \texorpdfstring{$A^{(m,\ell,d)}/m$}{A}}

\begin{lemma}\label{lem:asymptotic_formula_for_max_eigenvalue}
Let $\lambda_\text{max}^{(m,\ell,d)}$ denote the maximum eigenvalue of the symmetric tridiagonal matrix $A^{(m,\ell,d)}$ defined in \eqref{eq:def_A_matrix}. If $\ell\leq m/2$ and $d\geq -\frac{m-2\ell}{\sqrt{\ell(m-\ell)}}$, then
\begin{align}
    \lim_{\substack{m,\ell\to\infty\\\ell/m=\mu}}\frac{\lambda_\text{max}^{(m,\ell,d)}}{m}=\mu d + 2\sqrt{\mu(1-\mu)}
\end{align}
where the limit is taken as both $m$ and $\ell$ tend to infinity with the ratio $\mu=\ell/m$ fixed.
\end{lemma}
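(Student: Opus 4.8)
The plan is to recognize $A^{(m,\ell,d)}/m$ as a (truncated) Jacobi matrix whose entries vary slowly in the index, and to pin down its top eigenvalue by matching variational upper and lower bounds. Writing $x=k/m$, the diagonal entry $kd$ becomes $m\,(xd)$ and the off-diagonal entry $a_k=\sqrt{k(m-k+1)}$ becomes $m\sqrt{x(1-x)}+O(\sqrt m)$ uniformly, so the natural candidate for the limit is the maximum over the accessible window $x\in[0,\mu]$ (with $\mu=\ell/m$) of the ``symbol'' $h(x):=xd+2\sqrt{x(1-x)}$, the factor $2$ arising because in the extremal configuration each interior site couples to both neighbours with the same sign. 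First I would record the elementary but crucial observation that the hypothesis $d\ge-\frac{m-2\ell}{\sqrt{\ell(m-\ell)}}$ is exactly $h'(\mu)\ge 0$; since $\sqrt{x(1-x)}$ traces a semicircle and is therefore concave, $h$ is concave, so $h'(\mu)\ge 0$ forces $h'\ge 0$ on all of $[0,\mu]$ and hence $\max_{x\in[0,\mu]}h(x)=h(\mu)=\mu d+2\sqrt{\mu(1-\mu)}$. The content of the lemma is thus that this \emph{boundary} maximum of the symbol is genuinely attained by the spectrum despite the truncation.

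For the upper bound I would use $\lambda_\text{max}^{(m,\ell,d)}=\max_{\|\psi\|=1}\psi^\dagger A^{(m,\ell,d)}\psi$ and bound each cross term by $2a_{k+1}\,\mathrm{Re}(\overline{\psi_k}\psi_{k+1})\le a_{k+1}(|\psi_k|^2+|\psi_{k+1}|^2)$. This collapses the quadratic form into a diagonal one $\sum_k c_k|\psi_k|^2$ with $c_k=kd+a_k+a_{k+1}$ in the bulk and strictly smaller values at the two ends, where a neighbour is absent. Hence $\lambda_\text{max}^{(m,\ell,d)}\le\max_k c_k$. Using the uniform estimate $a_k/m=\sqrt{(k/m)(1-k/m)}+O(1/\sqrt m)$ (valid down to $x$ near $0$ since the left side is monotone in the deviation), I get $c_k/m=h(k/m)+O(1/\sqrt m)$ uniformly in $k$, so $\max_k c_k/m\le\max_{x\in[0,\mu]}h(x)+O(1/\sqrt m)=h(\mu)+O(1/\sqrt m)$.

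For the matching lower bound I would exhibit an explicit trial vector concentrated against the truncation boundary $k=\ell$: take $\psi_k=1/\sqrt t$ for $k\in\{\ell-t+1,\dots,\ell\}$ and $0$ otherwise, with window width $t=\lfloor\sqrt m\rfloor$ so that $1\ll t\ll m$. Evaluating the Rayleigh quotient, the diagonal part gives $\frac1m\sum_k kd\,|\psi_k|^2=\mu d+O(t/m)$, while the off-diagonal part gives $\frac2m\cdot\frac1t\sum_{k=\ell-t+1}^{\ell-1}a_{k+1}=2\sqrt{\mu(1-\mu)}+O(1/t)+O(t/m)$, since every $a_{k+1}$ in this range equals $m\sqrt{\mu(1-\mu)}\,(1+O(t/m))$ and only one of the $t$ adjacent pairs is lost at the window edge. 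Thus $\psi^\dagger A^{(m,\ell,d)}\psi/m=h(\mu)+O(1/\sqrt m)$, giving $\lambda_\text{max}^{(m,\ell,d)}/m\ge h(\mu)-O(1/\sqrt m)$, and combining the two bounds yields the claim.

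The main obstacle, and the reason the hypotheses take the form they do, is the interplay between the truncation and the location of the symbol's maximum. Because $A^{(m,\ell,d)}$ is only the top-left $(\ell+1)\times(\ell+1)$ corner of the full ``angular-momentum'' matrix $2J_x+d(J_z+J)$ with $J=m/2$, the coupling $a_{\ell+1}$ to the site $k=\ell+1$ is absent, so the extremal vector cannot sit at the interior peak of $\sqrt{x(1-x)}$; it must pile up against the hard wall at $k=\ell$. The monotonicity of $h$ on $[0,\mu]$ guaranteed by the hypothesis (together with $\ell\le m/2$, which I use to keep the uniform error estimates clean by ensuring $1-x\ge\tfrac12$ throughout the window) is precisely what lets a boundary-concentrated test vector recover the full value $h(\mu)$, while simultaneously preventing the upper-bound symbol from exceeding $h(\mu)$ anywhere in the window. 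As a consistency check, Cauchy interlacing bounds $\lambda_\text{max}^{(m,\ell,d)}$ above by the top eigenvalue $\frac m2\bigl(d+\sqrt{4+d^2}\bigr)$ of the untruncated matrix, which is indeed $\ge m\,h(\mu)$, with equality only at $\mu=1/2,\,d=0$.
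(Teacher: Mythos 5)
Your proof is correct and takes essentially the same route as the paper's: the same flat test vector of width $\Theta(\sqrt{m})$ concentrated against the truncation boundary $k=\ell$ for the lower bound, and the same row-sum upper bound (your $2\,\mathrm{Re}(\overline{\psi_k}\psi_{k+1})a_{k+1}\le a_{k+1}(|\psi_k|^2+|\psi_{k+1}|^2)$ estimate reproduces exactly the Gershgorin bound the paper cites) combined with concavity of the symbol $h(x)=xd+2\sqrt{x(1-x)}$ and the observation that the hypothesis is precisely $h'(\mu)\ge 0$, forcing the maximum to the right endpoint. One harmless slip in your closing consistency check: equality in the interlacing bound $\tfrac{m}{2}\bigl(d+\sqrt{4+d^2}\bigr)\ge m\,h(\mu)$ holds along the entire boundary curve $d=(2\mu-1)/\sqrt{\mu(1-\mu)}$ of the hypothesis (i.e., whenever $h'(\mu)=0$), not only at $\mu=1/2$, $d=0$; this does not affect the proof.
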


\begin{proof}
First, we show that if $\ell\leq m/2$, then
\begin{align}
    \lim_{\substack{m,\ell\to\infty\\\ell/m=\mu}} \frac{\lambda_\text{max}^{(m,\ell,d)}}{m}\geq \mu d + 2\sqrt{\mu(1-\mu)}.
\end{align}
Define vector $v^{(m,\ell)}\in\mathbb{R}^{\ell+1}$ as
\begin{align}
    v_i^{(m,\ell)}=\begin{cases}
    0&\text{for}\quad i=0,1,\dots,\ell-\lceil\sqrt{\ell}\rceil\\
    \lceil\sqrt{\ell}\rceil^{-1/2}&\text{for}\quad i=\ell-\lceil\sqrt{\ell}\rceil+1,\dots,\ell.
    \end{cases}
\end{align}
Then, $\|v^{(m,\ell)}\|_2^2=\frac{\lceil\sqrt{\ell}\rceil}{\lceil\sqrt{\ell}\rceil}=1$. But
\begin{align}
    \lim_{\substack{m,\ell\to\infty\\\ell/m=\mu}} \frac{\lambda_\text{max}^{(m,\ell,d)}}{m} &\geq \lim_{\substack{m,\ell\to\infty\\\ell/m=\mu}}\frac{\left(v^{(m,\ell)}\right)^T A^{(m,\ell,d)} v^{(m,\ell)}}{m} \\
    \label{eq:two_term_lower_bound_on_max_eigenvalue_of_A}
    &= \lim_{\substack{m,\ell\to\infty\\\ell/m=\mu}} \left(\frac{1}{m\lceil\sqrt{\ell}\rceil}\sum_{k=\ell-\lceil\sqrt{\ell}\rceil+1}^\ell kd\right) + \lim_{\substack{m,\ell\to\infty\\\ell/m=\mu}} \left(\frac{1}{m\lceil\sqrt{\ell}\rceil}\sum_{k=\ell-\lceil\sqrt{\ell}\rceil+2}^\ell 2a_k\right).
\end{align}
The second term in \eqref{eq:two_term_lower_bound_on_max_eigenvalue_of_A} is bounded below by
\begin{align}
    \lim_{\substack{m,\ell\to\infty\\\ell/m=\mu}} \left(\frac{1}{m\lceil\sqrt{\ell}\rceil}\sum_{k=\ell-\lceil\sqrt{\ell}\rceil+2}^\ell 2a_k\right) &\geq\lim_{\substack{m,\ell\to\infty\\\ell/m=\mu}} \left(\frac{2a_{\ell-\lceil\sqrt{\ell}\rceil+2}(\lceil\sqrt{\ell}\rceil-1)}{m\lceil\sqrt{\ell}\rceil}\right) \\
    &= \lim_{\substack{m,\ell\to\infty\\\ell/m=\mu}} \left(\frac{2\sqrt{(\ell-\lceil\sqrt{\ell}\rceil+2)(m-\ell+\lceil\sqrt{\ell}\rceil-1)}}{m} \cdot \frac{\lceil\sqrt{\ell}\rceil-1}{\lceil\sqrt{\ell}\rceil}\right) \nonumber \\
    &= 2\sqrt{\mu(1-\mu)}
\end{align}
where we used the fact that $a_k$ increases as a function of $k$ for $k\leq m/2$. If $d\geq 0$, then the first term in \eqref{eq:two_term_lower_bound_on_max_eigenvalue_of_A} is bounded below by
\begin{align}
    \lim_{\substack{m,\ell\to\infty\\\ell/m=\mu}} \left(\frac{1}{m\lceil\sqrt{\ell}\rceil}\sum_{k=\ell-\lceil\sqrt{\ell}\rceil+1}^\ell kd\right)
    &\geq \lim_{\substack{m,\ell\to\infty\\\ell/m=\mu}} \quad \frac{\lceil\sqrt{\ell}\rceil\cdot(\ell-\lceil\sqrt{\ell}\rceil+1)\cdot d}{m\lceil\sqrt{\ell}\rceil} = \mu d
\end{align}
and if $d<0$, then it is bounded below by
\begin{align}
    \lim_{\substack{m,\ell\to\infty\\\ell/m=\mu}} \left(\frac{1}{m\lceil\sqrt{\ell}\rceil}\sum_{k=\ell-\lceil\sqrt{\ell}\rceil+1}^\ell kd\right)
    &\geq \lim_{\substack{m,\ell\to\infty\\\ell/m=\mu}} \quad \frac{\lceil\sqrt{\ell}\rceil\cdot\ell\cdot d}{m\lceil\sqrt{\ell}\rceil} = \mu d.
\end{align}
Putting it all together, we get
\begin{align}
    \lim_{\substack{m,\ell\to\infty\\\ell/m=\mu}} \frac{\lambda_\text{max}^{(m,\ell,d)}}{m}\geq \mu d + 2\sqrt{\mu(1-\mu)}.
\end{align}

Next, we establish a matching upper bound on $\lambda_\text{max}^{(m,\ell,d)}$. For $k = 0, 1, \ldots, \ell$, let $R_k^{(m, \ell,d)}$ denote the sum of the off-diagonal entries in the $k\th$ row of $A^{(m, \ell,d)}$ and set $a_0 = a_{\ell+1} = 0$, so that $R_k = a_k + a_{k+1}$. By Gershgorin's circle theorem, for every eigenvalue $\lambda^{(m, \ell,d)}$ of $A^{(m, \ell,d)}$, we have
\begin{align}
    \lambda^{(m,\ell,d)} &\leq \max_{k\in\{0,\dots,\ell\}} (k d + R_k) \\
    &= \max_{k\in\{0,\dots,\ell\}} \left(k d + \sqrt{k(m-k+1)} + \sqrt{(k+1)(m-k)}\right). \label{eq:int}
\end{align}
By assumption, $\ell < m/2$, so $\sqrt{(k+1)(m-k)} \geq \sqrt{k(m-k+1)}$ for all $k$ in the sum. Thus \eq{eq:int} yields
\begin{align}
\lambda^{(m,\ell,d)}
    &\leq \max_{k\in\{0,\dots,\ell\}} \left(k d + 2\sqrt{(k+1)(m-k)}\right)  \\
    &\leq \max_{k\in\{0,\dots,\ell\}} \left(k d + 2\sqrt{k(m-k)}+2\sqrt{m-k}\right) \\
    &\leq 2\sqrt{m} + \max_{k\in\{0,\dots,\ell\}} \left(k d + 2\sqrt{k(m-k)}\right) \\
    &= 2\sqrt{m} + \max_{k\in\{0,\dots,\ell\}} \xi(k)
\end{align}
where we define $\xi(x)=xd + 2\sqrt{x(m-x)}$. Note that $\xi^{''}(x)=-\frac{m^2}{2[x(m-x)]^{3/2}}<0$ for all $x\in(0,m)$, so the derivative $\xi^{'}(x)=d+\frac{m-2x}{\sqrt{x(m-x)}}$ is decreasing in this interval. However, by assumption $d\geq-\frac{m-2\ell}{\sqrt{\ell(m-\ell)}}$, so $\xi^{'}(x)\geq\xi^{'}(\ell)\geq 0$ for $x\in(0,\ell]$, since $\ell < m/2$. Therefore, $\xi(x)$ is increasing on this interval and we have
\begin{align}
    \lambda^{(m,\ell,d)} \leq 2\sqrt{m} + \ell d + 2\sqrt{\ell(m-\ell)}.
\end{align}
But then
\begin{align}
\lim_{\substack{m,\ell\to\infty\\\ell/m=\mu}}\frac{\lambda_\text{max}^{(m,\ell,d)}}{m} &\leq \lim_{\substack{m, \ell \to \infty \\ \ell / m = \mu}} \frac{2\sqrt{m} + \ell d + 2\sqrt{\ell(m-\ell)}}{m}=\mu d + 2\sqrt{\mu (1 - \mu)}
\end{align}
which establishes the matching upper bound and completes the proof of the lemma.
\end{proof}

\subsection{Optimal Asymptotic Expected Fraction of Satisfied Constraints}

In this subsection we use Lemmas \ref{thm:genw} and \ref{lem:asymptotic_formula_for_max_eigenvalue} to prove Theorem \ref{thm:semicircle}.

\begin{proof}
Recall from Lemma \ref{lem:dqi_state_normalization_condition} that $\|\mathbf{w}\|_2 = 1$. Therefore, the expected number of satisfied constraints is maximized by choosing $\mathbf{w}$ in \eqref{eq:s_quadratic} to be the normalized eigenvector of $A^{(m,\ell,d)}$ corresponding to its maximal eigenvalue. This leads to
\begin{equation}\label{eq:satisfied_fraction_bounded_by_eigenvalue}
    \frac{\langle s^{(m,\ell)}\rangle_{\mathrm{opt}}}{m} = \rho + \sqrt{\rho(1-\rho)}\frac{\lambda_{\max}^{(m,\ell,d)}}{m}
\end{equation}
where $\rho=\frac{r}{p}$ and $d=\frac{p-2r}{\sqrt{r(p-r)}}$. Consider first the case of $\frac{r}{p}\leq 1-\frac{\ell}{m}$. Then $\frac{p-r}{r}\geq\frac{\ell}{m-\ell}$ and
\begin{align}
    d = \frac{p-2r}{\sqrt{r(p-r)}} = \sqrt{\frac{p-r}{r}} - \sqrt{\frac{r}{p-r}} \geq \sqrt{\frac{\ell}{m-\ell}} - \sqrt{\frac{m-\ell}{\ell}} = -\frac{m-2\ell}{\sqrt{\ell(m-\ell)}}.
\end{align}
Moreover, $\ell < (d^\perp-1)/2 \leq (m-1)/2$. Therefore, Lemma \ref{lem:asymptotic_formula_for_max_eigenvalue} applies and we have
\begin{align}
    \lim_{\substack{m,\ell \to \infty \\ \ell/m = \mu}} \frac{\langle s^{(m,\ell)}\rangle_{\mathrm{opt}}}{m} &= \rho + \sqrt{\rho(1-\rho)} \lim_{\substack{m,\ell \to \infty \\ \ell/m = \mu}} \frac{\lambda_{\max}^{(m,\ell,d)}}{m}. \\
    &= \rho + \sqrt{\rho(1-\rho)} \left(\mu d+2\sqrt{\mu(1-\mu)}\right) \\
\end{align}
Recalling $d = \frac{1-2 \rho}{\sqrt{\rho(1-\rho)}}$ this yields
\begin{align}
    \lim_{\substack{m,\ell \to \infty \\ \ell/m = \mu}} \frac{\langle s^{(m,\ell)}\rangle_{\mathrm{opt}}}{m}
    &= \mu+\rho-2\mu\rho+2\sqrt{\mu(1-\mu)\rho(1-\rho)} \\
    &= \left(\sqrt{\mu(1-\rho)} + \sqrt{\rho(1-\mu)}\right)^2
\end{align}
for $\rho \leq 1-\mu$. In particular, if $\rho=1-\mu$, then $\left(\sqrt{\mu(1-\rho)} + \sqrt{\rho(1-\mu)}\right)^2=1$. But $\langle s^{(m,\ell)}\rangle_{\mathrm{opt}}$ is an increasing function of $r$ that cannot exceed $1$. Consequently,
\begin{align}
    \lim_{\substack{m,\ell \to \infty \\ \ell/m = \mu}} \frac{\langle s^{(m,\ell)}\rangle_{\mathrm{opt}}}{m} = 1
\end{align}
for $\rho > 1-\mu$. Putting it all together, we obtain
\begin{align}
    \lim_{\substack{m,\ell \to \infty \\ \ell/m = \mu}} \frac{\langle s^{(m,\ell)}\rangle_{\mathrm{opt}}}{m} = \begin{cases}
        \left(\sqrt{\mu(1-\rho)} + \sqrt{\rho(1-\mu)}\right)^2 & \text{if}\,\,\rho\leq 1-\mu\\
        1 & \text{otherwise}
    \end{cases}
\end{align}
which completes the proof of Theorem \ref{thm:semicircle}.
\end{proof}

\section{Removing the Minimum Distance Assumption}
\label{sec:beyond_2l}

So far, in our description and analysis of DQI we have always assumed that $2 \ell + 1 < d^\perp$. This condition buys us several advantages. First, it ensures that the states $\ket{\widetilde{P}^{(0)}},\ldots,\ket{\widetilde{P}^{(\ell)}}$, from which we construct
\begin{equation}
\label{eq:ketptilde}
\ket{\widetilde{P}(f)} = \sum_{k=0}^\ell w_k \frac{1}{\sqrt{\binom{m}{k}}} \sum_{\substack{\mathbf{y} \in \mathbb{F}_2^m \\ |\mathbf{y}| = k}} (-1)^{\mathbf{v} \cdot \mathbf{y}} \ket{B^T \mathbf{y}},
\end{equation}
form an orthonormal set which implies that
\begin{equation}
    \label{eq:normdenom}
    \langle \widetilde{P}(f) | \widetilde{P}(f) \rangle = \mathbf{w}^\dag \mathbf{w}.
\end{equation}
Second, it allows us to obtain an \emph{exact} expression for the expected number of constraints satisfied without needing to know the weight distributions of either the code $C$ or $C^\perp$, namely for $p=2$
\begin{equation}
    \label{eq:quadnum}
    \langle s \rangle = \frac{1}{2} m + \frac{1}{2} \mathbf{w}^\dag A^{(m,\ell, 0)} \mathbf{w}
\end{equation}
where $A^{(m,\ell, d)}$ is the $(\ell+1) \times (\ell + 1)$ matrix defined in \eq{eq:A_def}. These facts in turn allow us to prove Theorem \ref{thm:semicircle}.

For the irregular LDPC code described in \sect{sec:wins}, we can estimate $d^\perp$ using the Gilbert-Varshamov bound\footnote{By \cite{Gal62} it is known that the distance of random LDPC codes drawn from various standard ensembles is well-approximated asymptotically by the Gilbert-Varshamov bound. We extrapolate that this is also a good approximation for our ensemble of random LDPC codes with irregular degree distribution.}. By experimentally testing belief propagation, we find that for some codes it is able to correct slightly more than $(d^\perp-1)/2$ errors with high reliability. Under this circumstance, equations (\ref{eq:normdenom}) and (\ref{eq:quadnum}) no longer hold exactly. Here, we prove that they continue to hold in expectation for max-XORSAT with uniformly average $\mathbf{v}$, up to small corrections due to decoding failures. The precise statement of our result is given in Theorem \ref{thm:imperfect_decoding}.

In the remainder of this section we first describe precisely what we mean by the DQI algorithm in the case of $2 \ell + 1 \geq d^\perp$. (For simplicity, we consider the case $p=2$.)

As the initial step of DQI, we perform classical preprocessing to choose $\mathbf{w} \in \mathbb{R}^{\ell+1}$, which is equivalent to making a choice of degree-$\ell$ polynomial $P$. In the case $2 \ell + 1 < d^\perp$ we can exactly compute the choice of $\mathbf{w}$ that maximizes $\langle s \rangle /m$. Specifically, it is the principal eigenvector of $A^{(m,\ell, d)}$ defined in \eq{eq:A_def}. Once we reach or exceed $2 \ell + 1 = d^\perp$ the principal eigenvector of this matrix is not necessarily the optimal choice. But we can still use it as our choice of $\mathbf{w}$, and as we will show below, it remains a good choice.

After choosing $\mathbf{w}$, the next step in the DQI algorithm, as discussed in \sect{sec:genl_dqi_algo} is to prepare the state
\begin{equation}
    \label{eq:dqi_state_def}
    \sum_{k=0}^\ell w_k \frac{1}{\sqrt{{m \choose k}}}\sum_{\substack{\mathbf{y}\in\mathbb{F}_2^m\\|\mathbf{y}|=k}} (-1)^{\mathbf{v} \cdot \mathbf{y}} \ket{\mathbf{y}} \ket{B^T \mathbf{y}}.
\end{equation}
The following step is to uncompute $\ket{\mathbf{y}}$. When $2 \ell + 1 \geq d^\perp$ this uncomputation will not succeed with $100\%$ certainty and one cannot produce exactly the state 
$\ket{\widetilde{P}(f)} = \sum_{k=0}^\ell w_k \ket{\widetilde{P}^{(k)}}$. Instead, the goal is to produce a good approximation to $\ket{\widetilde{P}(f)}$ with high probability. This is because the number of errors $\ell$ is large enough so that, by starting from a codeword in $C^\perp$ and then flipping $\ell$ bits, the nearest codeword (in Hamming distance) to the resulting string may be a codeword other than the starting codeword. (This is the same reason that $\ket{\widetilde{P}^{(0)}},\ldots,\ket{\widetilde{P}^{(\ell)}}$ are no longer orthogonal and hence the norm of $\ket{\widetilde{P}(f)}$ is no longer exactly equal to $\|\mathbf{w}\|$.)

If the decoder succeeds on a large fraction of the errors that are in superposition, one simply postselects on success of the decoder and obtains a normalized state which is a good approximation to the (unnormalized) ideal state $\ket{\widetilde{P}(f)}$. The last steps of DQI are to perform a Hadamard transform and then measure in the computational basis, just as in the case of $2 \ell + 1 < d^\perp$.

\subsection{General Expressions for the Expected Number of Satisfied Constraints}

Here we derive generalizations of lemmas \ref{lem:dqi_state_normalization_condition} and \ref{thm:genw} for arbitrary code distances. We restrict our attention to the max-XORSAT case where we are given $B \in \mathbb{F}_2^{m \times n}$ and $\mathbf{v} \in \mathbb{F}_2^m$ and we seek to maximize the objective function $f(\mathbf{x}) = \sum_{i = 1}^m (-1)^{\mathbf{b}_i \cdot \mathbf{x} + v_i}$, where $\mathbf{b}_i$ is the $i\th$ row of $B$ and $v_i$ is the $i\th$ entry of $\mathbf{v}$.

By examining \eq{eq:ketptilde} one sees that if $d^\perp > 2\ell$ then the $B^T \mathbf{y}$ are all distinct and the norm of $|\widetilde{P}(f)\rangle$ is the norm of $\mathbf{w}$, as shown in Lemma \ref{lem:dqi_state_normalization_condition}. More generally, we have the following lemma.  
\begin{lemma}
    \label{lem:Mlemma} The squared norm of $|\widetilde{P}(f)\rangle$ is 
    \begin{equation}
    \langle \widetilde{P}(f) | \widetilde{P}(f) \rangle = \mathbf{w}^\dag M^{(m,\ell)} \mathbf{w},
    \end{equation}
    where $M^{(m,\ell)}$ is the $(\ell + 1) \times (\ell + 1)$ symmetric matrix defined by
    \begin{equation}
        M_{k,k'}^{(m,\ell)} = \frac{1}{\sqrt{\binom{m}{k} \binom{m}{k'}}} \sum_{\substack{\mathbf{y} \in \mathbb{F}_2^m \\ |\mathbf{y}| = k}} \sum_{\substack{\mathbf{y}' \in \mathbb{F}_2^m \\ |\mathbf{y}'| = k'}} (-1)^{(\mathbf{y} + \mathbf{y}') \cdot \mathbf{v}} \delta_{B^T \mathbf{y}, B^T \mathbf{y}'}, \label{eq:Mdef}
    \end{equation}
    for $0 \leq k,k' \leq \ell$. 
\end{lemma}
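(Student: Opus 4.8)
The plan is to compute $\langle \widetilde{P}(f) | \widetilde{P}(f) \rangle$ by a direct bilinear expansion of the explicit form of the state in \eqref{eq:dqi_state_def}, being careful to make \emph{no} use of the minimum-distance hypothesis $2\ell + 1 < d^\perp$. Writing both the bra and the ket as the double sum over the weight index and the error strings, I would obtain
\begin{equation}
\langle \widetilde{P}(f) | \widetilde{P}(f) \rangle = \sum_{k,k'=0}^\ell \frac{w_k^* w_{k'}}{\sqrt{\binom{m}{k}\binom{m}{k'}}} \sum_{\substack{\mathbf{y}\in\mathbb{F}_2^m\\|\mathbf{y}|=k}} \sum_{\substack{\mathbf{y}'\in\mathbb{F}_2^m\\|\mathbf{y}'|=k'}} (-1)^{\mathbf{v}\cdot\mathbf{y}}(-1)^{\mathbf{v}\cdot\mathbf{y}'} \langle B^T\mathbf{y} | B^T\mathbf{y}'\rangle.
\end{equation}

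The key steps are then purely bookkeeping. Since each $|B^T\mathbf{y}\rangle$ is a computational basis state, the overlap collapses to $\langle B^T\mathbf{y}|B^T\mathbf{y}'\rangle = \delta_{B^T\mathbf{y},\,B^T\mathbf{y}'}$; this is the only place where the calculation differs from the proof of lemma \ref{lem:dqi_state_normalization_condition}, where the distance assumption forced $B^T\mathbf{y} = B^T\mathbf{y}'$ to imply $\mathbf{y} = \mathbf{y}'$ and hence $\delta_{B^T\mathbf{y},B^T\mathbf{y}'} = \delta_{\mathbf{y},\mathbf{y}'}$. Next I would combine the two real phase factors using $(-1)^{\mathbf{v}\cdot\mathbf{y}}(-1)^{\mathbf{v}\cdot\mathbf{y}'} = (-1)^{(\mathbf{y}+\mathbf{y}')\cdot\mathbf{v}}$, the exponent being read mod $2$. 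Pulling the inner double sum over $\mathbf{y},\mathbf{y}'$ into the definition precisely reproduces $M^{(m,\ell)}_{k,k'}$ from \eqref{eq:Mdef}, so that the whole expression becomes $\sum_{k,k'} w_k^* M^{(m,\ell)}_{k,k'} w_{k'} = \mathbf{w}^\dag M^{(m,\ell)} \mathbf{w}$, as claimed.

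The computation has no genuine obstacle; the only point requiring a sentence of justification is that the object $M^{(m,\ell)}$ defined in \eqref{eq:Mdef} is symmetric, so that the quadratic form is well defined and real. This follows immediately because both the phase $(-1)^{(\mathbf{y}+\mathbf{y}')\cdot\mathbf{v}}$ and the indicator $\delta_{B^T\mathbf{y},B^T\mathbf{y}'}$ are invariant under interchanging $\mathbf{y}\leftrightarrow\mathbf{y}'$, whence $M^{(m,\ell)}_{k,k'} = M^{(m,\ell)}_{k',k}$; being real and symmetric, it is in particular Hermitian, so $\mathbf{w}^\dag M^{(m,\ell)}\mathbf{w}$ is real regardless of whether $\mathbf{w}$ is taken real or complex. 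Finally, I would remark as a sanity check that in the special case $2\ell+1 < d^\perp$ the Kronecker delta $\delta_{B^T\mathbf{y},B^T\mathbf{y}'}$ reduces to $\delta_{\mathbf{y},\mathbf{y}'}$, which forces $k = k'$ and turns $M^{(m,\ell)}$ into the identity, recovering $\langle\widetilde{P}(f)|\widetilde{P}(f)\rangle = \|\mathbf{w}\|^2$ in agreement with lemma \ref{lem:dqi_state_normalization_condition}.
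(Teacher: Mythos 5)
Your proposal is correct and is exactly the argument the paper intends: the paper's proof is the one-line observation that the claim is immediate from the explicit form of $\ket{\widetilde{P}(f)}$ in \eqref{eq:dqi_state_def} together with the fact that the $\ket{B^T\mathbf{y}}$ are computational basis states, and your expansion, the collapse of the overlap to $\delta_{B^T\mathbf{y},B^T\mathbf{y}'}$, and the identification with \eqref{eq:Mdef} simply spell out those same steps. The added remarks on symmetry of $M^{(m,\ell)}$ and the consistency check against lemma \ref{lem:dqi_state_normalization_condition} are sound but not needed.
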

\begin{proof}
This is immediate from \eqref{eq:ketptilde} and the fact that the $\ket{B^T \mathbf{y}}$ are computational basis states. 
\end{proof}
Note that if $d^\perp > 2\ell$, then $M^{(m,\ell)}_{k,k'} = \delta_{k,k'}$, in agreement with Lemma \ref{lem:dqi_state_normalization_condition}. 

\begin{lemma}
\label{lem:Alemma_beyond_2l}
Let $\ket{P(f)}$ be the Hadamard transform of the state $\ket{\widetilde{P}(f)}$ defined in \eq{eq:ketptilde}. Let $\langle f \rangle$ be the expected objective value for the symbol string obtained upon measuring the DQI state $\ket{P(f)}$ in the computational basis. If the weights $w_k$ are such that $|P(f)\rangle$ is normalized, then
\begin{equation}
    \langle f \rangle = \mathbf{w}^\dag \bar{A}^{(m,\ell)} \mathbf{w}
\end{equation}
where $\bar{A}^{(m,\ell)}$ is the $(\ell+1)\times(\ell+1)$ symmetric matrix defined by
\begin{equation}
    \label{eq:barAdef}
    \bar{A}^{(m,\ell)}_{k,k'}  =  \frac{1}{\sqrt{\binom{m}{k} \binom{m}{k'}}} \sum_{i=1}^m \sum_{(\mathbf{y},\mathbf{y'}) \in S^{(i)}_{k,k'}} (-1)^{v_i + \mathbf{v} \cdot (\mathbf{y} + \mathbf{y}')}
\end{equation}
for $0 \leq k,k' \leq \ell$, and 
\begin{equation}
    \label{eq:def_Si}
    S^{(i)}_{k,k'} =  \{(\mathbf{y},\mathbf{y}') \in \mathbb{F}_2^m \times \mathbb{F}_2^m : |\mathbf{y}| = k, |\mathbf{y}'| = k', B^T(\mathbf{y} + \mathbf{y}' + \mathbf{e}_i) = \mathbf{0} \}.
\end{equation}
for one-hot vectors $\mathbf{e}_1,\ldots,\mathbf{e}_m$ in $\mathbb{F}_2^m$.
\end{lemma}
\begin{proof}
    The expected value of $f(\mathbf{x})$ in state $\ket{P(f)}$ is
\begin{equation}
\langle f \rangle = \bra{P(f)} H_f \ket{P(f)},
\end{equation}
where
\begin{equation}
    H_f = \sum_{i=1}^m (-1)^{v_i} \prod_{j:B_{ij}=1} Z_j
\end{equation}
and $Z_j$ denotes the Pauli $Z$ operator acting on the $j\th$ qubit. Recalling that conjugation by Hadamard interchanges Pauli $X$ with $Z$, we have
\begin{eqnarray}
    \langle  f \rangle & = & \sum_{i=1}^m (-1)^{v_i} \bra{P(f)} \prod_{j: B_{ij} = 1} Z_j \ket{P(f)} \\
    & = & \sum_{i=1}^m (-1)^{v_i} \bra{\widetilde{P}(f)} \prod_{j : B_{ij} = 1} X_j \ket{\widetilde{P}(f)} \\ & = & \sum_{k,k'=0}^\ell \frac{w_k w_{k'}}{\sqrt{\binom{m}{k} \binom{m}{k'}}} \sum_{|\mathbf{y}|=k} \sum_{|\mathbf{y}'|=k'} (-1)^{\mathbf{v} \cdot (\mathbf{y} + \mathbf{y}')} \sum_{i=1}^m (-1)^{v_i} \bra{B^T \mathbf{y}} \prod_{j:B_{ij}=1} X_j \ket{B^T \mathbf{y}'}, \label{eq:hopping1}
\end{eqnarray}
where in the last line we have plugged in the definition of $\ket{\widetilde{P}(f)}$. 
We can rewrite this quadratic form as
\begin{equation}
    \langle f
    \rangle = \sum_{k,k'=0}^\ell w_k \ w_{k'} \ \bar{A}^{(m,\ell)}_{k,k'},
\end{equation}
where
\begin{equation}
    \bar{A}^{(m,\ell)}_{k,k'} = \frac{1}{\sqrt{\binom{m}{k} \binom{m}{k'}}} \sum_{|\mathbf{y}|=k} \sum_{|\mathbf{y}'|=k'} (-1)^{\mathbf{v} \cdot (\mathbf{y} + \mathbf{y}')} \sum_{i=1}^m (-1)^{v_i} \bra{B^T \mathbf{y}} \prod_{j:B_{ij}=1} X_j \ket{B^T \mathbf{y}'}. \label{eq:hopping2}
\end{equation}
For the one-hot vectors $\mathbf{e}_1,\ldots,\mathbf{e}_m \in \mathbb{F}_2^m$ we have
\begin{equation}
    \prod_{j:B_{ij}=1} X_j \ket{B^T\mathbf{y}'} = \ket{B^T(\mathbf{y}'+\mathbf{e}_i)}.
\end{equation}
Substituting this into \eq{eq:hopping2} yields
\begin{equation}
   \bar{A}^{(m,\ell)}_{k,k'} = \frac{1}{\sqrt{\binom{m}{k} \binom{m}{k'}}} \sum_{|\mathbf{y}|=k} \sum_{|\mathbf{y}'|=k'} (-1)^{\mathbf{v} \cdot (\mathbf{y} + \mathbf{y}')} \sum_{i=1}^m (-1)^{v_i} \langle B^T \mathbf{y} | B^T (\mathbf{y}'+ \mathbf{e}_i) \rangle. \label{eq:hopping3}
\end{equation}
We next note that $\langle B^T \mathbf{y} | B^T (\mathbf{y}'+ \mathbf{e}_i) \rangle$ equals one when $B^T \mathbf{y} = B^T (\mathbf{y}'+\mathbf{e}_i)$ and zero otherwise. The condition $B^T \mathbf{y} = B^T (\mathbf{y}'+\mathbf{e}_i)$ is equivalent to $B^T (\mathbf{y} + \mathbf{y}'+\mathbf{e}_i) = \mathbf{0}$. Hence,
\begin{equation}
    \bar{A}^{(m,\ell)}_{k,k'} = \frac{1}{\sqrt{\binom{m}{k} \binom{m}{k'}}} \sum_{i=1}^m \sum_{(\mathbf{y},\mathbf{y'}) \in S^{(i)}_{k,k'}} (-1)^{v_i + \mathbf{v} \cdot (\mathbf{y} + \mathbf{y}')}
\end{equation}
where
\begin{equation}
    S^{(i)}_{k,k'} = \{(\mathbf{y},\mathbf{y}') \in \mathbb{F}_2^m \times \mathbb{F}_2^m : |\mathbf{y}| = k, |\mathbf{y}'| = k', B^T (\mathbf{y} + \mathbf{y}' + \mathbf{e}_i) = \mathbf{0} \}.
\end{equation}
\end{proof}
Note once again that if $d^\perp > 2\ell + 1$ then $\bar{A}^{(m,\ell)}$ simplifies to $A^{(m,\ell,0)}$ defined in \eqref{eq:A_def}. It is also easy to verify that $\bar{A}^{(m,\ell)}$ can be rewritten in terms of $M$ as \begin{equation}
    \bar{A}^{(m,\ell)}_{k,k'} = \sqrt{k' (m-k'+1)}M^{(m,\ell)}_{k,k'-1} + \sqrt{(k'+1) (m-k')}M^{(m,\ell)}_{k,k'+1},
\end{equation}
where we define $M^{(m,\ell)}_{k,k'}$ according to formula \eqref{eq:Mdef}, even if $k$ or $k'$ exceed $\ell+1$.

\subsection{Average-case \texorpdfstring{$\mathbf{v}$}{v}}

Any choice of $\mathbf{w} \in \mathbb{C}^{\ell+1}$ defines a corresponding state $\ket{P(f)} = H^{\otimes n} \ket{\widetilde{P}(f)}$ via \eq{eq:ketptilde}. (In general this will not be normalized.)

As shown in the previous subsection the general expression for the expected value of $f$ achieved by measuring the ideal normalized DQI state $\ket{P(f)}/\| \ket{P(f)} \|$ in the computational basis, which holds even when $2 \ell + 1 \geq d^\perp$, is
\begin{equation}
\langle f \rangle = \frac{\mathbf{w}^\dag \bar{A}^{(m,\ell)} \mathbf{w}}{\mathbf{w}^\dag M^{(m,\ell)} \mathbf{w}}.
\end{equation}

In the case $2 \ell + 1 \geq d^\perp$ it is impossible to precisely obtain the ideal DQI state. This is because the number of errors $\ell$ is large enough so that, by starting from a codeword in $C^\perp$ and then flipping $\ell$ bits, the nearest codeword (in Hamming distance) to the resulting string may be a codeword other than the starting codeword. Therefore, in this section, we analyze $\langle f \rangle$ in the presence of a nonzero rate of decoding failures. Specifically, we consider the case that $\mathbf{v}$ is chosen uniformly at random and calculate $\mathbb{E}_{\mathbf{v}} \langle f \rangle$, as this averaging simplifies the analysis substantially, mainly due to the following fact.

\begin{lemma}
\label{lem:expectation_of_barA}
Let $\bar{A}^{(m,\ell)}$ be as defined in \eq{eq:barAdef}. Suppose $\mathbf{v}$ is chosen uniformly at random from $\mathbb{F}_2^m$. Then $\mathbb{E}_\mathbf{v} \bar{A}^{(m,\ell)} = A^{(m,\ell,0)}$, where $A^{(m,\ell,0)}$ is as defined in \eqref{eq:A_def}.
\end{lemma}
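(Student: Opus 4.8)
The plan is to compute the entrywise expectation directly from the definition \eqref{eq:barAdef}, exploiting the fact that the only $\mathbf{v}$-dependence sits in the sign $(-1)^{v_i + \mathbf{v}\cdot(\mathbf{y}+\mathbf{y}')}$ while the index set $S^{(i)}_{k,k'}$ is independent of $\mathbf{v}$. First I would push the expectation through the finite sums over $i$ and over $(\mathbf{y},\mathbf{y}') \in S^{(i)}_{k,k'}$, so that the only quantity left to evaluate is
\begin{equation}
\mathbb{E}_\mathbf{v}\left[(-1)^{v_i + \mathbf{v}\cdot(\mathbf{y}+\mathbf{y}')}\right] = \mathbb{E}_\mathbf{v}\left[(-1)^{\mathbf{v}\cdot(\mathbf{y}+\mathbf{y}'+\mathbf{e}_i)}\right].
\end{equation}

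The key step is character orthogonality over $\mathbb{F}_2^m$: for uniform $\mathbf{v}$ and any $\mathbf{z}\in\mathbb{F}_2^m$ one has $\mathbb{E}_\mathbf{v}[(-1)^{\mathbf{v}\cdot\mathbf{z}}] = \delta_{\mathbf{z},\mathbf{0}}$. Applying this with $\mathbf{z} = \mathbf{y}+\mathbf{y}'+\mathbf{e}_i$ annihilates every term except those with $\mathbf{y}' = \mathbf{y}+\mathbf{e}_i$. Such pairs automatically satisfy the defining condition $B^T(\mathbf{y}+\mathbf{y}'+\mathbf{e}_i)=B^T\mathbf{0}=\mathbf{0}$ of $S^{(i)}_{k,k'}$, so after taking expectation the matrix collapses to a pure count:
\begin{equation}
\mathbb{E}_\mathbf{v}\left[\bar{A}^{(m,\ell)}_{k,k'}\right] = \frac{1}{\sqrt{\binom{m}{k}\binom{m}{k'}}} \sum_{i=1}^m \left|\{\mathbf{y}\in\mathbb{F}_2^m : |\mathbf{y}|=k,\ |\mathbf{y}+\mathbf{e}_i|=k' \}\right|.
\end{equation}

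Next I would handle the weight bookkeeping. Since $\mathbf{y}+\mathbf{e}_i$ flips exactly the $i$-th coordinate, $|\mathbf{y}+\mathbf{e}_i| = |\mathbf{y}|\pm1$, so the count vanishes unless $|k-k'|=1$; in particular the expected diagonal is zero, matching the $d=0$ diagonal of $A^{(m,\ell,0)}$. For $k'=k+1$ the surviving $\mathbf{y}$ are exactly the weight-$k$ strings with $y_i=0$, of which there are $\binom{m-1}{k}$ for each $i$; summing over $i$ gives $m\binom{m-1}{k} = (m-k)\binom{m}{k}$. Finally, using $\binom{m}{k}/\binom{m}{k+1} = (k+1)/(m-k)$, the normalized entry becomes
\begin{equation}
\mathbb{E}_\mathbf{v}\left[\bar{A}^{(m,\ell)}_{k,k+1}\right] = \frac{(m-k)\binom{m}{k}}{\sqrt{\binom{m}{k}\binom{m}{k+1}}} = (m-k)\sqrt{\frac{k+1}{m-k}} = \sqrt{(k+1)(m-k)} = a_{k+1},
\end{equation}
which is precisely the superdiagonal entry of $A^{(m,\ell,0)}$ in \eqref{eq:A_def}; the subdiagonal agrees by symmetry of $\bar{A}^{(m,\ell)}$. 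This yields $\mathbb{E}_\mathbf{v}\bar{A}^{(m,\ell)} = A^{(m,\ell,0)}$.

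This statement is a direct evaluation rather than a deep argument, so I anticipate no genuine obstacle; the only delicate points are verifying that the surviving pairs lie in $S^{(i)}_{k,k'}$ (which is immediate once $\mathbf{y}+\mathbf{y}'+\mathbf{e}_i=\mathbf{0}$) and that the binomial normalization collapses cleanly to $a_{k+1}$.
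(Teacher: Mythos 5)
Your proposal is correct and takes essentially the same route as the paper: both proofs hinge on the character-orthogonality fact $\mathbb{E}_\mathbf{v}\bigl[(-1)^{\mathbf{v}\cdot\mathbf{z}}\bigr]=\delta_{\mathbf{z},\mathbf{0}}$, which annihilates exactly the terms with $\mathbf{y}+\mathbf{y}'+\mathbf{e}_i\neq\mathbf{0}$ (the paper's set $S^{(i,1)}_{k,k'}$) while leaving the terms with $\mathbf{y}'=\mathbf{y}+\mathbf{e}_i$ (the paper's $S^{(i,0)}_{k,k'}$). The only difference is that you explicitly carry out the combinatorial count showing the surviving terms give the tridiagonal entries $\sqrt{(k+1)(m-k)}$, a verification the paper asserts with the remark that ``one can also verify this by direct calculation.''
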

\begin{proof}
Recall that
\begin{equation}
    \bar{A}^{(m,\ell)}_{k,k'}  =  \frac{1}{\sqrt{\binom{m}{k} \binom{m}{k'}}} \sum_{i=1}^m \sum_{(\mathbf{y},\mathbf{y'}) \in S^{(i)}_{k,k'}} (-1)^{v_i + \mathbf{v} \cdot (\mathbf{y} + \mathbf{y}')} \label{eq:abar_recall}
\end{equation}
for $0 \leq k,k' \leq \ell$, and 
\begin{equation}
    S^{(i)}_{k,k'} =  \{(\mathbf{y},\mathbf{y}') \in \mathbb{F}_2^m \times \mathbb{F}_2^m : |\mathbf{y}| = k, |\mathbf{y}'| = k', B^T(\mathbf{y} + \mathbf{y}' + \mathbf{e}_i) = \mathbf{0} \}.
\end{equation}
We can express $S^{(i)}_{k,k'}$ as the union of two disjoint pieces
    \begin{eqnarray}
        S_{k,k'}^{(i)} & = & S_{k,k'}^{(i,0)} \cup S_{k,k'}^{(i,1)} \\
        \label{eq:def_Si0}
        S_{k,k'}^{(i,0)} & = & \{ (\mathbf{y},\mathbf{y}') \in S_{k,k'}^{(i)} : \mathbf{y} + \mathbf{y}' + \mathbf{e}_i = \mathbf{0} \} \\
        S_{k,k'}^{(i,1)} & = & \{ (\mathbf{y},\mathbf{y}') \in S_{k,k'}^{(i)} : \mathbf{y} + \mathbf{y}' + \mathbf{e}_i \neq \mathbf{0} \}.
    \end{eqnarray}
We can then write $\bar{A}^{(m,\ell)}$ as the corresponding sum of two contributions
\begin{eqnarray}
    \label{eq:twoterms}
    \bar{A}^{(m,\ell)}_{k,k'} & = & \frac{1}{\sqrt{\binom{m}{k} \binom{m}{k'}}} \sum_{i=1}^m \sum_{(\mathbf{y},\mathbf{y'}) \in S^{(i,0)}_{k,k'}} (-1)^{\mathbf{v} \cdot (\mathbf{y} + \mathbf{y}' + \mathbf{e}_i)} \nonumber \\
    & + & \frac{1}{\sqrt{\binom{m}{k} \binom{m}{k'}}} \sum_{i=1}^m \sum_{(\mathbf{y},\mathbf{y'}) \in S^{(i,1)}_{k,k'}} (-1)^{\mathbf{v} \cdot (\mathbf{y} + \mathbf{y}' + \mathbf{e}_i)}.
\end{eqnarray}
We next average over $\mathbf{v}$. By the definition of $S_{k,k'}^{(i,0)}$, all terms in the sum have $\mathbf{y} + \mathbf{y}' + \mathbf{e}_i=\mathbf{0}$ and therefore the first term is independent of $\mathbf{v}$. This renders the averaging over $\mathbf{v}$ trivial for the first term. By the definition of $S_{k,k'}^{(i,1)}$, the second term contains exclusively contributions where $\mathbf{y} + \mathbf{y}' + \mathbf{e}_i \neq \mathbf{0}$. By the identity $\frac{1}{2^m} \sum_{\mathbf{v} \in \mathbb{F}_2^m} (-1)^{\mathbf{v} \cdot \mathbf{z}} = \delta_{\mathbf{z},\mathbf{0}}$ we see that the second term makes zero contribution to the average. Thus we obtain
\begin{equation}
    \label{eq:EvbarA}
    \mathbb{E}_{\mathbf{v}} \bar{A}^{(m,\ell)}_{k,k'} = \frac{1}{\sqrt{\binom{m}{k} \binom{m}{k'}}} \sum_{i=1}^m \sum_{(\mathbf{y},\mathbf{y'}) \in S^{(i,0)}_{k,k'}} (-1)^{\mathbf{v} \cdot (\mathbf{y} + \mathbf{y}' + \mathbf{e}_i)}.
\end{equation}

We next observe that $S^{(i,0)}_{k,k'}$ contains the terms where $\mathbf{y} + \mathbf{y}' + \mathbf{e}_i = \mathbf{0}$ which are exactly the terms that contribute at $2 \ell + 1 < d^\perp$, whereas $S^{(i,1)}_{k,k'}$ contains the terms where $\mathbf{y} + \mathbf{y}' + \mathbf{e}_i \in C^\perp\setminus\{\mathbf{0}\}$ which, prior to averaging, are the new contribution arising when $\ell$ exceeds this bound. Consequently, $\mathbb{E}_{\mathbf{v}} \bar{A}^{(m,\ell)}_{k,k'}$ is  exactly equal to $A_{k,k'}^{(m,\ell,0)}$, as defined in \eq{eq:A_def}. (One can also verify this by direct calculation.) That is, we have obtained
\begin{equation}
    \mathbb{E}_\mathbf{v} \bar{A}^{(m,\ell)}_{k,k'} = A^{(m,\ell,0)}_{k,k'},
\end{equation}
which completes the proof of the lemma.
\end{proof}

\subsection{Imperfect decoding}

A deterministic decoder partitions the set of errors $\mathbb{F}_2^m = \mathcal{D} \cup \mathcal{F}$ into the set $\mathcal{D}$ of errors $\mathbf{y}$ correctly identified by the decoder based on the syndrome $B^T\mathbf{y}$ and the set $\mathcal{F}$ of errors misidentified. The Hamming shell $\mathcal{E}_k$ of radius $k$ is analogously partitioned $\mathcal{E}_k = \mathcal{D}_k \cup \mathcal{F}_k$. We will quantify decoder's failure rate using $\varepsilon_k := |\mathcal{F}_k| / {m \choose k}$ and $\varepsilon := \max_{0 \leq k \leq \ell} \varepsilon_k$.

The quantum state of the error and syndrome registers after the error uncomputation step of the DQI algorithm using an imperfect decoder is
\begin{align}
    \sum_{k=0}^\ell \frac{w_k}{\sqrt{{m \choose k}}} \left( \sum_{\substack{\mathbf{y}\in\mathcal{D}_k \\ |\mathbf{y}|=k}} (-1)^{\mathbf{v}\cdot\mathbf{y}}|\mathbf{0}\rangle|B^T\mathbf{y}\rangle + \sum_{\substack{\mathbf{y}\in\mathcal{F}_k \\ |\mathbf{y}|=k}} (-1)^{\mathbf{v}\cdot\mathbf{y}}|\mathbf{y}\oplus\mathbf{y}'\rangle|B^T\mathbf{y}\rangle \right)
\end{align}
where $\mathbf{y} \ne \mathbf{y}'$. After uncomputing the error register, we postselect on the register being $|\mathbf{0}\rangle$. If the postselection is successful, then the syndrome register is in the quantum state proportional to the following unnormalized state vector
\begin{align}\label{eq:imperfect_dqi_state_def}
    |\widetilde{P}_\mathcal{D}(f)\rangle := \sum_{k=0}^\ell \frac{w_k}{\sqrt{{m \choose k}}} \sum_{\substack{\mathbf{y}\in\mathcal{D}_k \\ |\mathbf{y}|=k}} (-1)^{\mathbf{v}\cdot\mathbf{y}}|B^T\mathbf{y}\rangle.
\end{align}

The following theorem describes the effect that decoding failure rate $\varepsilon$ has on the approximation ratio achieved by DQI. We do not assume that $2\ell + 1 < d^\perp$.

\begin{theorem}
\label{thm:imperfect_decoding} 
 Given $B \in \mathbb{F}_2^{m \times n}$ and $\mathbf{v} \in \mathbb{F}_2^m$, let $f$ be the objective function $f(\mathbf{x}) = \sum_{i=1}^m (-1)^{v_i + \mathbf{b}_i \cdot \mathbf{x}}$. Let $P$ be any degree-$\ell$ polynomial and let $P(f)=\sum_{k=0}^{\ell} w_k P^{(k)}\left(f_1, \ldots, f_m\right) / \sqrt{2^n \binom{m}{k}}$ be the decomposition of $P(f)$ as a linear combination of elementary symmetric polynomials. Let $|P_\mathcal{D}(f)\rangle$ denote a DQI state prepared using an imperfect decoder that misidentifies $\varepsilon_k{m \choose k}$ errors of Hamming weight $k$ and let $\langle f \rangle$ be the expected objective value for the symbol string resulting from the measurement of this state in the computational basis. If $\mathbf{v} \in \mathbb{F}_2^m$ is chosen uniformly at random, then
\begin{equation}
    \label{eq:imperfect_avg_beyond}
    \mathbb{E}_{\mathbf{v}} \langle f\rangle \geq \frac{\mathbf{w}^\dagger \left[ A^{(m,\ell,0)} - 2 \varepsilon (m + 1) \right] \mathbf{w}}{\sum_{k = 0}^\ell w_k^2 (1-\varepsilon_k)} \geq \left( \frac{\mathbf{w}^\dagger A^{(m,\ell,0)} \mathbf{w}}{\mathbf{w}^\dagger \mathbf{w}} - 2 \varepsilon (m + 1) \right),
\end{equation}
where $\varepsilon = \max_{0 \le k \le \ell} \varepsilon_k$ and $A^{(m,\ell,0)}$ is the tridiagonal matrix defined in equation \eqref{eq:A_def}. Moreover, if $\ell \leq m/2$ and one chooses $\mathbf{w}$ to be the principal eigenvector of $A^{(m,\ell,0)}$ then \eq{eq:imperfect_avg_beyond} yields the following lower bound in the limit of large $\ell$ and $m$ with the ratio $\mu = \ell/m$ fixed:
For a random $\mathbf{v} \in \mathbb{F}_2^m$,
\begin{align}
    \label{eq:imperfect_asymptotic_beyond_avg}
    \lim_{\substack{m \to \infty \\ \ell/m = \mu}} \frac{\mathbb{E}_{\mathbf{v}}\langle f\rangle}{m}  & \geq 2 \left(\sqrt{ \frac{\ell}{m} \left( 1 - \frac{\ell}{m} \right)} - \varepsilon\right).
\end{align}
\end{theorem}

Before proving Theorem \ref{thm:imperfect_decoding}, we generalize lemma~\ref{lem:Mlemma} and lemma~\ref{lem:Alemma_beyond_2l} to the state $|P_\mathcal{D}(f)\rangle$ prepared by DQI with an imperfect decoder upon successful postselection.

\begin{lemma}
    \label{lem:Mlemma_for_imperfect_decoding}
    The squared norm of $|\widetilde{P}_\mathcal{D}(f)\rangle$ is 
    \begin{equation}
    \langle \widetilde{P}_\mathcal{D}(f) | \widetilde{P}_\mathcal{D}(f) \rangle = \sum_{k = 0}^\ell w_k^2 (1-\varepsilon_k) \leq \mathbf{w}^\dagger \mathbf{w}.
    \end{equation}
\end{lemma}
\begin{proof}
We first observe that all syndromes $\ket{B^T \mathbf{y}}$ for $\mathbf{y} \in \mathcal{D}$ are necessarily distinct bit strings and thus orthogonal quantum states. This follows from the fact that if a deterministic decoder correctly recovers $\mathbf{y}$ from the syndrome $B^T \mathbf{y}$, then it must fail on all $\mathbf{y}' \not = \mathbf{y}$ with $B^T \mathbf{y}' = B^T \mathbf{y}$. The squared norm of $|\widetilde{P}_\mathcal{D}(f)\rangle$ is thus \begin{equation}
     \langle \widetilde{P}_\mathcal{D}(f) | \widetilde{P}_\mathcal{D}(f) \rangle = \sum_{k = 0}^\ell \frac{w_k^2}{\binom{m}{k}} |\mathcal{D}_k| = \sum_{k = 0}^\ell w_k^2 (1-\varepsilon_k),
\end{equation} as claimed. 
\end{proof}

\begin{lemma}
\label{lem:Alemma_for_imperfect_decoding}
For $B \in \mathbb{F}_2^{m \times n}$ and $\mathbf{v} \in \mathbb{F}_2^m$, let $f$ be the objective function $f(\mathbf{x}) = \sum_{i=1}^m (-1)^{v_i + \mathbf{b}_i \cdot  \mathbf{x}}$. Let $P$ be any degree-$\ell$ polynomial and let $P(f)=\sum_{k=0}^{\ell} w_k P^{(k)}\left(f_1, \ldots, f_m\right) / \sqrt{2^n \binom{m}{k}}$ be the decomposition of $P(f)$ as a linear combination of elementary symmetric polynomials. Let $\langle f \rangle$ be the expected objective value for the symbol string obtained upon measuring the imperfect DQI state $|P_\mathcal{D}(f)\rangle$ in the computational basis. If the weights $w_k$ are such that $|P_\mathcal{D}(f)\rangle$ is normalized, then
\begin{equation}
    \langle f \rangle = \mathbf{w}^\dag \bar{A}^{(m,\ell,\mathcal{D})} \mathbf{w}
\end{equation}
where $\bar{A}^{(m,\ell,\mathcal{D})}$ is the $(\ell+1)\times(\ell+1)$ symmetric matrix defined by
\begin{equation}
    \label{eq:imperfect_barAdef}
    \bar{A}^{(m,\ell,\mathcal{D})}_{k,k'}  =  \frac{1}{\sqrt{\binom{m}{k} \binom{m}{k'}}} \sum_{i=1}^m \sum_{(\mathbf{y},\mathbf{y'}) \in S^{(i,\mathcal{D})}_{k,k'}} (-1)^{v_i + \mathbf{v} \cdot (\mathbf{y} + \mathbf{y}')}
\end{equation}
for $0 \leq k,k' \leq \ell$, and 
\begin{equation}
    S^{(i,\mathcal{D})}_{k,k'} =  \{(\mathbf{y},\mathbf{y}') \in \mathcal{D}_k \times \mathcal{D}_{k'} : B^T(\mathbf{y} + \mathbf{y}' + \mathbf{e}_i) = \mathbf{0} \}
\end{equation}
for one-hot vectors $\mathbf{e}_1,\ldots,\mathbf{e}_m$ in $\mathbb{F}_2^m$.
\end{lemma}
\begin{proof}
The proof is obtained from the proof of Lemma \ref{lem:Alemma_beyond_2l} by replacing each sum ranging over $\mathcal{E}_k$ with a sum ranging over $\mathcal{D}_k$ and the ideal DQI state $|P(f)\rangle$ with the imperfect DQI state $|P_\mathcal{D}(f)\rangle$.
\end{proof}

\subsection{Average-case \texorpdfstring{$\mathbf{v}$} with imperfect decoding}

The expected objective value achieved by sampling from a normalized DQI state $|P_\mathcal{D}(f)\rangle$ is
\begin{equation}
    \label{eq:expected_f_as_ratio}
    \langle f \rangle = \frac{\mathbf{w}^\dag \bar{A}^{(m,\ell,\mathcal{D})} \mathbf{w}}{\sum_{k = 0}^\ell w_k^2 (1-\varepsilon_k)}  \geq \frac{\mathbf{w}^\dag \bar{A}^{(m,\ell,\mathcal{D})} \mathbf{w}}{\mathbf{w}^\dagger \mathbf{w}}. 
\end{equation}

Next, we find the expectation of $\bar{A}^{(m,\ell,\mathcal{D})}$.
\begin{lemma}
\label{lem:expected_barA_for_imperfect_decoding}
Let $\bar{A}^{(m,\ell,\mathcal{D})}$ be defined as in \eq{eq:imperfect_barAdef}. Suppose $\mathbf{v}$ is chosen uniformly at random from $\mathbb{F}_2^m$. Then
\begin{equation}
    \mathbb{E}_\mathbf{v} \bar{A}^{(m,\ell,\mathcal{D})} = A^{(m,\ell,0)} -  E^{(m,\ell,\mathcal{F})} \label{eq:goodstuff}
\end{equation}
where $A^{(m,\ell,0)}$ is defined as in \eqref{eq:A_def} and
\begin{equation}
    \label{eq:imperfect_Edef}
    E^{(m,\ell,\mathcal{F})}_{k,k'}  =  \frac{1}{\sqrt{\binom{m}{k} \binom{m}{k'}}} \sum_{i=1}^m  |T^{(i,\mathcal{F})}_{k,k'}|
\end{equation}
for $0 \leq k,k' \leq \ell$ with
\begin{equation}
    T^{(i,\mathcal{F})}_{k,k'} =  \{(\mathbf{y},\mathbf{y}') \in \mathcal{E}_k \times \mathcal{F}_{k'} \cup \mathcal{F}_k \times \mathcal{E}_{k'}: \mathbf{y} + \mathbf{y}' + \mathbf{e}_i = \mathbf{0} \}.
\end{equation}
\end{lemma}
\begin{proof}
We can partition $S^{(i,\mathcal{D})}_{k,k'}$ into two disjoint subsets
    \begin{eqnarray}
        S_{k,k'}^{(i,\mathcal{D})} & = & S_{k,k'}^{(i,0,\mathcal{D})} \cup S_{k,k'}^{(i,1,\mathcal{D})} \\
        S_{k,k'}^{(i,0,\mathcal{D})} & = & \{ (\mathbf{y},\mathbf{y}') \in S_{k,k'}^{(i,\mathcal{D})} : \mathbf{y} + \mathbf{y}' + \mathbf{e}_i = \mathbf{0} \} \\
        S_{k,k'}^{(i,1,\mathcal{D})} & = & \{ (\mathbf{y},\mathbf{y}') \in S_{k,k'}^{(i,\mathcal{D})} : \mathbf{y} + \mathbf{y}' + \mathbf{e}_i \neq \mathbf{0} \}
    \end{eqnarray}
so that
\begin{eqnarray}
    \bar{A}^{(m,\ell,\mathcal{D})}_{k,k'} & = & \frac{1}{\sqrt{\binom{m}{k} \binom{m}{k'}}} \sum_{i=1}^m \sum_{(\mathbf{y},\mathbf{y'}) \in S^{(i,0,\mathcal{D})}_{k,k'}} (-1)^{v_i + \mathbf{v} \cdot (\mathbf{y} + \mathbf{y}')} \nonumber \\
    & + & \frac{1}{\sqrt{\binom{m}{k} \binom{m}{k'}}} \sum_{i=1}^m \sum_{(\mathbf{y},\mathbf{y'}) \in S^{(i,1,\mathcal{D})}_{k,k'}} (-1)^{v_i + \mathbf{v} \cdot (\mathbf{y} + \mathbf{y}')}.
    \label{eq:barA_is_sum_over_S0D_S1D}
\end{eqnarray}
We next observe that when we average over $\mathbf{v}$, the second term in \eq{eq:barA_is_sum_over_S0D_S1D} vanishes, because
\begin{eqnarray}
\mathbb{E}_{\mathbf{v}} \left( \sum_{(\mathbf{y},\mathbf{y}') \in S_{k,k'}^{(i,1,\mathcal{D})}} (-1)^{v_i + \mathbf{v} \cdot (\mathbf{y} + \mathbf{y}')} \right) & = & \sum_{(\mathbf{y},\mathbf{y}') \in S_{k,k'}^{(i,1,\mathcal{D})}} \mathbb{E}_{\mathbf{v}} \left( (-1)^{v_i + \mathbf{v} \cdot (\mathbf{y} + \mathbf{y}')} \right)\\
& = & \sum_{(\mathbf{y},\mathbf{y}') \in S_{k,k'}^{(i,1,\mathcal{D})}} \frac{1}{2^m} \sum_{\mathbf{v} \in \mathbb{F}_2^m} (-1)^{\mathbf{v} \cdot (\mathbf{y} + \mathbf{y}' + \mathbf{e}_i)} \\
& = & \sum_{(\mathbf{y},\mathbf{y}') \in S_{k,k'}^{(i,1,\mathcal{D})}} \delta_{\mathbf{y} + \mathbf{y}' + \mathbf{e}_i,\mathbf{0}} \\
& = & 0,
\end{eqnarray}
where the last equality follows from the definition of $S_{k,k'}^{(i,1,\mathcal{D})}$. Hence,
\begin{equation}
    \mathbb{E}_{\mathbf{v}} \bar{A}^{(m,\ell,\mathcal{D})}_{k,k'} = \mathbb{E}_{\mathbf{v}} \frac{1}{\sqrt{\binom{m}{k} \binom{m}{k'}}} \sum_{i=1}^m \sum_{(\mathbf{y},\mathbf{y'}) \in S^{(i,0,\mathcal{D})}_{k,k'}} (-1)^{v_i + \mathbf{v} \cdot (\mathbf{y} + \mathbf{y}')}.
\end{equation}
We next examine \eq{eq:EvbarA} and observe that $\mathbb{E}_{\mathbf{v}} \bar{A}^{(m,\ell,\mathcal{D})}_{k,k'}$ and $\mathbb{E}_{\mathbf{v}} \bar{A}^{(m,\ell)}_{k,k'}$ differ only by replacing the summation over $S^{(i,0,\mathcal{D})}_{k,k'}$ with a summation over $S^{(i,0)}_{k,k'}$.
Recalling the definition of $S^{(i,0)}_{k,k'}$ from equation \eqref{eq:def_Si0}, we observe that $S^{(i,0)}_{k,k'} = S^{(i,0,\mathcal{D})}_{k,k'} \cup T^{(i,\mathcal{F})}_{k,k'}$. Hence,
\begin{equation}
\mathbb{E}_{\mathbf{v}} \bar{A}^{(m,\ell)}_{k,k'} = \mathbb{E}_{\mathbf{v}} \bar{A}^{(m,\ell,\mathcal{D})}_{k,k'} + \mathbb{E}_{\mathbf{v}} \frac{1}{\sqrt{\binom{m}{k} \binom{m}{k'}}} \sum_{i=1}^m \sum_{(\mathbf{y},\mathbf{y}') \in T^{(i,\mathcal{F})}} (-1)^{\mathbf{v} \cdot (\mathbf{y} + \mathbf{y}'+\mathbf{e}_i)}.
\end{equation}
By the definition of $T^{(i,\mathcal{F})}$ one sees that all terms in the sum have $\mathbf{v} \cdot (\mathbf{y} + \mathbf{y}'+\mathbf{e}_i) = \mathbf{0}$ and hence
\begin{equation}
\mathbb{E}_{\mathbf{v}} \bar{A}^{(m,\ell)}_{k,k'} = \mathbb{E}_{\mathbf{v}} \bar{A}^{(m,\ell,\mathcal{D})}_{k,k'} + \frac{1}{\sqrt{\binom{m}{k} \binom{m}{k'}}} \sum_{i=1}^m |T^{(i,\mathcal{F})}|.
\end{equation}
By Lemma \ref{lem:expectation_of_barA}, this simplifies to
\begin{equation}
\bar{A}^{(m,\ell,0)}_{k,k'} = \mathbb{E}_{\mathbf{v}} A^{(m,\ell,\mathcal{D})}_{k,k'} + \frac{1}{\sqrt{\binom{m}{k} \binom{m}{k'}}} \sum_{i=1}^m |T^{(i,\mathcal{F})}|.
\end{equation}
By \eq{eq:imperfect_Edef} we can rewrite this as
\begin{equation}
\bar{A}^{(m,\ell,0)}_{k,k'} = \mathbb{E}_{\mathbf{v}} A^{(m,\ell,\mathcal{D})}_{k,k'} + E^{(m,\ell,\mathcal{F})}_{k,k'},
\end{equation}
which rearranges to \eq{eq:goodstuff} completing the proof.
\end{proof}

The last remaining ingredient for the proof of theorem \ref{thm:imperfect_decoding} is the following upper bound on the error term identified in Lemma \ref{lem:expected_barA_for_imperfect_decoding}.

\begin{lemma}
\label{lem:bound_on_norm_E}
Let $\varepsilon_k = |\mathcal{F}_k|/{m \choose k}$ and $\varepsilon = \max_{0 \leq k \leq \ell} \varepsilon_k$. Then,
\begin{equation}
    \|E^{(m,\ell,\mathcal{F})}\| \leq 2 \varepsilon (m + 1).
\end{equation}
\end{lemma}
\begin{proof}
We note that $E^{(m,\ell,\mathcal{F})}_{k,k'}=0$ unless $k=k'\pm 1$, so $E^{(m,\ell,\mathcal{F})}$ is tridiagonal with zeros on the diagonal. By \eq{eq:imperfect_Edef} we have
\begin{align}
    E^{(m,\ell,\mathcal{F})}_{k,k+1} = \frac{1}{\sqrt{\binom{m}{k} \binom{m}{k+1}}} \sum_{i=1}^m |T^{(i,\mathcal{F})}_{k,k+1}|.
\end{align}
Note that if $(\mathbf{y}, \mathbf{y}')\in T^{(i,\mathcal{F})}_{k,k+1}$, then $\mathbf{y}_i=0$ and $\mathbf{y}'_i=1$. Therefore, every pair $(\mathbf{y}, \mathbf{y}')$ with $\mathbf{y}\in\mathcal{F}_k$ contributes to at most $m-k$ out of the $m$ terms in the sum above. Similarly, every pair $(\mathbf{y}, \mathbf{y}')$ with $\mathbf{y}'\in\mathcal{F}_{k+1}$ contributes to at most $k+1$ terms. Consequently,
\begin{align}
    E^{(m,\ell,\mathcal{F})}_{k,k+1}
    & \leq \frac{1}{\sqrt{\binom{m}{k} \binom{m}{k+1}}} \big[ (m-k)|\mathcal{F}_k|+(k+1)|\mathcal{F}_{k+1}| \big] \\
    & = \frac{1}{\sqrt{\binom{m}{k} \binom{m}{k+1}}} \left[ (m-k) \varepsilon_k \binom{m}{k} +(k+1) \varepsilon_{k+1} \binom{m}{k+1} \right] \\
    & \leq (\varepsilon_k + \varepsilon_{k+1}) \sqrt{(k+1)(m-k)} \\
    & \leq \varepsilon (m + 1)
\end{align}
and by Gershgorin's circle theorem
\begin{equation}
    \|E^{(m,\ell,\mathcal{F})}\| \leq 2 \varepsilon (m + 1)
\end{equation}
completing the proof of the Lemma.
\end{proof}

\subsection{Proof of Theorem \ref{thm:imperfect_decoding}}

Finally, we prove Theorem \ref{thm:imperfect_decoding}.

\begin{proof}
Equation \eqref{eq:expected_f_as_ratio}  and Lemma \ref{lem:expected_barA_for_imperfect_decoding} imply that
\begin{equation}
    \mathbb{E}_\mathbf{v} \langle f\rangle \geq \frac{\mathbf{w}^{\dagger} \left[ A^{(m,\ell,0)} -  E^{(m,\ell,\mathcal{F})} \right] \mathbf{w}}{\sum_{k = 0}^\ell w_k^2 (1-\varepsilon_k)} \geq \left( \frac{\mathbf{w}^{\dagger} A^{(m,\ell,0)}\mathbf{w}}{\mathbf{w}^{\dagger} \mathbf{w}} - \frac{\mathbf{w}^{\dagger} E^{(m,\ell,\mathcal{F})}\mathbf{w}}{\mathbf{w}^{\dagger} \mathbf{w}} \right)
\end{equation}
which in light of Lemma \ref{lem:bound_on_norm_E} becomes
\begin{equation}
    \mathbb{E}_\mathbf{v} \langle f\rangle \geq \frac{\mathbf{w}^\dagger \left[ A^{(m,\ell,0)} - 2 \varepsilon (m + 1) \right] \mathbf{w}}{\sum_{k = 0}^\ell w_k^2 (1-\varepsilon_k)} \geq \left( \frac{\mathbf{w}^{\dagger} A^{(m,\ell,0)}\mathbf{w}}{\mathbf{w}^{\dagger} \mathbf{w}} - 2 \varepsilon (m + 1) \right).
\end{equation}

This proves the first part of Theorem \ref{thm:imperfect_decoding}. The second part, equation \eqref{eq:imperfect_asymptotic_beyond_avg}, follows by substituting the asymptotic formula for the leading eigenvalue of $A^{(m,\ell,0)}$ derived in Lemma \ref{lem:asymptotic_formula_for_max_eigenvalue}.
\end{proof}

\begin{remark} In practice, we find that decoding success probability decreases as the number of errors increases. Therefore, we can use the empirical failure probability of a classical decoder on uniformly random errors of Hamming weight $\ell$ as an upper bound on $\varepsilon$, which would be the failure probability of a decoder applied to a distribution over error weights zero to $\ell$ determined by the vector $\mathbf{w}$.
\end{remark}

\section{Other Optimization Algorithms}
\label{sec:classical}

In this section we survey algorithms against which DQI can be compared. In \sect{sec:local}, we consider local search heuristics such as simulated annealing and greedy optimization. In \sect{sec:sa_convergence} we comment in more detail on the convergence of simulated annealing when one adds more sweeps. In \sect{sec:trunc} we analyze Prange's algorithm in which one discards all but $n$ of the $m$ constraints on the $n$ variables and then solves the resulting linear system. In \sect{sec:advrand}, we summarize the AdvRand algorithm of \cite{BM15}. In \sect{sec:QAOA}, we discuss the Quantum Approximate Optimization Algorithm (QAOA). Though necessarily not exhaustive, we believe these algorithms constitute a thorough set of general-purpose optimization strategies to benchmark DQI against on random sparse max-XORSAT. Lastly, we analyze the two classes of algebraic algorithms that pose the most plausible challenge to DQI on our OPI problem and find that they are not successful in our parameter regime. Specifically, in \sect{sec:list-recovery} we consider list-recovery algorithms, and in \sect{sec:lattice}, we summarize the lattice-based heuristic of \cite{BN00}.

\subsection{General Local Search Heuristics}
\label{sec:local}

In local search methods, one makes a sequence of local moves in the search space, such as by flipping an individual bit, and preferentially accepts moves that improve the objective function. This class of heuristics includes simulated annealing, parallel tempering, TABU search, greedy algorithms, and some quantum-inspired optimization methods. For simplicity, we will restrict our analysis to max-XORSAT and consider only local search algorithms in which at each move a single variable among $x_1,\ldots,x_n$ is flipped between 1 and 0. The generalization to single-symbol-flip updates applied to max-LINSAT is straightforward.

Let $C_i$ be the set of constraints containing the variable $x_i$. In Gallager's ensemble $|C_i| = D$ for all $i$. For assignment $\mathbf{x} \in \mathbb{F}_2^n$ let $S_i(\mathbf{x})$ be the number of constraints in $C_i$ that are satisfied. Consider an assignment $\mathbf{x}$ such that a fraction $\phi$ of the $m$ constraints are satisfied. Then, modeling $C_1,\ldots,C_m$ as random subsets, we have
\begin{equation}
    \label{eq:pmodel}
    \mathrm{Pr} \left[ S_i(\mathbf{x}) = s \right] = \binom{D}{s} \phi^s (1-\phi)^{D-s}.
\end{equation}
Next, consider the move $\mathbf{x} \to \mathbf{x}'$ induced by flipping bit $i$. This causes all satisfied constraints in $C_i$ to become unsatisfied and vice-versa. Hence, such a move induces the change
\begin{equation}
    S_i(\mathbf{x}') = D - S_i(\mathbf{x}).
\end{equation}
This will be an improvement in the objective value if and only if $S_i(\mathbf{x}) < D/2$. According to \eq{eq:pmodel} the probability $P_i^{(+)}$ that the flip is an improvement is
\begin{equation}
    P_i^{(+)} = \sum_{s=0}^{\lfloor \frac{D-1}{2} \rfloor} \binom{D}{s} \phi^s (1-\phi)^{D-s}.
\end{equation}
By Hoeffding's inequality, we have that
\begin{equation}\label{eq:HoeffdingPPlus}
    P_i^{(+)} \leq \exp \left( - 2(\phi-1/2)^2 D \right).
\end{equation}

For large $\phi$ the probability $P_i^{(+)}$ becomes very small. When the probability of making an upward move becomes extremely small, the local optimization algorithm will no longer be able to achieve further improvement in any reasonable number of steps. If the algorithm takes a total of $N$ steps then, by the union bound and \eq{eq:HoeffdingPPlus}, the probability of finding such a move in any of the steps is upper bounded by
\begin{equation}
    P_{\mathrm{success}}^{\mathrm{bound}}(\phi) = N \exp \left( - 2(\phi -1/2)^2 D \right).
\end{equation}
From this we can see what is the highest value of $\phi$ for which the success probability remains significant. For example, if we set $P_{\mathrm{success}}^{\mathrm{bound}}(\phi_{\max}) = 1/2$ and solve for $\phi_{\max}$ the result is
\begin{equation}
    \label{eq:logfactor}
    \phi_{\max} = \frac{1}{2} + \sqrt{\frac{\log N + \log 2}{2D}}
\end{equation}

This analysis is only approximate. Indeed, one can see that the model cannot hold indefinitely as $N$ becomes larger, because eventually $\phi_{\max}$ becomes limited by the true optimum. Let us now compare it with computer experiment. In Fig. \ref{fig:SAfits} we show best fits to the satisfaction fraction versus $D$ with $k/D$ fixed at $1/10$, $1/2$, and $9/10$ for simulated annealing and greedy descent (which is equivalent to zero-temperature simulated annealing). We find that for each choice of $k/D$, $\phi_{\max}-1/2$ fits well to $c D^{-\nu}$ for some constant $c$ and some power $\nu$ but that the power $\nu$ is slightly smaller than $1/2$. We believe this to be a finite-size effect, as it has been shown \cite{MH22} that for an ensemble of random degree-$D$ max-$k$-XORSAT instances differing only slightly from the Gallager ensemble, the exact optimum scales like $\frac{1}{2}+\frac{P_k}{\sqrt{D}}$. Thus we find that the functional form
\begin{equation}
    \phi_{\max} = \frac{1}{2} + \frac{c}{\sqrt{D}}
\end{equation}
predicted by this argument is in reasonably good agreement with our experimental observations.

\begin{figure}
    \[
        \hspace{-15pt} \begin{array}{cc} \includegraphics[width=0.5\textwidth]{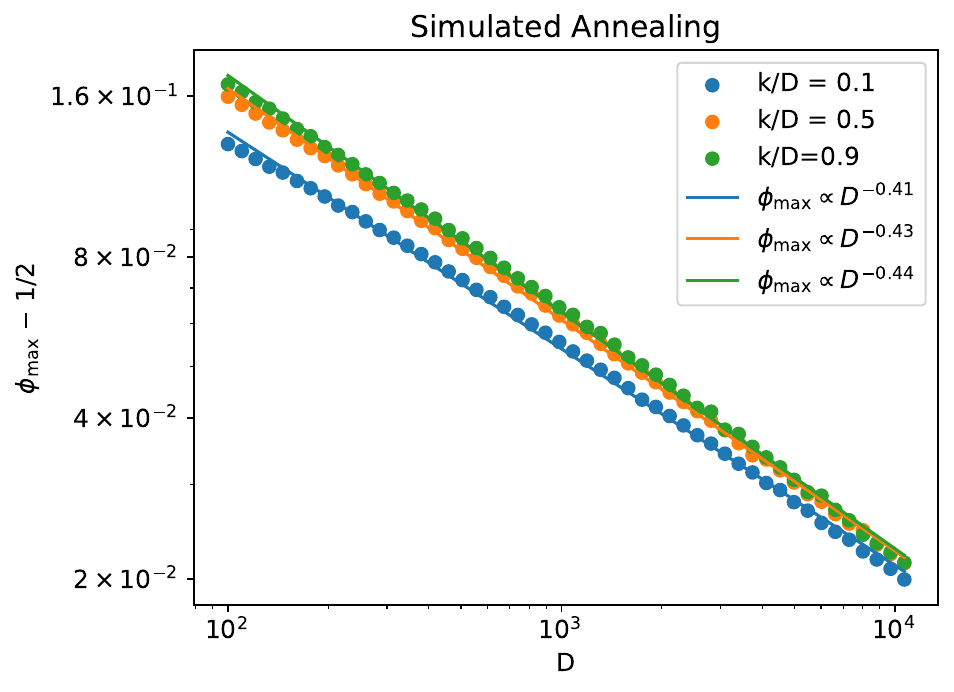} & \includegraphics[width=0.5\textwidth]{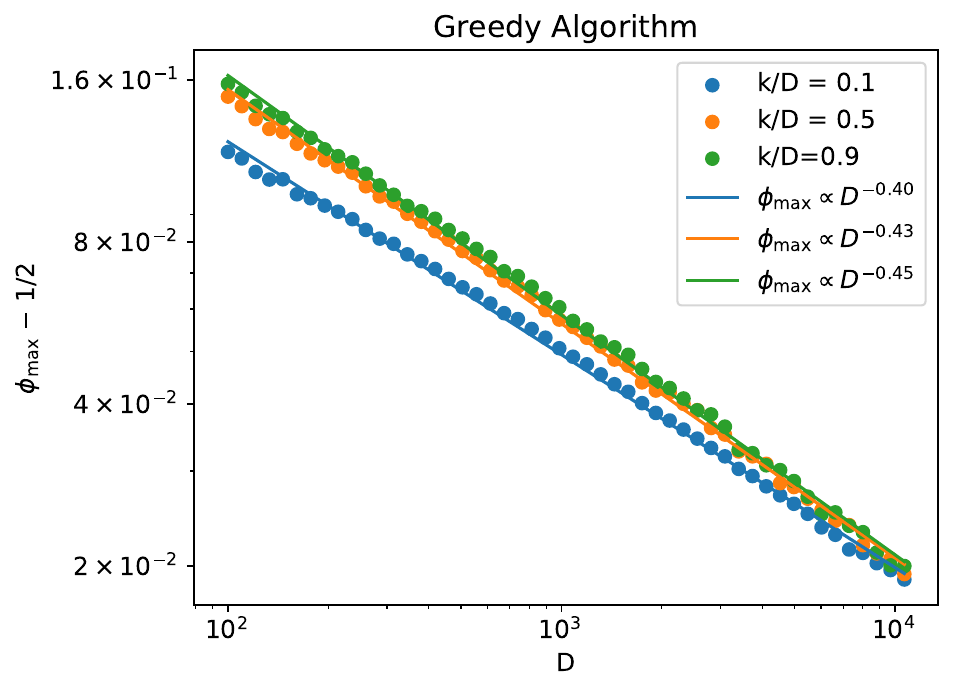} \end{array}
    \]
    \caption{The approximation achieved by simulated annealing (left) and greedy optimization (right) in our computer experiments at $n=10^5$, on a log-log scale, where each constraint contains $k$ variables and each variable is contained in $D$ constraints. The lines illustrates linear least-squares curve fits to the log-log data. The ensemble of max-$k$-XORSAT instances is formally defined in Appendix \ref{app:random_regular}. In these anneals we use $5,000$ sweeps with single-bit updates, and linearly increasing inverse temperature $\beta$.}
    \label{fig:SAfits}
\end{figure}

\subsection{Convergence of Simulated Annealing}
\label{sec:sa_convergence}

We next investigate whether the $N$-scaling predicted by the argument in \sect{sec:local} yields a good model of the behavior of simulated annealing. This scaling cannot persist indefinitely because eventually $\phi_{\max}$ is limited by the true optimum. In fact, we find that, in contrast to the $D$-scaling, the $N$-scaling suggested by the above argument is not corroborated by experimental evidence. Instead, empirical $N$-scaling fits much better to power-law convergence, as illustrated in Fig. \ref{fig:power_law}.

\begin{figure}
    \[
        \begin{array}{c}
            \includegraphics[width=0.8\textwidth]{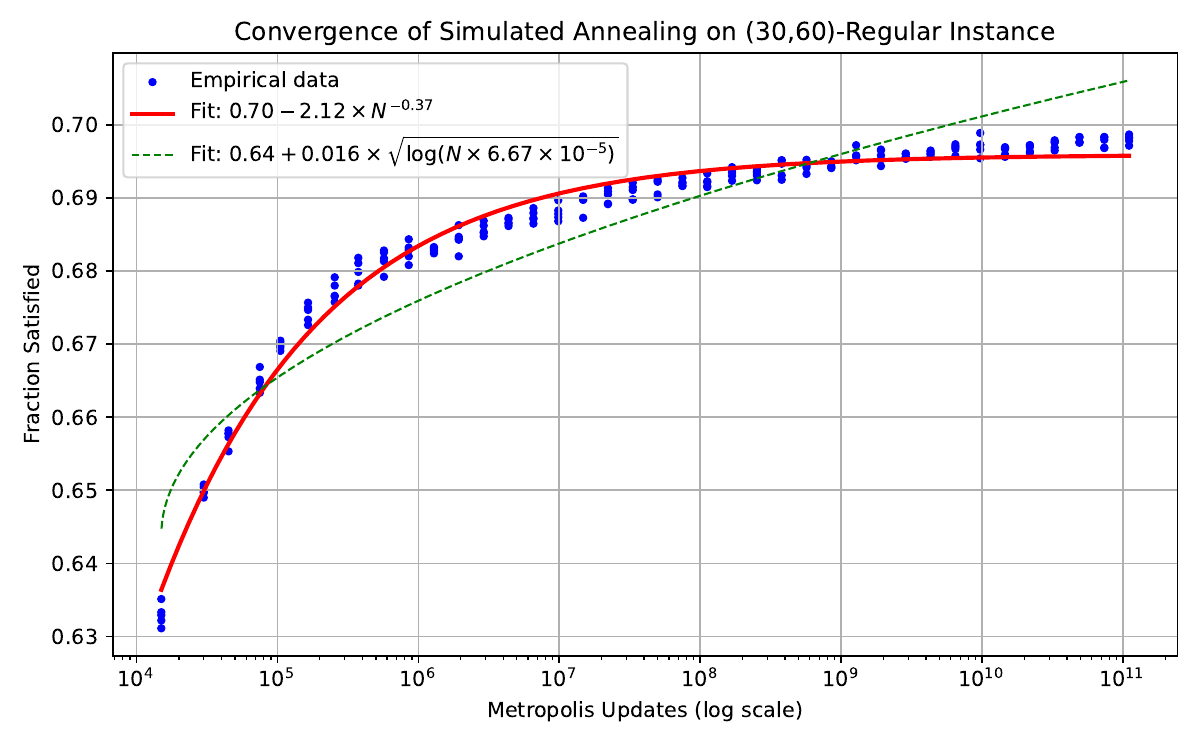} \\
            \includegraphics[width=0.8\textwidth]{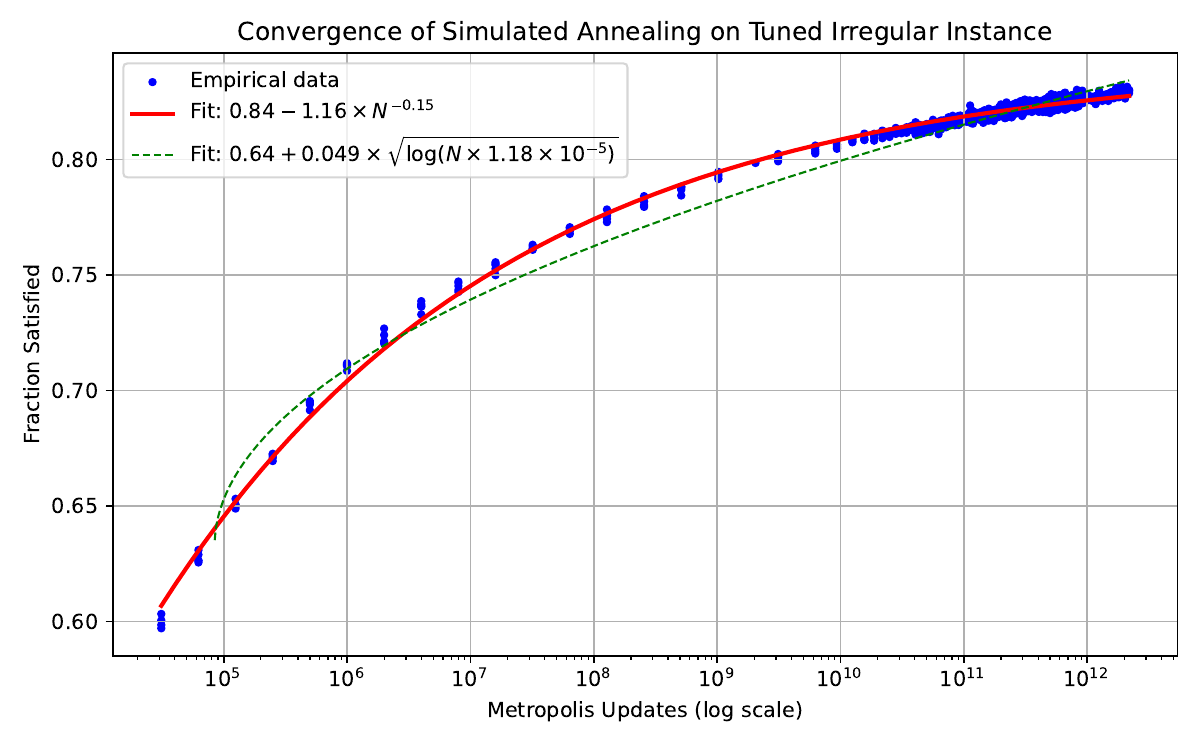}
        \end{array}
    \]
    \caption{\label{fig:power_law} Here we show the dependence of the fraction of constraints satisfied $\phi$ on the number of Metropolis updates $N$ used in simulated annealing. That is, $N$ is the number of variables times the number of sweeps. We find that the functional form $\phi = a + b \sqrt{c \log N}$ suggested by the heuristic argument of \sect{sec:local} fits poorly, but the form $\phi = a - b N^{-c}$ fits reasonably well. On the top panel we consider an instance from Gallager's ensemble with $k=30$, $D=60$, and $n=15,000$. On the bottom we use the irregular instance defined in \sect{sec:wins}, which has $n=31,216$. Each data point represents the final outcome of an independent anneal, and in each anneal we vary $\beta$ linearly from zero to five.}
\end{figure}

Although we do not have a strong theoretical handle on the $N$-scaling, we can nevertheless exploit the simple observation that increasing $N$ has diminishing returns. Thus, in all empirical analysis relating to simulated annealing, by choosing $N$ large one can ensure data points are on the relatively flat tail of the convergence curve and thus relatively insensitive to the specific choice of $N$. This renders the trends noted in figures \ref{fig:SAfits}, \ref{fig:advrand_etc}, \ref{fig:shannon_regions}, and \ref{fig:RSSA} relatively robust to choice of sweep count. In each of these plots we also kept the number of sweeps fixed for all data points plotted in order to minimize the effect of this as a confounding factor.

We next consider the problem of comparing the performance of DQI against simulated annealing. This task is rendered complicated by the fact that the number of clauses that simulated annealing is able to satisfy depends on how long one is willing to run the anneal. In Fig. \ref{fig:mega_convergence}, we plot the convergence of simulated annealing as a function of number of sweeps for the instance defined in \sect{sec:wins} and compare against the satisfaction fraction achieved by DQI+BP. We use the same data as in the lower panel of Fig. \ref{fig:power_law} but using a linear instead of logarithmic scale on the horizontal axis and zooming in on the region where the number of sweeps is at least $10^5$.

In all anneals, we start with inverse temperature $\beta = 0$ and then linearly increase with each sweep to a final value of $\beta=5$. Our results suggests this is an effective annealing schedule, although we do not claim it is precisely optimal. In each sweep, we cycle through the $n=31,216$ bits, and for each one consider a move in which the bit is flipped, accepting the move according to the Metropolis criterion. For this instance, our implementation of simulated annealing, which is optimized C++ code, is able to execute approximately 16 sweeps ($5 \times 10^5$ Metropolis moves) per second, though this varied slightly from run to run likely due to the fact that accepted Metropolis moves incur a larger computational cost than rejected moves in our implementation. Among our 1,079 annealing experiments, the largest number of sweeps we carried out was $7 \times 10^7$, and the longest anneal completed after 118 hours of runtime.

The first anneal to exceed the satisfaction fraction of $0.831$ guaranteed by Theorem \ref{thm:imperfect_decoding} for BP+DQI used $6 \times 10^7$ sweeps and ran for 73 hours. Since we ran five anneals at each number of sweeps we estimate that running five independent anneals in parallel at $6 \times 10^7$ sweeps would yield a collection of approximate optima, the best of which has a nontrivial chance of exceeding $0.831$.

\begin{figure}[ht]
    \begin{center}
        \includegraphics[width=0.6\textwidth]{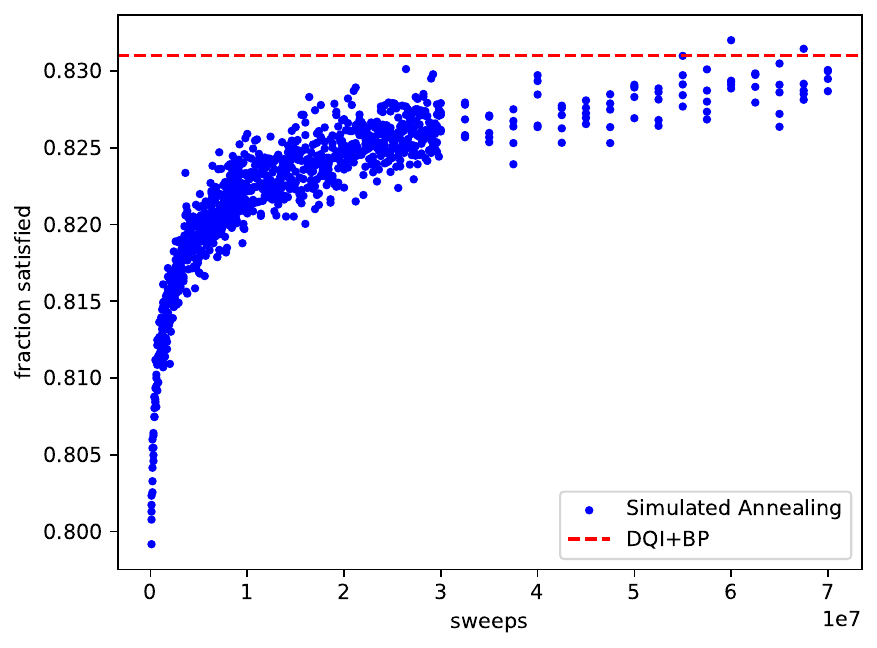}
    \end{center}
    \caption{\label{fig:mega_convergence} Here we show the satisfaction fraction achieved by simulated annealing as a function of the number of sweeps. For each number of sweeps we run five independent executions of simulated annealing with different pseudorandom seeds. At the right-hand side of the plot we incremented the number of sweeps by larger increments due to computational cost.}
\end{figure}

\subsection{Prange's Algorithm}
\label{sec:trunc}

Consider an instance of max-XORSAT
\begin{equation}
    \label{eq:vecpremise_repeat}
    B \mathbf{x} \stackrel{\max}{=} \mathbf{v}
\end{equation}
where $B$ is an $m \times n$ matrix over $\mathbb{F}_2$ and $m > n$. The system \eq{eq:vecpremise_repeat} is therefore overdetermined and we wish to satisfy as many equations as possible. In Prange's algorithm we simply throw away all but $n$ of the linear equations from this system so that it is no longer overdetermined. Then, provided the remaining system is not singular, we can simply solve it, \textit{e.g.} by Gaussian elimination. We thus obtain a bit string that definitely satisfies $n$ of the original $m$ constraints. Heuristically, one expects the remaining $m-n$ constraints each to be satisfied with probability $1/2$, independently. In other words the number of these $m-n$ constraints satisfied is binomially distributed. One reruns the above steps polynomially many times with different random choices of constraints to solve for. In this manner one can reach a logarithmic number of standard deviations onto the tail of this binomial distribution. Consequently, for max-XORSAT, the number of constraints one can satisfy using Prange's algorithm with polynomially many trials is $n + (m-n)/2 + \widetilde{\mathcal{O}}(\sqrt{m})$.

One can make the procedure more robust by bringing $B^T$ to reduced row-echelon form rather than throwing away columns and hoping that what is left is non-singular. In this case, as long as $B^T$ is full rank, one can find a bit string satisfying $n$ constraints with certainty. From numerical experiments one finds that matrices from Gallager's ensemble often fall just slightly short of full rank. Nevertheless, this procedure, when applied to Gallager's ensemble, matches very closely the behavior predicted by the above argument.

This heuristic generalizes straightforwardly to max-LINSAT. First, consider the case that $|f_i^{-1}(+1)| = r$ for all $i=1,\ldots,m$. In this case, we throw away all but $n$ constraints, and choose arbitrarily among the $r$ elements in the preimage $f_i^{-1}(+1)$ for each of those that remain. After solving the resulting linear system we will satisfy all of these $n$ constraints and on average we expect to satisfy a fraction $r/p$ of the remaining $n-m$ constraints assuming the $f_i$ are random. Hence, with polynomially many randomized repetitions of this scheme, the number of constraints satisfied will be $n + (m-n)(r/p) + \widetilde{\mathcal{O}}(\sqrt{m}).$ That is, the fraction $\phi_{\mathrm{PR}}$ of the $m$ constraints satisfied by a solution found by Prange's algorithm will be 
\begin{equation}
    \label{eq:truncfrac}
    \phi_{\mathrm{PR}} = \frac{r}{p} + \left( 1 - \frac{r}{p} \right) \frac{n}{m} + \widetilde{\mathcal{O}}(1/\sqrt{m}).
\end{equation}
If the preimages $f_i^{-1}(+1)$ for $i=1,\ldots,m$ do not all have the same size, then one should choose the $n$ constraints with smallest preimages as the ones to keep in the first step. The remaining $m-n$ are then those most likely to be satisfied by random chance.

\subsection{The AdvRand Algorithm}
\label{sec:advrand}

Prompted by some successes \cite{FGG14} of the Quantum Approximate Optimization Algorithm (QAOA), a simple but interesting algorithm for approximating max-XORSAT was proposed by Barak \textit{et al.} in \cite{BM15}, which the authors named AdvRand. Although designed primarily for the purpose of enabling rigorous average-case performance guarantees, the AdvRand algorithm can also be tried empirically as a heuristic, much like simulated annealing, and compared against DQI.

Given an instance of max-XORSAT with $n$ variables, the AdvRand algorithm works as follows. Select two parameters $R,F \in (0,1)$. Repeat the following sequence of steps polynomially many times. First assign $Rn$ of the variables uniformly at random. Substitute these choices into the instance, yielding a new instance with $(1-R)n$ variables. A constraint of degree $D$ will become a constraint of degree $D-r$ if $r$ of the variables it contains have been replaced by randomly chosen values. Thus, in the new instance some of the resulting constraints may have degree one. Assign the variables in such constraints to the values that render these constraints satisfied. If there are remaining unassigned variables, assign them randomly. Lastly, flip each variable independently with probability $F$.

In \cite{BM15}, formulas are given for $R$ and $F$ that enable guarantees to be proven about worst case performance. Here, we treat $R$ and $F$ as hyperparameters. We set $F=0$ and exhaustively try all values of $Rn$ from $0$ to $n$, then retain the best solution found. Our results on Gallager's ensemble at $k=3$ are displayed and compared against simulated annealing and DQI+BP in Fig. \ref{fig:advrand_etc} for constant $n$ and growing $D$. 

\begin{figure}
\begin{center}
    \begin{tabular}{c}
    \includegraphics[width=\textwidth]{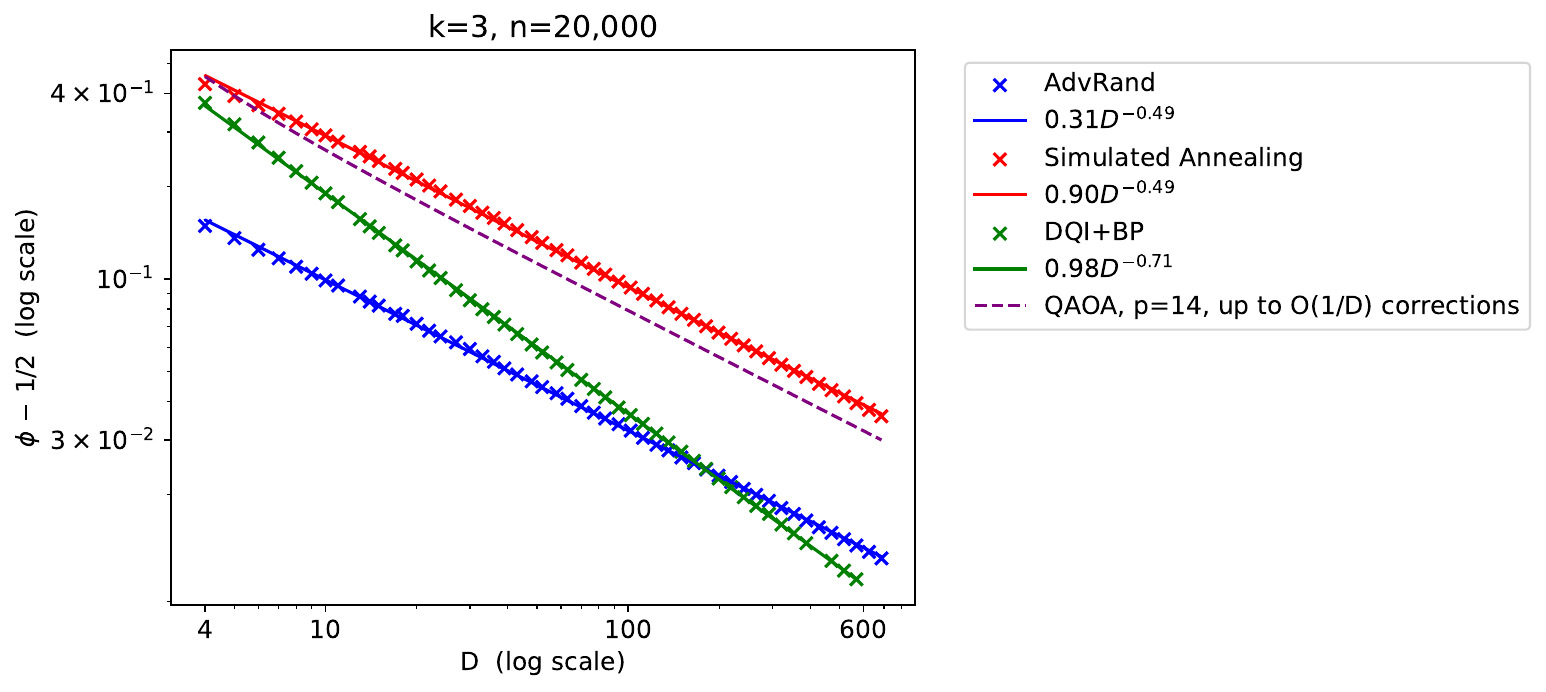}
    \end{tabular}
    \caption{\label{fig:advrand_etc} On Gallager's ensemble we compare the fraction $\phi$ of constraints satisfied in the solutions found by DQI+BP, AdvRand, and simulated annealing. We also show the approximate performance of $p=14$ QAOA on large girth $D$-regular max-3-XORSAT instances, which was calculated in \cite{BFM22}, up to $\mathcal{O}(1/D)$ corrections. We fix $k$, the number of variables in each constraint, at 3 and we vary $D$, the number of constraints that each variable is contained in, from $4$ to $687$. The red, blue, and green lines display the linear least-square fits to the log-log plot of $\phi-1/2$ versus $D$. We keep the number of variables fixed at $n=20,000$, thus $m = Dn/3$.}
    \end{center}
\end{figure}

In \cite{BM15} it was proven that, given max-$k$-XORSAT instances, AdvRand can in polynomial time find solutions satisfying a fraction $\frac{1}{2} + \frac{e^{-\mathcal{O}(k)}}{\sqrt{D}}$ of the constraints, even in the worst case, provided each variable is contained at most $D$ constraints. Our observed empirical performance in Fig. \ref{fig:advrand_etc} is in good agreement with this. In \cite{MH22} it was shown that at large $D$, for average-case degree-$D$ max-3-XORSAT, the exact optimum concentrates at $\frac{1}{2} + \frac{0.9959}{\sqrt{D}}$, in the limit of large $D$. Thus, for fixed $k$, the functional form of the scaling of AdvRand is provably optimal, and the key metric of performance for given $k$ is the specific value of the numerator $e^{-\mathcal{O}(k)}$. For our experiments at $n=20,000$ with $k=3$ we empirically observe a value of 0.31.

In the simulated annealing experiments shown in Fig. \ref{fig:advrand_etc}, we vary $\beta$ linearly from 0 to 3 and apply $5,000$ sweeps through the variables, \textit{i.e.} $5,000n$ Metropolis updates.

\subsection{Quantum Approximate Optimization Algorithm}
\label{sec:QAOA}

In 2014, Farhi, Goldstone, and Gutmann introduced a new quantum algorithm for optimization that they called the Quantum Approximate Optimization Algorithm (QAOA) \cite{FGG14a}. The QAOA algorithm is parameterized by a number of rounds, $p$. Allowing additional rounds can only improve the approximate optima found by QAOA, but this also makes the algorithm harder to analyze theoretically. The largest $p$ for which QAOA's performance has been analyzed on max-3-XORSAT is $p=14$, which was achieved in \cite{BFM22} using nontrivial tensor network techniques, which apply to all $D$-regular max-3-XORSAT instances whose hypergraphs have girth greater than $2p+1$. In \cite{BFM22} it was found that the fraction of satisfied clauses for every $D$-regular large-girth hypergraph, or random $D$-regular hypergraphs in the $n\to \infty$ limit, is
\begin{equation}
    \label{eq:QAOA}
    \phi_{\mathrm{QAOA}} = \frac{1}{2} + \bar{\nu}_{14}^{[3]}\sqrt{\frac{3}{2(D-1)}} \pm \mathcal{O}(1/D),
\end{equation}
where $\bar{\nu}_{14}^{[3]} = 0.6422$ \cite{bensgithub}.

In Fig. \ref{fig:advrand_etc} we include a plot of the line $\phi_{\mathrm{QAOA}} = \frac{1}{2} + \bar{\nu}_{14}^{[3]}\sqrt{\frac{3}{2(D-1)}}$, alongside the empirical average-case performance on the Gallager ensemble of DQI using a standard belief propagation decoder (DQI+BP). At small $D$ the comparison between $p=14$ QAOA and DQI+BP is not fully conclusive due to the unknown $\mathcal{O}(1/D)$ corrections in \eq{eq:QAOA}, but for all $D$ for which we can draw firm conclusions, \textit{i.e.} large $D$, QAOA at $p=14$ outperforms DQI+BP on max-3-XORSAT.
 
A second point of comparison between DQI and QAOA is the Sherrington-Kirkpatrick model, analyzed in \cite{BFM22,BKL25}. The analysis in \cite{BFM22} shows that for average-case max-2-XORSAT containing all $\binom{n}{2}$ possible constraints of the form $x_i \oplus x_j = v_k$, each with random $v_k \in \{0,1\}$, the number of satisfied minus unsatisfied clauses achieved by QAOA scales as $\nu_p n^{3/2}$, where $\nu_p$ is a constant that depends on $p$, the number of rounds in the QAOA algorithm. Using Theorem \ref{thm:DQI_shannon_limit} from \sect{sec:limits} one finds that DQI, even using a classical decoder saturating the Shannon bound, would achieve at best $O(n^{3/2}/\sqrt{\log n})$. Thus, DQI is not competitive on this problem, at least using classical decoders\footnote{One could also consider using quantum decoders such as BPQM \cite{PR22}, which take advantage of the coherence of the errors and are limited only by the Holevo bound rather than the Shannon bound.}.

A third point of comparison between DQI and QAOA is max-2-XORSAT where each variable is contained in exactly $D$ constraints. MaxCut for $D$-regular graphs is the special case of this where $\mathbf{v}$ is the all ones vector. In \cite{FGRV25} it was shown that QAOA with 17 rounds, when applied to the MaxCut problem on any 3-regular graph of girth at least 36 can achieve a cut fraction of $0.8971$. For random 3-regular graphs, and any constant $g$, as the number of vertices goes to infinity, the fraction of vertices that are involved in loops of size $g$ goes to zero. Thus for average-case instances of 3-regular MaxCut, QAOA with 17 rounds can asymptotically achieve cut fraction $0.8971$. Furthermore, since the performance of QAOA on max-XORSAT is independent of $\mathbf{v}$, QAOA can also satisfy fraction $0.8971$ of the constraints for 3-regular max-2-XORSAT on average-case graphs and arbitrary $\mathbf{v}$. By Theorem \ref{thm:max2limit}, the satisfaction fraction achievable by DQI with classical decoding is asymptotically upper bounded for 3-regular max-2-XORSAT by $0.75$ if $\mathbf{v}$ is chosen uniformly at random. Thus, at least for random $\mathbf{v}$, DQI with classical decoders is beaten by QAOA on 3-regular max-2-XORSAT.

In \cite{BFM22} it was shown that when applied to MaxCut problems on large-girth $D$-regular graphs, QAOA achieves a cut fraction
\begin{equation}
\frac{1}{2} + \frac{\nu_p}{\sqrt{D-1}} \pm \mathcal{O}(1/D),
\end{equation}
where $\nu_p$ is a constant that increases with the number of rounds $p$ and for which $\nu_{17} = 0.6773$. As in the case of $D=3$ described above, this implies the same asymptotic performance on average-case $D$-regular max-2-XORSAT. Since this analysis is only up to $\mathcal{O}(1/D)$ corrections, it cannot be quantitatively compared at finite $D$ against DQI. Nevertheless, Theorem \ref{thm:max2limit} shows that the approximation to average-case $D$-regular max-2-XORSAT achievable by DQI with classical decoders is limited to $1/2 + 1/(2D-2)$. Thus, QAOA outperforms DQI with classical decoders in the limit where $k=2$ and $D$ is large. To search for regimes of advantage for DQI one could instead consider increasing $k$ together with $D$, or using quantum decoders, as discussed in \sect{sec:limits}.

\subsection{Algebraic Attacks Based on List Recovery}
\label{sec:list-recovery}

For our OPI instances, we believe that the most credible classical algorithms to consider as competitors to DQI must be attacks that exploit algebraic structure. 

The max-LINSAT problem can be viewed as finding a codeword from $C$ that approximately maximizes $f$. With DQI we have reduced this to a problem of decoding $C^\perp$ out to distance $\ell$. For a Reed-Solomon code, as defined in \eq{eq:BRS}, its dual is also a Reed-Solomon code. Hence, both $C$ and $C^\perp$ can be efficiently decoded out to half their distances, which are $m-n+1$ and $n+1$, respectively. However, max-LINSAT is not a standard decoding problem, \textit{i.e.} finding the nearest codeword to a given string under the promise that the distance to the nearest string is below some bound. In fact, exact maximum-likelihood decoding for general Reed-Solomon codes with no bound on distance is known to be NP-hard \cite{GV05, GGG18}.

The OPI problem is very similar to a problem studied in the coding theory literature known as \emph{list-recovery}, applied in particular to Reed-Solomon codes.  In list-recovery, for a code $C \subseteq \mathbb{F}_p^m$, one is given sets $F_0, F_1, \ldots, F_{m-1} \subseteq \mathbb{F}_p$ (which correspond to our sets $f_i^{-1}(+1)$), and asked to return all codewords $c \in C$ so that $c_i \in F_i$ for as many $i$ as possible.  It is easy to see that solving this problem for Reed-Solomon codes will solve the OPI problem, assuming the list of all matching codewords is small.  However, existing list-recovery algorithms for Reed-Solomon codes rely on the size of the $F_i$ being quite small (usually constant, relative to $m$) and do not apply in this parameter regime.  In particular, the best known list-recovery algorithm for Reed-Solomon codes is the Guruswami-Sudan algorithm~\cite{GS98}; but this algorithm breaks down when the size of the $|F_i|$ is larger than $m/n$ (this is the \emph{Johnson bound} for list-recovery).  In our setting, $|F_i| = p/2 \approx m/2$, which is much larger than $m/n \approx 10$.  Thus, the Guruswami-Sudan algorithm does not apply.  Moreover, we remark that in this parameter regime, if the $f_i$ are random, we expect there to be exponentially many codewords satisfying all of the constraints; this is very different from the coding-theoretic literature on list-recovery, which generally tries to establish that the number of such codewords is at most polynomially large in $m$, so that they can all be returned efficiently.  Thus, standard list-recovery algorithms are not applicable in the parameter regime we consider.

\subsection{Lattice-Based Heuristics}
\label{sec:lattice}

A problem similar to our Optimal Polynomial Intersection problem---but in a very different parameter regime---has been considered before, and has been shown to be susceptible to lattice attacks.  In more detail, in the work~\cite{NP99}, Naor and Pinkas proposed essentially the same problem, but in the parameter regime where $p$ is exponentially large compared to $m$, and where $|f_i^{-1}(+1)| \ll p/2$ is very small (in particular, not balanced, like we consider).\footnote{In this parameter regime, unlike ours, for random $f_i$ it is unlikely that there are \emph{any} solutions $\mathbf{x}$ with $f(\mathbf{x})$ appreciably large, so the problem of \cite{NP99} also ``plants'' a solution $\mathbf{x}^*$ with $f(\mathbf{x}^*) = m$; the problem is to find this planted solution.  Another difference is that DQI attains \eqref{eq:truncfrac} for \emph{any} functions $f_i$, while in the conjecture of Naor and Pinkas the $f_i$ are random except for the values corresponding to the planted solution.} The work~\cite{NP99}  conjectured that this problem was (classically) computationally difficult.  This conjecture was challenged by Bleichenbacher and Nguyen in~\cite{BN00} using a lattice-based attack, which we describe in more detail below. However, this lattice-based attack does not seem to be effective---either in theory or in practice---against our OPI problem.  Intuitively, one reason is that in the parameter regime that the attack of \cite{BN00} works, a solution to the max-LINSAT problem---which will be unique with high probability---corresponds to a unique shortest vector in a lattice, which can be found via heuristic methods.  In contrast, in our parameter regime, there are many optimal solutions, corresponding to many short vectors; moreover, empirically it seems that there are much shorter vectors in the appropriate lattice that do not correspond to valid solutions.  Thus, these lattice-based methods do not seem to be competitive with DQI for our problem.

The target of the attack in \cite{BN00} is syntactically the same as our problem, but in a very different parameter regime.  Concretely, $p$ is chosen to be much larger than $m$ or $n$, while the size $r:=|f_i^{-1}(+1)|$ of the set of ``allowed'' symbols for each $i = 1, \ldots, m$ is very small. (Here, we assume that $f_i^{-1}(+1)$ has the same size $r$ for all $i$ for simplicity of presentation; this can be relaxed).  In \cite{BN00}, the $f_i$'s are chosen as follows.  Fix a planted solution $\mathbf{x}^*$, and set the $f_i$ so that $f_i(\mathbf{b}_i \cdot \mathbf{x}^*) = 1$ for all $i$; thus $f(\mathbf{x}^*) = m$.  Then for each $i$, $f_i(y)$ is set to $1$ for a few other random values of $y \in \F_p$, and $f_i(z) = -1$ for the remaining $z \in \F_p$.  With high probability, $\mathbf{x}^*$ is  the unique vector with large objective value, and the problem is to find it.

In our setting, where $r = |f_i^{-1}(+1)| \approx p/2$, $m=p-1$ and $n = \lceil m/10 \rfloor$, we expect there to be many vectors $\mathbf{x}$ with $f(\mathbf{x}) = m$ when the $f_i$ are random.  We have seen that DQI can find a solution with $f(\mathbf{x}) \approx 0.7179m$ (even for arbitrary $f_i$).  

The way the attack of \cite{BN00} works in our setting is the following.  For $\mathbf{x} \in \F_p^n$, define a polynomial $P_{\mathbf{x}}(Z) = \sum_{j=1}^n x_j Z^{j-1}$. Let $F_i = f_i^{-1}(+1)$, and write $F_i = \{v_{i,1}, v_{i,2}, \ldots, v_{i,r}\} \subseteq \F_p$.   If $f(\mathbf{x}) = m$, then $P_{\mathbf{x}}(\gamma^i) \in F_i$ for all $i = 1, \ldots, m$, where we recall from \sect{sec:OPI} that $\gamma$ is a primitive element of $\F_p$, and $B_{i,j} = \gamma^{ij}$.  Let $L_i(Z) = \prod_{j \neq i}\frac{Z - \gamma^j}{\gamma^i - \gamma^j}.$ By Lagrange interpolation, we have
\begin{equation}
    \label{eq:lagrange}
    P_{\mathbf{x}}(Z) = \sum_{i=1}^m P_\mathbf{x}(\gamma^i) L_i(Z) = \sum_{i=1}^m \left( 
    \sum_{j=1}^r \delta_{i,j}^{(P)} \ v_{i,j} \right) L_i(Z),
\end{equation}
where
$\delta_{i,j}^{(P)}$ is $1$ if $P_\mathbf{x}(\gamma^i) = v_{i,j}$ and $0$ otherwise.  Since $P_\mathbf{x}(Z)$ has degree at most $n-1$, the coefficients on $Z^k$ are equal to zero for $k=n,n+1, \ldots, p-1$.  Hence, \eq{eq:lagrange} gives us $p-n = m-n+1$ $\F_p$-linear constraints on the vector $\bm{\delta} \in \{0,1\}^{rm}$ that contains the $\delta_{i,j}^{(P)}$'s. Collect these constraints in a matrix $A \in \F_p^{m-n+1 \times rm}$, so that $A \bm{\delta} = 0$.  Consider the lattice $\Lambda := \{ \bm{\delta}' \in \mathbb{Z}^{rm}\,|\, A\bm{\delta}' = 0 \mod p \}.$  Our target vector $\bm{\delta}$ clearly lies in $\Lambda$, and moreover it has a very short $\ell_2$ norm: $\|\bm{\delta}\|_2 = \sqrt{m}$.  Thus, we may hope to find it using methods like LLL \cite{LLL82} or Schnorr's BKZ reduction \cite{S87,BKZ94}, and this is indeed the attack.\footnote{There are further improvements given in \cite{BN00}, notably passing to a sub-lattice that enforces the constraint that $\sum_{j=1}^r \delta_{i,j}^{(P)}$ is the same for all $i$.}
Upon finding a vector $\bm{\delta}$ of the appropriate structure (namely, so that $\delta_{i,j}^{(P)} = 1$ for exactly one $j \in \{1, \ldots, r\}$ for each $i$), we may read off the evaluations of $P_\mathbf{x}$ from the $F_i$, and hence recover $\mathbf{x}$.

Bleichenbacher and Nguyen show that in some parameter regimes, the target vector $\bm{\delta}$ with length $\sqrt{m}$ is likely to be the shortest vector in $\Lambda$.  However, these parameter regimes are very different from ours.  For example, their results hold for $p \approx 2^{80}$, and $r \leq 16$, with codes of rate $n/m$ at least $0.88$.  In contrast, in our setting we have {much} larger lists, with $r \approx p/2$, and much lower-rate codes, with $n/m \approx 1/10$ (although in \cite{BN00}, they take $m,n \ll p$ while we take $m \approx p$, so this is not a direct comparison).

Empirically, this attack does not seem to work in our parameter regime.  Indeed, the lattice heuristics do find short vectors in the lattice $\Lambda$, but these vectors are much shorter than $\sqrt{m}$ whenever $m$ and $n_{\text{distractors}}$ are comparable to $p$ (see Fig. \ref{fig:BleichenbacherNguyenAttackSVPinvalidSolutions}).
As a consequence, the success probability when applied to our OPI instances appears to decay exponentially with $p$.

\begin{figure}[!htb]
    \centering
    \includegraphics[height=0.29\linewidth]{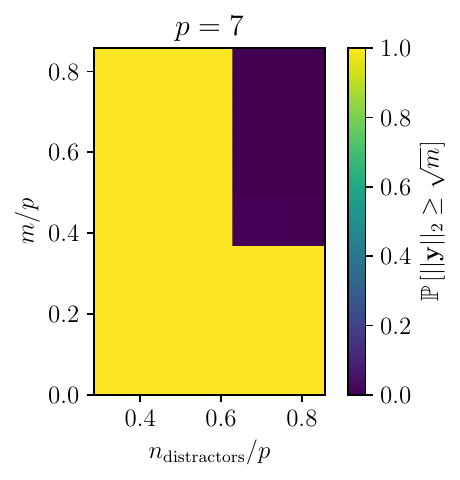}
    \includegraphics[height=0.29\linewidth]{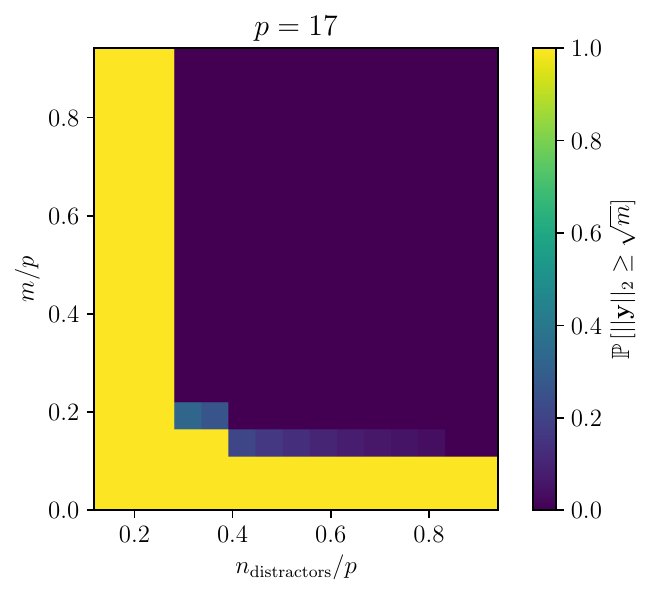}
    \includegraphics[height=0.29\linewidth]{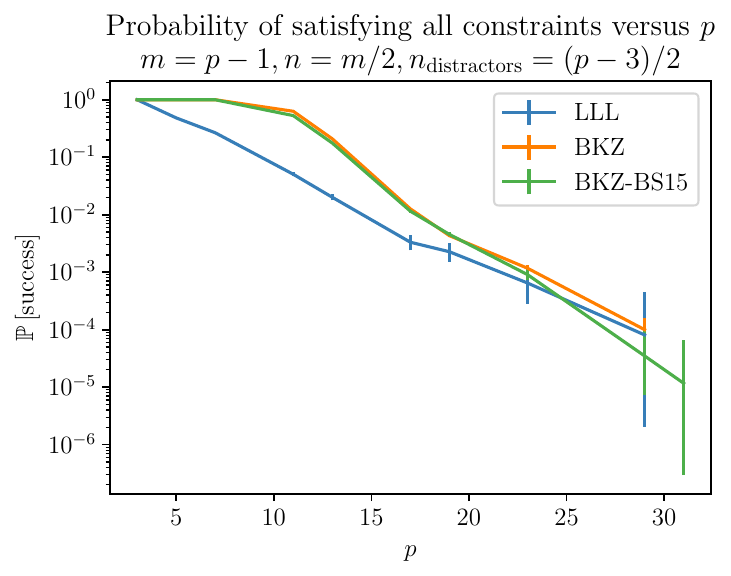}
    \caption{The attack of Bleichenbacher and Nguyen \cite{BN00} is workable when the shortest nonzero vector in a particular lattice has weight $\sqrt{m}$. Above, left, we apply the BKZ algorithm \cite{BKZ94} to find the shortest nonzero vector (under the 2-norm) in the lattices arising from random problem instances for various $m$ and $n_{\text{distractors}}.$ We observe that the shortest vector almost always has 2-norm $< \sqrt{m}$ in the regime where $m$ and $n_{\text{distractors}}$ are both a significant fraction of $p$. 
    Consequently, the success probability of the attack when applied to our OPI problem in the regime where $m=p-1, n=m/2, n_{\text{distractors}} = (p-3)/2$ appears to exhibit exponential decay with $p$ whether LLL, BKZ with a block size of 15 (``BKZ-15'') or BKZ with unlimited block size.
    In this regime we believe the lattice-based heuristic of \cite{BN00} does not succeed.}
    \label{fig:BleichenbacherNguyenAttackSVPinvalidSolutions}
\end{figure}

\section{Max-XORSAT Instances Advantageous to DQI Over Simulated Annealing}
\label{sec:wins}

In this section we construct a class of max-XORSAT instances such that DQI, using belief propagation decoders, achieves a better approximation than we are able to achieve using simulated annealing if we restrict simulated annealing to a comparable number of computational steps. DQI+BP also achieves a better approximation than we obtain from any of the other general-purpose optimization algorithms that we try: greedy optimization, Prange's algorithm, and AdvRand. However, we do not claim this as an example of quantum advantage because we are able to also construct a classical heuristic tailored to the class of instances which, within reasonable runtime, beats the approximation achieved by DQI+BP. Also, as noted in the introduction, using very long anneals (up to 118 hours) we are able to reach the satisfaction fraction achieved by DQI+BP for the instance considered here. We have left the systematic investigation the scaling with $n$ of the runtime of simulated annealing for these instances to future work.

Given a max-XORSAT instance $B \mathbf{x} \stackrel{\max}{=} \mathbf{v}$, the degree of a variable is the number of constraints in which it is contained. The degree of a constraint is the number of variables that are contained in it. Hence, the degree of the $i\th$ constraint is the number of nonzero entries in the $i\th$ row of $B$ and the degree of the $j\th$ variable is the number nonzero entries in the $j\th$ column of $B$. For an LDPC code, the degree of a parity check is the number of bits that it contains, and the degree of a bit is the number of parity checks in which it is contained. These degrees correspond to the number of nonzero entries in the rows and columns of the parity check matrix.

\begin{figure}
    \centering
\begin{tikzpicture}[scale=.7]
    \foreach \i in {0,1,2,3,4,5,6}{
    \node[draw=orange!30!red, circle,fill=orange!20,minimum size=0.5cm](a\i) at (0,\i) {};
    }
    \foreach \i in {0,1,2,3,4,5,6,7,8,9}{
    \node[draw=blue, rectangle,fill=blue!20, minimum size=0.5cm](b\i) at (4,\i-1.5) {};
    }
    \draw (a6)--(b9);
    \draw (a6)--(b7);
    \draw (a5)--(b9);
    \draw (a5)--(b8);
    \draw (a4)--(b6);
    \draw (a4)--(b2);
    \draw (a3)--(b7);
    \draw (a3)--(b5);
    \draw (a3)--(b3);
    \draw (a2)--(b8);
    \draw (a2)--(b4);
    \draw (a2)--(b3);
    \draw (a1)--(b1);
    \draw (a1)--(b4);
    \draw (a1)--(b3);
    \draw (a0)--(b0);
    \draw (a0)--(b4);
    \draw (a0)--(b5);
    \draw (a0)--(b6);
    \node[orange,anchor=east] at (0,7) {$n$ variables};
    \node[blue,anchor=west] at (4,8.5) {$m$ constraints};
    \draw[decorate,decoration={brace,amplitude=10pt}]
  (-.5,-.5) -- (-.5,.5) node[midway,xshift=-1em,anchor=east]{$\Delta_4 = \frac{1}{7}$};
   \draw[decorate,decoration={brace,amplitude=10pt}]
  (-.5,.5) -- (-.5,3.5) node[midway,xshift=-1em,anchor=east]{$\Delta_3 = \frac{3}{7}$};
  \draw[decorate,decoration={brace,amplitude=10pt}]
  (-.5,3.5) -- (-.5,6.5) node[midway,xshift=-1em,anchor=east]{$\Delta_2 = \frac{3}{7}$};
  \draw[decorate,decoration={brace,mirror,amplitude=10pt}]
  (4.5,-2) -- (4.5,1) node[midway,xshift=1em,anchor=west]{$\kappa_1 = \frac{3}{10}$};
  \draw[decorate,decoration={brace,mirror,amplitude=10pt}]
  (4.5,1) -- (4.5,3) node[midway,xshift=1em,anchor=west]{$\kappa_3 = \frac{1}{5}$};
  \draw[decorate,decoration={brace,mirror,amplitude=10pt}]
  (4.5,3) -- (4.5,8) node[midway,xshift=1em,anchor=west]{$\kappa_2 = \frac{1}{2}$};
\end{tikzpicture}
    \caption{Tanner graph for a sparse irregular LDPC code illustrating the notation introduced in \sect{sec:wins}.}
    \label{fig:sparse_irregular}
\end{figure}

Given a max-XORSAT instance, let $\Delta_j$ be the fraction of variables that have degree $j$. Let $\kappa_i$ be the fraction of constraints that have degree $i$. This is illustrated in Fig.~\ref{fig:sparse_irregular}. Via DQI, a max-XORSAT instance with degree distribution $\Delta$ for the variables and $\kappa$ for the constraints is reduced to a decoding problem for a code with degree distribution $\Delta$ for the parity checks and $\kappa$ for the bits. For regular LDPC codes, in which every bit has degree $k$ and every constraint has degree $D$, the error rate from which belief propagation can reliably decode deteriorates as $D$ increases. However, it has been discovered that belief propagation can still work very well for certain \emph{irregular} codes in which the average degree $\bar{D}$ of the parity checks is large \cite{RSU01}. In contrast, we find that the approximate optima achieved by simulated annealing on the corresponding irregular max-XORSAT instances of average degree $\bar{D}$ are typically no better than on regular instances in which every variable has degree exactly $\bar{D}$. This allows us to find examples where DQI achieves a better approximation than simulated annealing.

If the $m$ bits in an LDPC code have degree distribution $\kappa$ the total number of nonzero entries in the parity check matrix is $m \sum_i i \kappa_i$. If the $n$ parity checks in an LDPC code have degree distribution $\Delta$ then the total number of nonzero entries in the parity check matrix is $n \sum_j j \Delta_j$. Hence to define a valid LDPC code, a pair of degree distributions must satisfy
\begin{eqnarray}
    \textstyle{\sum_{i=1}^m \kappa_i} & = & 1 \\
    \textstyle{\sum_{j=1}^n \Delta_j} & = & 1 \\
    \textstyle{m \sum_{i=1}^m i \kappa_i} & = & \textstyle{n \sum_{j=1}^n j \Delta_j}.
\end{eqnarray}
Given any $\kappa_1,\ldots,\kappa_m$ and $\Delta_1,\ldots,\Delta_n$ satisfying these constraints, it is straightforward to sample uniformly from the set of all parity check matrices with $m$ bits whose degree distribution is $\kappa$ and $n$ parity checks whose degree distribution is $\Delta$. Furthermore, the maximum error rate from which belief propagation can reliably correct on codes from this ensemble can be computed in the limit of $m \to \infty$ by a method called density evolution, which numerically solves for the fixed point of a certain stochastic process \cite{RU01}. Using this asymptotic maximum error rate as an objective function, one can optimize degree distributions to obtain irregular LDPC codes that outperform their regular counterparts \cite{RSU01}.

As a concrete example, we consider the degree distribution shown in Fig.~\ref{fig:degree-dist-instance-beats-sa}.
\begin{figure}
    \centering
    \includegraphics[width=0.4\linewidth]{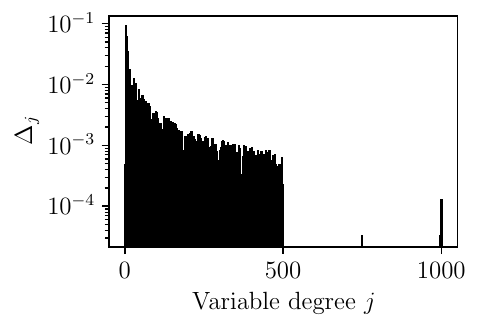}
    \includegraphics[width=0.4\linewidth]{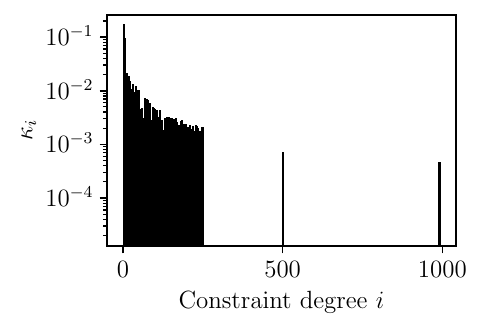}
    \caption{Degree distribution for an irregular instance sampled with $m=50,000$ and $n=31,216$. The full table of variable and constraint degrees is available in our Zenodo record \url{https://doi.org/10.5281/zenodo.13327870}.}
    \label{fig:degree-dist-instance-beats-sa}
\end{figure}
After generating a random max-XORSAT instance with 50,000 constraints and 31,216 variables consistent with this degree distribution we find that belief propagation fails on $9$ out of $10,000$ trials of decoding from uniformly random errors of Hamming weight $6,350$.
This bit flip error rate of $6,350/50,000$ is close to the asymptotic threshold of $\approx 13\%$ predicted by density evolution. Thus for average case $\mathbf{v}$, Theorem \ref{thm:imperfect_decoding} shows that DQI can asymptotically find solutions that satisfy at least $0.831m$ constraints with high probability.
In contrast, when we run simulated annealing on this instance with $10^6$ sweeps, for example, in sixteen trials the fraction of satisfied constraints ranges from $0.81024$ to $0.81488$. The performance of simulated annealing on this instance as a function of number of sweeps is discussed in detail in \sect{sec:sa_convergence}. Prange's algorithm would be predicted to achieve approximately $0.8122$, and experimentally we saw $0.8124$. The greedy algorithm performs far worse than simulated annealing on this instance. In sixteen trials its best satisfaction fraction was $0.666$. Our trial of AdvRand achieved $0.5536$.

The following classical algorithm can exceed the approximation achieved by DQI+BP on the above example, by exploiting the highly unbalanced degree distribution of its constraints. We modify simulated annealing (which is described in \sect{sec:local}) by adding a $\beta$-dependent factor to each term in the objective function. Letting $n_i$ denote the number of variables contained in constraint $i$, we use the objective function
\begin{equation}
    f^{(\beta)}(\mathbf{x}) = \sum_{i=1}^{m}\max\left(0, 1-e^{-\beta / n_i}\right)f_i\left(\sum_{j=1}^{n} B_{ij}x_j\right).
\end{equation}
We apply 1 million sweeps (since there are $50,000$ constraints this corresponds to fifty billion Metropolis updates), interpolating linearly from $\beta=0$ to $\beta = 5$. After this, we are left with solutions that satisfy approximately $0.88m$ clauses. We call this algorithm {\it irregular annealing} since it takes advantage of the irregularity of the instance by prioritizing the lower-degree constraints early in the annealing process.

\section{Limitations of DQI}
\label{sec:limits}

In addition to the power of DQI for solving optimization problems it is also interesting to delineate its fundamental limits. Because DQI reduces optimization problems to decoding problems, some limitations of DQI can be deduced from information theory. In this section we use information-theoretic considerations to prove upper bounds on the performance of DQI on max-XORSAT. We first consider the case where DQI uses a classical decoding algorithm implemented reversibly, as is done throughout the rest of this manuscript. We then consider the case of intrinsically quantum decoders. Lastly, we consider the special case of max-2-XORSAT, which in many respects behaves differently from max-$k$-XORSAT with $k \geq 3$.

\subsection{General Limitations of DQI Under Classical Decoding}

DQI reduces max-XORSAT to a decoding problem for a code $C^\perp$ with $m$ bits and $n$ parity checks. Hence its rate $R$ is
\begin{equation}
\label{eq:rate_def}
R = 1 - \frac{n}{m}.
\end{equation}
The decoding is required to succeed with high probability when an error string $\mathbf{e}$ with Hamming weight $\ell$ has been added to the codeword. We can model this by an error channel where each bit is independently flipped with probability 
\begin{equation}
\label{eq:bsc_def}
    p=\ell/m.
\end{equation}
This model is called the binary symmetric channel $\mathrm{BSC}(p)$. Although the distribution over error weights in DQI is given by $\mathrm{Pr}(|\mathbf{e}| = k) = |w_k|^2$, whereas $\mathrm{BSC}(p)$ has $\mathrm{Pr}(|\mathbf{e}| = k) = p^k (1-p)^{m-k} \binom{m}{k}$, the channels with these error distributions both asymptotically yield the same information-theoretic capacity because in both cases the probability distribution over error weights is narrowly peaked around $\ell$. As shown by Shannon, the rate of a code that can reliably transmit information over $\textrm{BSC}(p)$ is limited by
\begin{equation}
\label{eq:shannon_bound}
R \leq 1 - H_2(p),
\end{equation}
where $H_2(p) = -p\log_2(p)-(1-p)\log_2(1-p)$ is the binary entropy function. Substituting \eq{eq:rate_def} and \eq{eq:bsc_def} into \eq{eq:shannon_bound} yields $\frac{n}{m} \geq H_2 \left( \frac{\ell}{m} \right)$, which implies
\begin{equation}
\frac{\ell}{m} \leq H_2^{-1} \left( \frac{n}{m} \right)
\end{equation}
where the inverse $H_2^{-1}$ is well-defined, because $\ell/m \leq 1/2$. Substituting this into the semicircle law \eq{eq:dqifrac} yields the following bound
\begin{equation}
    \label{eq:s_shannon}
    \frac{\langle s \rangle_{\mathrm{Shannon}}}{m} \leq \frac{1}{2} + \sqrt{ H_2^{-1} \left( \frac{n}{m} \right) \left( 1 - H_2^{-1} \left( \frac{n}{m} \right) \right)}.
\end{equation}
Since $m H_2^{-1}(n/m)$ in general exceeds $d^\perp/2$, Theorem \ref{thm:semicircle} does not apply, though Theorem \ref{thm:imperfect_decoding} still does. That is, this must be interpreted as a bound on the performance of DQI for max-XORSAT with average case $\mathbf{v}$.

Let's now consider this limit in relation to the ensemble of degree-$D$ max-$k$-XORSAT instances in which $B$ is chosen from the $(k,D)$-regular Gallager ensemble. In this case we have $n/m = k/D$, which can be substituted into \eq{eq:s_shannon} to yield concrete upper bounds on $\langle s \rangle_{\mathrm{Shannon}} / m$. In Fig. \ref{fig:shannon_regions} we compare these upper bounds on the performance of DQI with classical decoding against the empirical performance of simulated annealing as well as the known asymptotic performance of Prange's algorithm.

We note that in Fig. \ref{fig:shannon_regions}, DQI is analyzed asymptotically, whereas simulated annealing results are obtained empirically at finite $n$. For simulated annealing we take $n \simeq 2,000$ and use $500,000$ sweeps, as we found these parameters to achieve a good tradeoff between asymptotic informativeness and computational convenience. To improve the results of simulated annealing we run it five times for each instance with different random seeds and report the maximum number of satisfied constraints achieved by any of these five trials. This technique is referred to as ``restarts,'' and it is often used in simulated annealing because running $r$ repetitions of simulated annealing and keeping the best result may often outperform the equally costly procedure of running a single anneal with $r$ times as many sweeps. We have $n \simeq 2,000$ rather than $n=2,000$ because in the Gallager ensemble $n$ must always be a multiple of $D$.

\begin{figure}[ht]
    \begin{center}
        \includegraphics[width=0.5\textwidth]{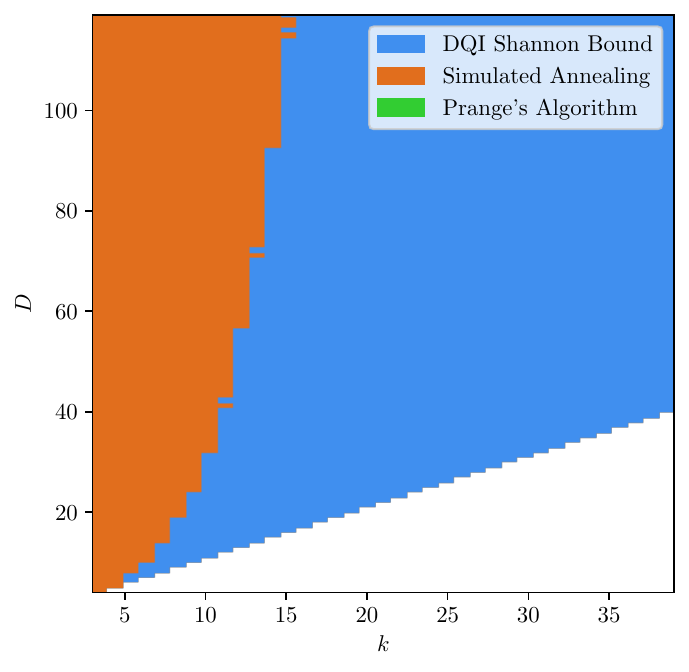}
    \end{center}
    \caption{\label{fig:shannon_regions} Here we consider degree-$D$ max-$k$-XORSAT instances $B\mathbf{x} \stackrel{\max}{=} \mathbf{v}$ where $\mathbf{v}$ is uniformly random and $B$ is drawn from the $(k,D)$-regular Gallager ensemble. In the orange region it is information-theoretically impossible for DQI with classical decoding to outperform simulated annealing on average-case instances from Gallager's ensemble. In the blue region, DQI with maximum-likelihood (\textit{i.e.} Shannon-limit) decoding achieves a higher satisfaction fraction than simulated annealing, but realizing this advantage with polynomial-time decoders remains an open problem. Prange's algorithm does not win on any region of this plot.}
\end{figure}

Although our main focus in this section is on limitations of DQI it is also worth discussing the possibilities of DQI. In the blue region of Fig. \ref{fig:shannon_regions} it is information-theoretically possible for DQI to achieve average-case quantum advantage using classical decoders. Based on computational experiments, we believe that this potential advantage is not realized by belief propagation decoding. This is because the number of errors that belief propagation can successfully correct falls increasingly short of the Shannon limit as the parity check matrices defining $C^\perp$ become denser. It is an open question whether efficient classical decoders can be devised that approach the Shannon limit closely enough for denser codes to allow DQI to outperform simulated annealing on average case instances from Gallager's ensemble. Efficient classical decoding of LDPC codes with denser than usual parity check matrices has so far not been a subject of intensive research but some results in this direction are obtained in \cite{FW05,FM07,DD08,TS08,YLB09,TRH09,TRH10}.

To make \eq{eq:s_shannon} less unwieldy, we can use the following useful bound from \cite{Cal09}.

\begin{theorem}
For all $x \in [0,1]$,
\begin{equation}
\label{eq:hbound}
\frac{x}{2 \log_2 \left( \frac{6}{x} \right)} \leq H_2^{-1}(x) \leq \frac{x}{\log_2 \left( \frac{1}{x} \right)}.
\end{equation}
\end{theorem}

\noindent
Substituting \eq{eq:hbound} into \eq{eq:s_shannon} yields the simpler but looser bound $\frac{\langle s \rangle}{m} \leq \frac{1}{2} + \sqrt{ \frac{n/m}{\log(m/n)}}.$ In summary, we have the following.
\begin{theorem}
\label{thm:DQI_shannon_limit}
Consider a max-XORSAT instance with $m$ constraints and $n$ variables where $\mathbf{v} \in \mathbb{F}_2^m$ is chosen uniformly at random. The expected number of satisfied clauses $\langle s \rangle$ obtained by DQI in the limit of large $n$ using a classical decoder is bounded by
\begin{equation}
\frac{\langle s \rangle}{m} \ \leq \ \frac{1}{2} + \sqrt{ H_2^{-1} \left( \frac{n}{m} \right) \left( 1 - H_2^{-1} \left( \frac{n}{m} \right) \right)} \ \leq \ \frac{1}{2} + \sqrt{ \frac{n/m}{\log(m/n)}}.
\end{equation}
\end{theorem}

\subsection{General Limitations of DQI Under Quantum Decoding}

The bit flip errors that must be decoded in DQI are in coherent superposition. One can treat these errors classically by implementing a classical decoding algorithm as a reversible circuit and separately performing classical error correction on each ``branch'' of the superposition. This is the strategy we describe and analyze throughout this manuscript. However, this is not necessarily optimal. Information-theoretically, at least, coherent errors are more advantageous than random errors.

In the preceding section, to avoid complications arising from the details of the distribution $\mathrm{Pr}(|\mathbf{e}| = k) = |w_k|^2$ over error weights arising in DQI, we approximated this by the binary symmetric channel with bit flip probability $p=\ell/m$. Similarly, we may approximate our distribution over coherent errors using the following channel
\begin{eqnarray*}
\ket{0} & \to & \ket{0_p} \\
\ket{1} & \to & \ket{1_p}
\end{eqnarray*}
where
\begin{eqnarray*}
\ket{0_p} & = & \sqrt{1-p} \ket{0} + \sqrt{p} \ket{1} \\
\ket{1_p} & = & \sqrt{p} \ket{0} + \sqrt{1-p} \ket{1}
\end{eqnarray*}
and $p = \ell/m$. The capacity of this channel is limited by Holevo's bound, which states
\begin{equation}
R \leq \chi(p),
\end{equation}
where
\begin{equation}
\chi(p) = S \left( \frac{1}{2} \ket{0_p}\bra{0_p} + \frac{1}{2} \ket{1_p} \bra{1_p} \right),
\end{equation}
and $S$ denotes the von Neumann entropy. By direct computation, one finds
\begin{equation}
\label{eq:chi}
\chi(p) = H_2 \left( \frac{1}{2} -\sqrt{p (1-p)} \right),
\end{equation}
where, as before, $H_2$ denotes the binary entropy function. When applying DQI to a max-XORSAT problem with $n$ variables and $m$ constraints, our decoding problem is for a code $C^\perp$ of rate $R=1-n/m$. Hence, the Holevo bound implies via the semicircle law \eq{eq:dqifrac}
\begin{equation}
\label{eq:sholevo}
\frac{\langle s \rangle_{\mathrm{Holevo}}}{m} \leq \frac{1}{2} + \sqrt{ \chi^{-1}\left( 1 - \frac{n}{m} \right) \left( 1 - \chi^{-1} \left( 1 - \frac{n}{m} \right) \right)}
\end{equation}
where the inverse $\chi^{-1}$ is well-defined, because $p=\ell/m \leq 1/2$. Using \eq{eq:chi} and simplifying one can equivalently write \eq{eq:sholevo} as
\begin{equation}
\frac{\langle s \rangle_{\mathrm{Holevo}}}{m} \leq 1 - H_2^{-1} \left(1-\frac{n}{m} \right).
\end{equation}

Next we apply this limit to Gallager's ensemble where $n/m = k/D$. If we compare this limit to the performance of simulated annealing and Prange's algorithm for average-case instances of max-XORSAT drawn from Gallager's ensemble, analogously to in Fig. \ref{fig:shannon_regions}, we do not find any region of the $(k,D)$-plane in which simulated annealing or Prange's algorithm beat this upper bound on DQI's performance. Thus, the Holevo bound does not allow us to rule out quantum advantage by DQI with quantum decoding for any region of the $(k,D)$ plane for Gallager-ensemble instances. But for other ensembles it may be successful in doing so.

In this section we consider the Holevo bound only as a tool for ruling out quantum advantage. But it also suggests that in regions where quantum advantage is not achievable using DQI with classical decoders it might be achievable using DQI with quantum decoders, if efficient quantum circuits can be found to implement them. Some exciting results on efficient quantum circuits for decoding coherent bit flip errors can be found in \cite{chailloux2024quantum, CLZ22, PR22}.

\subsection{Limitations of DQI with Classical Decoding for max-2-XORSAT}

The special case of max-$k$-XORSAT where $k=2$ behaves somewhat differently in the context of DQI than $k > 2$ and benefits from separate analysis. The max-2-XORSAT problem is widely studied, particularly the special case of max-2-XORSAT where $\mathbf{v}$ is the all-ones vector, which is known as MaxCut. Additionally, max-2-XORSAT is the unweighted special case of the Quadratic Unconstrained Binary Optimization (QUBO) problem.

The code $C^\perp$ dual to an instance of max-2-XORSAT is one in which each bit is contained in exactly two parity checks. Such codes are sometimes referred to as cycle codes (not to be confused with cyclic codes, which are unrelated). It is known that cycle codes have minimum distance that is at most logarithmic in their block length \cite{DZ97}. Although decoding of adversarial errors is impossible beyond half the code distance, a large fraction of random errors of far greater Hamming weight may be decodable.

Interestingly, for cycle codes, unlike for general LDPC codes, polynomial-time decoders can achieve exact maximum-likelihood decoding, \textit{i.e.} saturate the information-theoretic limit. A cycle code can be associated with a graph whose edges represent bits and whose vertices represent the parity checks. A given syndrome corresponds to a subset $T$ of the vertices, and the lowest Hamming weight error (which is the maximum likelihood error for the binary symmetric channel) is given by the minimum-weight T-join, which can be found in polynomial time \cite{EJ73}.

In \cite{DZ97}, the following theorem is proven.
\begin{theorem}
\label{thm:dz_limit}
    Consider an asymptotic family of LDPC codes in which each bit is contained in exactly two parity checks and each parity check contains exactly $D$ bits. The rate of these codes is then $R=1-2/D$. Let $p_2$ be the largest probability such that, if each bit is independently flipped with probability $p_2$, then maximum-likelihood decoding will recover the original codeword with probability converging to one in the limit of large block size. Then,
    \begin{equation}
    \label{eq:p2}
    p_{2} \leq \frac{1}{2} \frac{(1-\sqrt{R})^2}{1+R}.
    \end{equation}
\end{theorem}
This theorem provides new information specific to $k=2$ because, as one can easily verify, the above bound on $p_2$ lies below the corresponding Shannon bound $p_{\textrm{general}} \leq H_2^{-1}(1-R)$ for general codes. We also note that our computer experiments suggest that, for $D=3$ the bound \eq{eq:p2} is essentially saturated by the cycle codes arising from random 3-regular graphs.

For max-2-XORSAT, we are mainly interested in the case $2 \ell + 1 > d^\perp$ since, as noted above, $d^\perp = O(\log n)$. We therefore rely on Theorem \ref{thm:imperfect_decoding}, which together with Theorem \ref{thm:dz_limit} implies the following.

\begin{theorem}
\label{thm:max2limit}
Consider an asymptotic family of max-2-XORSAT instances in which $\mathbf{v} \in \mathbb{F}_2^m$ is chosen uniformly at random and each variable is contained in exactly $D$ constraints. In the limit of large $n$ the performance of DQI using classical decoders is limited by
\begin{equation}
\label{eq:cycle_limit}
\frac{\langle s \rangle}{m} \leq \frac{1}{2} + \frac{1}{2 (D-1)}.
\end{equation}
\end{theorem}

\begin{proof}
Rewriting \eq{eq:imperfect_asymptotic_beyond_avg} from Theorem \ref{thm:imperfect_decoding} in terms of the expected number of constraints satisfied $\langle s \rangle$ instead of the expected objective value $\langle f \rangle$ yields
\begin{equation}
\label{eq:proofstep}
\frac{\langle s \rangle}{m} = \frac{1}{2} + \sqrt{\frac{\ell}{m} \left( 1 - \frac{\ell}{m} \right)} - \varepsilon.
\end{equation}
From Theorem \ref{thm:dz_limit}, the information-theoretic limit of decoding cycle codes is at 
\begin{equation}
    \label{eq:inflimit}
    \frac{\ell}{m} = \frac{1}{2} \frac{(1-\sqrt{R})^2}{1+R} \quad \mathrm{and} \quad \varepsilon = 0
\end{equation}
Substituting \eq{eq:inflimit} and $R=1-2/D$ into \eq{eq:proofstep} and simplifying yields \eq{eq:cycle_limit}.
\end{proof}
 
\section{DQI for Folded Codes and over Extension Fields}
\label{sec:folded}

In \sect{sec:DQI}, we describe the DQI algorithm for the max-LINSAT problem over prime fields.  However, DQI works over extension fields as well, and also works for so-called \emph{folded codes}. In this section, we go through the details to extend DQI to these settings.  Our main motivation is to show that DQI is applicable to the problem considered by Yamakawa and Zhandry in \cite{YZ22}, which is similar to our OPI problem, but for folded Reed-Solomon codes. Moreover, as discussed in \sect{sec:prior}, if a variant of Theorem \ref{thm:semicircle} applies in the regime studied by Yamakawa and Zhandry, then equation \eqref{eq:semicircle_general} implies that DQI can find a solution satisfying all constraints.

\subsection{Folded max-LINSAT problem}

In \cite{YZ22}, Yamakawa and Zhandry define the following oracle problem, which they prove can be solved in polynomially many queries by a quantum computer but requires exponentially many queries for classical computers. 

\begin{definition}\label{def:YZ}
    Fix a prime power $q$ and integers $m,n,r$ such that $r$ divides $m$ and $m > n$.
    Let $\mathcal{O}:\{1, \ldots, m/r\} \times \F_q^r \to \{0,1\}$ be a random function.   
    Let $B \in \F_q^{m\times n}$ be a Vandermonde matrix (so that $B_{i,j} = \gamma^{ij}$ for $i\in\{0,\ldots,m-1\}$, $j\in\{0,\ldots,n-1\}$ where $\gamma$ is a primitive element of $\F_q$), written as
    \begin{equation}
        B = \left[\begin{array}{c}
        B_1 \\
        \hline
        B_2 \\
        \hline
        \vdots \\
        \hline
        B_{m/r}
        \end{array}\right]
    \end{equation}
    where $B_i \in \F_q^{r \times n}$. The Yamakawa-Zhandry problem is, given $B$ and query access to $\mathcal{O}$, to efficiently find $\mathbf{x} \in \F_q^n$ such that $\mathcal{O}(i, B_i\mathbf{x}) = 1$ for all $i \in \{1,\ldots,m/r\}.$
\end{definition}

The problem in Definition~\ref{def:YZ} has some similarities to Definition~\ref{def:linsat}, and especially to our OPI example in \sect{sec:OPI}, but is not exactly the same, as the problem in Definition~\ref{def:YZ} is for \emph{folded} Reed-Solomon codes, over an extension field $\mathbb{F}_q$.  Below, we extend DQI to this setting.  More precisely, we consider the following generalization of Definition~\ref{def:linsat}.

\begin{definition}[Folded max-LINSAT]\label{def:folded_maxlinsat}
Let $\mathbb{F}_q$ be a finite field, where $q$ is any prime power. For $i=1,\ldots, m/r$, let $f_i:\mathbb{F}_q^r \to \{+1,-1\}$ be arbitrary functions. Given a matrix $B \in \mathbb{F}_q^{m \times n}$ written as
\begin{equation}
    B = \left[\begin{array}{c}
    B_1 \\
    \hline
    B_2 \\
    \hline
    \vdots \\
    \hline
    B_{m/r}
    \end{array}\right]
\end{equation}
with $B_i \in \mathbb{F}_q^{r \times n}$, the \emph{$r$-folded max-LINSAT problem} is to find $\mathbf{x} \in \mathbb{F}_q^n$ maximizing the objective function
\begin{equation}\label{eq:folded_maxlinsat_objective}
    f(\mathbf{x}) = \sum_{i=1}^{m/r} f_i(B_i \mathbf{x}).
\end{equation}
\end{definition}

We now describe how to adapt the presentation in \sect{sec:genp} to the folded max-LINSAT problem. As before, we first discuss the properties of the DQI state $|P(f)\rangle := \sum_{\mathbf{x}\in\mathbb{F}_p^n}P(f(\mathbf{x}))|\mathbf{x}\rangle$ and then describe the algorithm for creating it.

Again, mirroring \sect{sec:genp}, we assume that $2\ell + 1 < d^\perp$ where $d^\perp$ is the minimum distance of the folded code $C^\perp = \{ \mathbf{d} \in \mathbb{F}_q^m : B^T \mathbf{d} = \mathbf{0} \}$. Note that folding affects the definition of $d^\perp$. In a folded code, we view every codeword $\mathbf{y}\in\mathbb{F}_q^m$ as an $m/r$-tuple $(\mathbf{y}_1,\ldots\mathbf{y}_{m/r})$ of elements of $\mathbb{F}_q^r$ and regard each $\mathbf{y}_i$ as a symbol. Consequently, the Hamming weight $|.|:\mathbb{F}_q^m \to \{0,\ldots,m/r\}$ associated with the folded code is the number of $\mathbf{y}_i$ not equal to $\mathbf{0}\in\mathbb{F}_q^r$.

\subsection{DQI Quantum State for Folded max-LINSAT}

As in \sect{sec:genp}, we assume that no $f_i$ is constant and that the preimages $F_i := f_i^{-1}(+1)$ have the same cardinality for all $i=1,\ldots,m/r$. This allows us to define $g_i$ as $f_i$ shifted and rescaled so that its Fourier transform
\begin{equation}
\label{eq:folded_tildeg}
\tilde{g}_i(\mathbf{y}) = \frac{1}{\sqrt{q^r}} \sum_{\mathbf{x} \in \mathbb{F}_q^r} \omega_p^{\mathrm{tr}(\mathbf{y} \cdot \mathbf{x})} g_i(\mathbf{x}),
\end{equation}
where $\mathrm{tr}: \mathbb{F}_q \to \mathbb{F}_p$ given by $\mathrm{tr}(x) = x + x^p + x^{p^2} + \ldots + x^{q/p}$ is the field trace, vanishes at $\mathbf{y}=\mathbf{0}\in\mathbb{F}_q^r$ and is normalized, \textit{i.e.}  $\sum_{\mathbf{x}\in\mathbb{F}_q^r}|g_i(\mathbf{x})|^2=\sum_{\mathbf{y}\in\mathbb{F}_q^r}|\tilde{g}_i(\mathbf{y})|^2=1$. More explicitly, we define
\begin{equation}
    g_i(\mathbf{x}) := \frac{f_i(\mathbf{x}) - \overline{f}}{\varphi}
\end{equation}
where $\overline{f} := \frac{1}{q^r}\sum_{\mathbf{x}\in\mathbb{F}_q^r} f_i(\mathbf{x})$ and $\varphi := \left(\sum_{\mathbf{y}\in\mathbb{F}_q^r}|f_i(\mathbf{y})-\overline{f}|^2\right)^{1/2}$. The sums $f(\mathbf{x})=\sum_{i=1}^{m/r}f_i(B_i\mathbf{x})$ and $g(\mathbf{x})=\sum_{i=1}^{m/r}g_i(B_i\mathbf{x})$ are related by $f(\mathbf{x})=g(\mathbf{x})\varphi + m\overline{f}/r$. Substituting this relationship for $f$ in $P(f)$, we obtain an equivalent polynomial $Q(g)$ which, by Lemma \ref{thm:esp} in Appendix \ref{app:esp}, can be expressed as a linear combination of elementary symmetric polynomials $P^{(k)}$
\begin{equation}
    Q(g(\mathbf{x})) := \sum_{l=0}^\ell u_k P^{(k)}\left(g_1(B_1\mathbf{x}), \ldots, g_{m/r}(B_{m/r}\mathbf{x})\right).
\end{equation}
We will write the DQI state
\begin{equation}
    |P(f)\rangle = \sum_{\mathbf{x}\in\mathbb{F}_q^n} P(f(\mathbf{x}))|\mathbf{x}\rangle = \sum_{\mathbf{x}\in\mathbb{F}_q^n} Q(g(\mathbf{x}))|\mathbf{x}\rangle = |Q(g)\rangle
\end{equation}
as a linear combination of $|P^{(0)}\rangle, \ldots, |P^{(\ell)}\rangle$ defined as
\begin{equation}
    \ket{P^{(k)}} := \frac{1}{\sqrt{q^{n-rk}{m/r \choose k}}} \sum_{\mathbf{x} \in \F_q^n} P^{(k)}\left(g_1(B_1\mathbf{x}),\ldots,g_{m/r}(B_{m/r}\mathbf{x})\right) \ket{\mathbf{x}}.
\end{equation}
By definition,
\begin{eqnarray}
    P^{(k)}(g_1(B_1 \mathbf{x}),\ldots,g_{m/r}(B_{m/r}\mathbf{x})) & = & \sum_{\substack{i_1,\ldots,i_k \\ \mathrm{distinct}}} \prod_{i \in \{i_1,\ldots,i_k\}} g_i(B_i \mathbf{x}) \\
    & = & \sum_{\substack{i_1,\ldots,i_k \\ \mathrm{distinct}}} \prod_{i \in \{i_1,\ldots,i_k\}} \left( \frac{1}{\sqrt{q^r}} \sum_{\mathbf{y}_i \in \mathbb{F}_q^r} \omega_p^{ -\mathrm{tr}(\mathbf{y}_i \cdot B_i \mathbf{x})} \ \tilde{g}_i(\mathbf{y}_i) \right) \\
    & = &  \sum_{\substack{\mathbf{y} \in \mathbb{F}_q^m \\ |\mathbf{y}| = k }} \frac{1}{\sqrt{q^{rk}}} \  \omega_p^{-\mathrm{tr}((B^T \mathbf{y}) \cdot \mathbf{x})} \ \prod_{\substack{i=1 \\ \mathbf{y}_i \neq 0}}^{m/r} \tilde{g}_i(\mathbf{y}_i)\label{eq:esp_state_for_folded_dqi}
\end{eqnarray}
where $|.|:\mathbb{F}_q^m \to \{0,\ldots,m/r\}$ is the Hamming weight associated with the folded code. From \eqref{eq:esp_state_for_folded_dqi} we see that the Quantum Fourier Transform of $\ket{P^{(k)}}$ is
\begin{equation}
    |\widetilde{P}^{(k)}\rangle := F^{\otimes n}|P^{(k)}\rangle = \frac{1}{\sqrt{\binom{m/r}{k}}} \sum_{\substack{\mathbf{y}\in\mathbb{F}_q^m\\|\mathbf{y}|=k}} \left( \prod_{\substack{i=1\\\mathbf{y}_i \neq 0}}^{m/r} \tilde{g}_i(\mathbf{y}_i) \right)
    |B^T \mathbf{y}\rangle.
\end{equation}
As in \sect{sec:genp}, if $|\mathbf{y}| < d^\perp/2$, then $B^T \mathbf{y}$ are all distinct and $\ket{P^{(0)}},\ldots,\ket{P^{(\ell)}}$ form an orthonormal set. Consequently,
\begin{equation}
    \ket{P(f)} = \sum_{k=0}^\ell w_k \ket{P^{(k)}}
\end{equation}
where
\begin{equation}
    w_k = u_k\sqrt{q^{n-rk} \binom{m/r}{k}},
\end{equation}
and $\langle P(f)|P(f)\rangle = \|\mathbf{w}\|^2$.

\subsection{DQI Algorithm for Folded max-LINSAT}

Similarly to DQI for general max-LINSAT, the algorithm for folded max-LINSAT uses three quantum registers: a \textit{weight register} comprising $\lceil\log_2\ell\rceil$ qubits, an \textit{error register} with $m\lceil\log_2 q\rceil$ qubits, and a \textit{syndrome register} with $nr\lceil\log_2 q\rceil$ qubits. We will consider the error and syndrome registers as consisting of $m/r$ and $n$ subregisters, respectively, where each subregister consists of $r\lceil\log_2 q\rceil$ qubits. We will also regard the rightmost qubits from all subregisters of the error register as forming the \textit{mask register} of $m/r$ qubits. We assume that the encoding of $\mathbb{F}_q$ into each of the $r$ components of a subregister uses a basis that contains $1$, so that $1\in\mathbb{F}_q$ is encoded as $\ket{0,\ldots,0,1}$.

We begin by initializing the weight register in the normalized state $\sum_{k=0}^\ell w_k\ket{k}$. Next, we prepare the mask register in the Dicke state corresponding to the weight register
\begin{equation}
    \to \sum_{k=0}^\ell w_k \ket{k} \frac{1}{\sqrt{\binom{m/r}{k}}} \sum_{\substack{\boldsymbol{\mu} \in \{0,1\}^{m/r} \\ |\boldsymbol{\mu}| = k}} \ket{\mathbf{\boldsymbol{\mu}}}
\end{equation}
and then uncompute the weight register, obtaining
\begin{equation}
    \to \sum_{k=0}^\ell w_k \frac{1}{\sqrt{\binom{m/r}{k}}} \sum_{\substack{\boldsymbol{\mu} \in \{0,1\}^{m/r} \\ |\boldsymbol{\mu}| = k}} \ket{\mathbf{\boldsymbol{\mu}}}.
\end{equation}
Let $G_i$ denote a unitary acting on $r\lceil\log_2 q\rceil$ qubits that sends $\ket{\mathbf{0}}$ to $\ket{\mathbf{0}}$ and $\ket{0,\ldots,0,1}$ to
\begin{equation}
    \sum_{\mathbf{c}\in\mathbb{F}_q^r} \tilde{g}_i(\mathbf{c}) \ket{\mathbf{c}} = \sum_{\mathbf{c}\in\mathbb{F}_q^r \setminus \{\mathbf{0}\}} \tilde{g}_i(\mathbf{c}) \ket{\mathbf{c}}.
\end{equation}
See \sect{sec:dqi_folded_oracle} below for an implementation of $G_i$ in the oracle setting. As in the case of DQI for general max-LINSAT, parallel application $G:=\prod_{i=1}^{m/r}G_i$ of $G_i$ to all subregisters of the error register preserves the Hamming weight, so that
\begin{equation}
    \sum_{\substack{\boldsymbol{\mu} \in \{0,1\}^{m/r} \\ |\boldsymbol{\mu}| = k}} G\ket{\mathbf{\boldsymbol{\mu}}} = \sum_{\substack{\mathbf{y} \in \mathbb{F}_q^m \\ |\mathbf{y}|=k}} \tilde{g}_{y(1)}\left(\mathbf{y}_{y(1)}\right) \ldots \tilde{g}_{y(k)}\left(\mathbf{y}_{y(k)}\right) \ket{\mathbf{y}}
\end{equation}
where $\mathbf{y}_i$ for $i\in\{1,\ldots,m/r\}$ denotes the $i\th$ entry of $\mathbf{y}$, and $y(j)$ denotes the index of the $j\th$ nonzero entry of $\mathbf{y}$. Consequently, by applying $G$ to the error register, we obtain
\begin{gather}
    \to \sum_{k=0}^\ell w_k \frac{1}{\sqrt{\binom{m/r}{k}}} \sum_{\substack{\mathbf{y} \in \mathbb{F}_q^m \\ |\mathbf{y}|=k}} \tilde{g}_{y(1)}(\mathbf{y}_{y(1)}) \ldots \tilde{g}_{y(k)}(\mathbf{y}_{y(k)}) \ket{\mathbf{y}}.
\end{gather}
Next, we reversibly compute $B^T\mathbf{y}$ into the syndrome register
\begin{gather}
    \to \sum_{k=0}^\ell w_k \frac{1}{\sqrt{\binom{m/r}{k}}} \sum_{\substack{\mathbf{y} \in \mathbb{F}_q^m \\ |\mathbf{y}|=k}} \tilde{g}_{y(1)}(\mathbf{y}_{y(1)}) \ldots \tilde{g}_{y(k)}(\mathbf{y}_{y(k)}) \ket{\mathbf{y}} \ket{B^T\mathbf{y}}.
\end{gather}
The task of finding $\mathbf{y}$ from $B^T \mathbf{y}$ is the bounded distance syndrome decoding problem on the folded code $C^\perp = \{ \mathbf{y} \in \mathbb{F}_q^m : B^T \mathbf{y} = \mathbf{0} \}$. Consequently, uncomputing the content of the error register can be done efficiently whenever the bounded distance decoding problem on $C^\perp$ can be solved efficiently out to distance $\ell$.

Uncomputing disentangles the syndrome register from the error register, leaving behind
\begin{align}
    \to & \sum_{k=0}^\ell w_k \frac{1}{\sqrt{\binom{m/r}{k}}} \sum_{\substack{\mathbf{y} \in \mathbb{F}_q^m \\ |\mathbf{y}|=k}} \tilde{g}_{y(1)}(\mathbf{y}_{y(1)}) \ldots \tilde{g}_{y(k)}(\mathbf{y}_{y(k)}) \ket{\mathbf{y}} \ket{B^T\mathbf{y}} \\
    = & \sum_{k=0}^\ell w_k \frac{1}{\sqrt{\binom{m/r}{k}}} \sum_{\substack{\mathbf{y} \in \mathbb{F}_q^m \\ |\mathbf{y}|=k}} \left( \prod_{\substack{i=1\\\mathbf{y}_i \neq 0}}^{m/r} \tilde{g}_i(\mathbf{y}_i) \right) \ket{B^T \mathbf{y}} \\
    = & \sum_{k=0}^\ell w_k \ket{\widetilde{P}^{(k)}}
\end{align}
which becomes the desired state $\ket{P(f)}$ after applying the Quantum Fourier Transform.

\subsection{Oracle Access to Objective Function}
\label{sec:dqi_folded_oracle}

In our discussion of DQI for general max-LINSAT, we assumed that the field size $p$ is polynomial in $n$. In that setting, the objective functions $f_i$ can be given explicitly by their values at all elements of $\mathbb{F}_p$ and hence the gates $G_i$ can be realized efficiently using techniques from \cite{LKS18}. By contrast, in Yamakawa-Zhandry problem, the random function $\mathcal{O}$ is available via query access to an oracle.

Here, we show how to realize the gates $G_i$ in a setting where the functions $f_i(.) = \mathcal{O}(i,.)$ are provided by oracles and without assuming that the field size $q$ is polynomial in $n$. For simplicity, we assume that $q$ is a power of two, $f_i$ is balanced, \textit{i.e.} $|f^{-1}(+1)| = q^r/2$, and $f_i(\mathbf{0}) = +1$.

Suppose oracle $U_i$ for $f_i$ is defined as
\begin{equation}
    U_i\ket{\mathbf{x}} = f_i(\mathbf{x}) \ket{\mathbf{x}}
\end{equation}
for $\mathbf{x}\in\mathbb{F}_q^r$ and let
\begin{equation}
    \label{eq:qft_over_fqr}
    F\ket{\mathbf{x}} = \frac{1}{\sqrt{q^r}} \sum_{\mathbf{y} \in \mathbb{F}_q^r} (-1)^{\mathrm{tr}(\mathbf{x} \cdot \mathbf{y})} \ket{\mathbf{y}}
\end{equation}
be the Quantum Fourier Transform on a subregister of the error register. Then $G_i$ can be realized using a single auxiliary qubit as shown in Fig.~\ref{fig:g_i_from_oracle}. When the input is $\ket{\mathbf{0}}$, then all operations act as identity, so $G_i\ket{\mathbf{0}}=\ket{\mathbf{0}}$. When the input is $\ket{0,\ldots,0,1}$, then the SWAP gate sets the auxiliary qubit to the $\ket{1}$ state and the input qubits into $\ket{\mathbf{0}}$. Subsequent three operations yield
\begin{align}
    FU_iF\ket{\mathbf{0}} = \sum_{\mathbf{y}\in\mathbb{F}_q^r} \tilde{g}_i(\mathbf{y})\ket{\mathbf{y}}.
\end{align}
Moreover, $\tilde{g}_i(\mathbf{0})=0$, so the last operation uncomputes the auxiliary qubit.

\begin{figure}
\begin{center}
\begin{quantikz}
\lstick{$\ket{0}$} & \swap{1} & \ctrl{1}    &               & \ctrl{1}    & \targ{} & \rstick{$\ket{0}$} \\
& \targX{} & \gate[3]{F} & \gate[3]{U_i} & \gate[3]{F} & \gate[3]{\ne 0} \wire[u]{q} & \\
\lstick{\vdots} \setwiretype{n} & & & & & & \rstick{\vdots} \\
& & & & & & 
\end{quantikz}
\end{center}
\caption{Quantum circuit implementing $G_i$ gate using oracle access. The circuit takes as input a single subregister of the error register and employs an auxiliary qubit initialized in $\ket{0}$. It begins by swapping the qubit corresponding to $1$ in the input with the auxiliary qubit. Then it applies the Quantum Fourier Transform (QFT) $F$ over $\mathbb{F}_q^r$, followed by a call to the oracle $U_i$, followed by another QFT. The QFT gates are conditional on the auxiliary qubit. The circuit ends with the uncomputation of the auxiliary qubit by flipping it when the subregister is non-zero.}
\label{fig:g_i_from_oracle}
\end{figure}

\section{Multivariate Optimal Polynomial Intersection}
\label{sec:multivariate}

In this section we describe the application of DQI to the multivariate generalization of the OPI problem.

\begin{definition}
    \label{def:GPR}
    Let $r,u,m,q$ be integers where $q$ is a prime power, $1 \leq u \leq (q-1)m$, and $1 \leq r < q$. For each $\mathbf{z} \in \mathbb{F}_q^m$ let $L(\mathbf{z}) \subset \mathbb{F}_q$ with $|L(\mathbf{z})| = r$. Given such subsets specified by explicit tables, the multivariate optimal polynomial intersection problem $\mathrm{mOPI}(r,u,m,q)$ is to find a polynomial $Q \in \mathbb{F}_q[z_1,\ldots,z_m]$ of total degree at most $u$ that maximizes the objective function $f$ defined by
    \begin{equation}
        f[Q] = \left| \left\{ \mathbf{z} \in \mathbb{F}_q^m : Q(\mathbf{z}) \in L(\mathbf{z}) \right\} \right|.
    \end{equation}
\end{definition}

\noindent
An instance of $\mathrm{mOPI}(r,u,m,q)$ is specified by the lists $L(\mathbf{z}) \subset \mathbb{F}_q$ for all $\mathbf{z} \in \mathbb{F}_q^m$. Specifying these requires $q^{m+1}$ bits. Hence, throughout this section when we say ``polynomial-time'' we mean polynomial in $q^m$. The OPI problem is the special case of mOPI where $m=1$ and $q$ is a prime of polynomial magnitude. 

We next recount some background information about Reed-Muller codes, which we will need to describe how DQI can be applied to the mOPI problem. The following exposition is based on \cite{ASY20}.

For a given multivariate polynomial $Q(z_1,\ldots,z_m)$ over $\mathbb{F}_q$ let $\mathrm{Eval}(Q)$ be the symbol string in $\mathbb{F}_q^n$ obtained by evaluating $Q$ at all possible assignments to $z_1,\ldots,z_m$ in lexicographical order. Hence, 
\begin{equation}
    n = q^m.
\end{equation}
Let $\mathcal{P}_{q,m,u}$ be the set of polynomials in $\mathbb{F}_q[z_1,\ldots,z_m]$ of total degree at most $u$. In $\mathbb{F}_q$, raising a variable to power $q-1$ yields the identity. Thus, the number of distinct monomials from which elements of $\mathcal{P}_{q,m,u}$ can be constructed is
\begin{equation}
    \label{eq:kgen}
    k = \left| \left\{ (i_1,\ldots,i_m) \ : \ 0 \leq i_j \leq q-2 \textrm{ and } \sum_{j=1}^m i_j \leq u\right\} \right|.
\end{equation}
A polynomial in $\mathcal{P}_{q,m,u}$ is determined by choosing the coefficients from $\mathbb{F}_q$ for each of these $k$ monomials. Hence, $|\mathcal{P}_{q,m,u}| = q^k$.
\begin{definition}
    Given a prime power $q$ and integers $u,m$ satisfying $1 \leq u \leq m(q-1)$, the corresponding Reed-Muller code $\mathrm{RM}_q(u,m)$ is $\{ \mathrm{Eval}(Q) : Q \in \mathcal{P}_{q,m,u}\}$.
\end{definition}
\noindent
We observe that $\mathrm{RM}_q(u,m)$ is an $\mathbb{F}_q$-linear code; it linearly maps from the $k$ coefficients that define $Q$ to the $n$ values in $\mathrm{Eval}(Q)$. \\

\noindent
As discussed in \cite{PW04}, the dual of a Reed-Muller code is also Reed-Muller code.
\begin{theorem}
    \label{thm:rmdual}
    The dual code to $\mathrm{RM}_q(u,m)$ is $\mathrm{RM}_q(u^\perp,m)$, where
    \begin{equation}
        \label{eq:uperp}
        u^\perp = m(q-1)-u-1. 
    \end{equation}
\end{theorem}

\noindent
From \cite{DGM70}, based on \cite{KLP68}, we have the following.

\begin{theorem}
    \label{thm:rmdist}
    Let $\alpha$ and $\beta$ be the quotient and remainder obtained when dividing $u$ by $q-1$. That is, $u = \alpha(q-1)+\beta$ with $0 \leq \beta < q-1$. Then the distance of $\mathrm{RM}_q(u,m)$ is $d = (q-\beta)q^{m-\alpha-1}$.
\end{theorem}

\noindent
As discussed in Chapter 13 of \cite{GRM23}, the methods of \cite{PW04} imply a polynomial-time classical reduction from decoding Reed-Muller codes to decoding Reed-Solomon codes, and therefore the following theorem holds.

\begin{theorem}
    \label{thm:rmdecode}
    The code $\mathrm{RM}_q(u,m)$ can be decoded from errors up to weight $\left \lfloor \frac{d-1}{2} \right \rfloor$ with perfect reliability by a polynomial time classical algorithm.
\end{theorem}

From the above facts we see that DQI reduces the problem $\mathrm{mOPI}(r,u,m,q)$ to decoding of a Reed-Muller code $\mathrm{RM}_q(u^\perp,m)$, where $u^\perp = m(q-1) - u - 1$. Given an algorithm that can decode $\mathrm{RM}_q(u^\perp, m)$ out to $\ell$ errors, DQI will achieve an expected value of $f$ given by the following theorem, which is a straightforward generalization of Lemma \ref{thm:genw}. (Here we include finite-size corrections since taking a limit of large problem size while keeping the ratio of constraints to variables fixed is not straightforward in the context of mOPI.)

\begin{theorem}
    \label{thm:semicircle_general}
    Suppose we have an efficient algorithm that decodes $C^\perp = \mathrm{RM}_q(u^\perp, m)$ out to $\ell$ errors. Let $d^\perp$ be the distance of $C^\perp$. If $2 \ell + 1 < d^\perp$, then for any instance of $\mathrm{GPR}(r,u,m,q)$, DQI produces in polynomial time samples from polynomials $Q$ such that expected value of the objective $f[Q]$ is
    \begin{equation}
        \langle f \rangle = q^m \frac{r}{q} + \frac{\sqrt{r(q-r)}}{q} \lambda_{\max}^{(q)}
    \end{equation}
    where $\lambda_{\max}^{(q)}$ is the largest eigenvalue of the following $(\ell+1)\times(\ell+1)$ symmetric tridiagonal matrix
    \begin{equation}
        A^{(q)}=\begin{bmatrix}
        0   & a_1 & \\
        a_1 & d   & a_2   & \\
            & a_2 & 2d    & \ddots  & \\
            &     & \ddots &        & a_\ell \\
            &     &        & a_\ell & \ell d
        \end{bmatrix}
    \end{equation}
    with $a_k=\sqrt{k(q^m-k+1)}$ and $d = \frac{q-2r}{\sqrt{r(q^m-r)}}$.
\end{theorem}

Together, theorems \ref{thm:semicircle_general}, \ref{thm:rmdecode}, and \ref{thm:rmdist} yield a strong performance guarantee for DQI applied to $\mathrm{mOPI}(r,u,m,q)$, particularly when $q \geq u$. Comparing this performance against competing classical algorithms remains for future work.

\section{Resource Estimation for OPI}
\label{sec:resources}

In this section we look at the resources required to construct a quantum circuit for syndrome decoding of Reed Solomon codes using the Berlekamp-Massey decoding algorithm \cite{B15}. Since the decoding step is the dominant cost in DQI, this gives us an estimate of the resource requirements for DQI to solve OPI. Whereas our example in \sect{sec:OPI} uses a ratio of ten constraints per variable, here we consider two constraints per variable, as this appears to be a more optimal choice for the purpose of solving classically-intractable instances of OPI using as few quantum gates and qubits as possible. That is, throughout this section, the number of constraints is $p-1$ and the number of variables is $n \simeq p/2$.

An outline of the key steps of Berlekamp-Massey syndrome decoding algorithm is given in \cref{algo:bkm_decoding}. Readers seeking more detail can see Appendix E of \cite{dodis2008fuzzy}, where the algorithm used here is explained in the context of BCH decoding. Asymptotically, the most computationally intensive step is the subroutine  BerlekampMasseyLFSR, responsible for finding the shortest linear feedback shift register (LFSR). Standard irreversible implementations of this subroutine often rely on conditional branching and variable assignments that do not directly translate to efficient reversible circuits. Consequently, a naive reversible implementation of this subroutine can lead to significant overhead in terms of quantum resources, potentially undermining the overall efficiency of the DQI algorithm.

We address this challenge by presenting an optimized reversible implementation of the Berlekamp-Massey algorithm for finding the shortest LFSR in \cref{algo:bkm_lfsr}. Our implementation is a generalization of the implementation given by \cite{chevignard2024reducing}, and works for any finite field $\F_q$. For a sequence of length $n$ and a retroaction polynomial of maximum degree $\ell$, \cref{algo:bkm_lfsr} can be implemented as a quantum circuit using $\mathcal{O}(n \cdot \ell)$ multiplications in $\mathbb{F}_{q}$ and using $2 \cdot(n + \ell)\cdot \lceil\log_2{q}\rceil + n + \log_2{\ell}$ qubits. For finite fields $\F_p$, where $p$ is a prime, we list the costs for performing modular arithmetic operations in \cref{tab:arithmetic_costs}. In our Zenodo record (\url{https://doi.org/10.5281/zenodo.13327870}) we provide an implementation of \cref{algo:bkm_lfsr} for prime fields $\F_p$ using Qualtran \cite{harrigan2024expressing}. We present the resulting resource estimates in \cref{tab:lfsr_resource_estimates}.

\begin{table}
    \centering
    \scalebox{0.8}{
    \begin{tabular}{|c|c|c|c|c|c|c|}
    \hline
    & Subroutine Name & Subroutine Action & Toffoli & Ancilla & Reference \\
    \hline
    \multirow{6}{*}{\rotatebox[origin=c]{90}{Arithmetic}}
    & & & & & \\
    & quantum-classical addition &$\ket{x}\rightarrow\ket{x+K}$ 
    & $n$  & $n$ & \cite{fedoriaka2025newcircuitquantumadder} \\

    & controlled quantum-classical addition &$\ket{c}\ket{x}\rightarrow\ket{c}\ket{x+cK}$ 
    & $n$  & $2n$ & \cite{häner2020improvedquantumcircuitselliptic} \\ 
    
    & quantum-quantum addition &$\ket{x}\ket{y}\rightarrow\ket{x}\ket{x+y}$ 
    & $n$  & $n$ & \cite{Gidney_2018} \\

    & controlled quantum-quantum addition  
    &$\ket{c}\ket{x}\ket{y}\rightarrow\ket{x}\ket{x+cy}$
    & $2n$ & $n$ & \cite{Gidney_2018}\\
    & & & & & \\
    \hline
    \multirow{9}{*}{\rotatebox[origin=c]{90}{Modular Arithmetic}}
    & & & & & \\
    
    & modular quantum-classical addition 
    &$\ket{x}\rightarrow\ket{(x+K)\bmod P}$ 
    & $2.5n$ & $n$ & \cite{luongo2024measurementbaseduncomputationquantumcircuits} \\

    & modular quantum-classical controlled addition 
    &$\ket{c}\ket{x}\rightarrow\ket{c}\ket{(x+cK)\bmod P}$ 
    & $2.5n$  & $2n$ & \cite{luongo2024measurementbaseduncomputationquantumcircuits} \\

    & modular quantum-quantum addition  
    &$\ket{x}\ket{y}\rightarrow\ket{x}\ket{(x+y)\bmod P}$ 
    & $3.5n$  & $n$ & \cite{luongo2024measurementbaseduncomputationquantumcircuits} \\

    & modular quantum-quantum controlled addition  
    &$\ket{c}\ket{x}\ket{y}\rightarrow\ket{c}\ket{x}\ket{(x+cy)\bmod P}$ 
    & $4.5n$  & $n$ & \cite{luongo2024measurementbaseduncomputationquantumcircuits} \\

    & modular controlled scaled addition 
    &$\ket{x}\ket{y} \rightarrow \ket{x}\ket{(y + x*K) \bmod P}$ &
    $2.5n^2$  & $2n$ &  \cite{Gidney_2021} \\

    & modular quantum-quantum multiplication 
    &$\ket{x}\ket{y} \rightarrow \ket{x}\ket{y}\ket{(x*y) \bmod P}$ &
    $3n^2 + 2n - 1$  & $2n$ &  \cite{litinski2023compute256bitellipticcurve} \\

    & modular multiplicative inverse 
    & $\ket{x} \rightarrow \ket{x^{-1} \bmod P}$  
    & $26n^2 + 2n$ &  $5n$ & \cite{litinski2023compute256bitellipticcurve} \\
    & & & & & \\
    \hline
    \end{tabular}
    }
    \caption{Quantum circuit costs for modular arithmetic operations on $n$-bit operands in $\F_p$.}
    \label{tab:arithmetic_costs}
\end{table}

\begin{table}[H]
    \centering
    \begin{tabular}{|c|c|c|c|c|}
    \hline
    \hline
    $RS_{p}(N, K)$  &  BerlekampMasseyLFSR($n$, $\ell$, $\lceil\log_2{p}\rceil$) & Toffoli & Clifford & Qubits \\
    \hline
    \hline
    $RS_{67}(66, 34)$  &  $(32, 16, 7)$ & \num{999850} & \num{5153696} & \num{907} \\
    \hline
    $RS_{131}(130, 66)$  &  $(64, 32, 8)$ & \num{4607692} & \num{23798358} & \num{1912} \\
   \hline
    $RS_{257}(256, 128)$  &  $(128, 64, 9)$ & \num{21546662} & \num{110333236} & \num{4101} \\
   \hline
    $RS_{521}(520, 264)$  &  $(256, 128, 10)$ & \num{101011904} & \num{521039438} & \num{8850} \\
   \hline
    $RS_{1031}(1030, 518)$  &  $(512, 256, 11)$ & \num{471606346} & \num{2444572208} & \num{19103} \\
   \hline
    $RS_{2153}(2152, 1128)$  &  $(1024, 512, 12)$ & \num{2186280548} & \num{11380033666} & \num{41132} \\
   \hline
\end{tabular}
\caption{Cost of finding shortest linear feedback shift register (LFSR) using \cref{algo:bkm_lfsr}, implemented and analyzed using Qualtran \cite{harrigan2024expressing}. 
This is the most expensive step of Berlekamp-Massey syndrome decoding algorithm \cite{B15} for Reed Solomon codes, as presented in \cref{algo:bkm_decoding}.  
Here $n = N - K$ is the number of syndromes, which is equal to the length of the input sequence to Berlekamp-Massey algorithm, and $\ell = \frac{n}{2}$ is the maximum number of correctable errors, which is equal to the degree of retroaction polynomial.}
\label{tab:lfsr_resource_estimates}
\end{table}

As a point of comparison, we can estimate the classical cost of solving the instances in Table \ref{tab:lfsr_resource_estimates} by repeating Prange's algorithm with different size-$n$ random subsets of the constraints until the target satisfaction fraction is reached. As an example, consider the case $p=521$. Here, we have 260 variables and 521 constraints, of which DQI is able to satisfy 486. In a given trial of Prange's algorithm one uses Gaussian elimination to obtain a solution that is guaranteed to satisfy 260 of the constraints and which satisfies each of the remaining 261 constraints with probability $r/p = 1/2$. To beat DQI Prange's algorithm needs to satisfy at least 226 among these remaining 261 constraints. The probability of this on a given trial is
\begin{equation}
    \frac{1}{2^{261}} \sum_{m=226}^{261} \binom{261}{m} \simeq 10^{-35}.
\end{equation}
Thus solving this instance by the repeat-until-success version of Prange's algorithm should require on the order of $10^{35}$ repetitions. Therefore, assuming a CPU can execute on the order of a billion elementary operations per second this yields a total cost of at least $10^{26}$ CPU-seconds, even if each trial could be parallelized into a single clock-cycle.

As noted in \cite{BMN21}, achieving such large separation between classical and quantum resource costs at reasonable problem size is only possible when the underlying source of advantage goes beyond quadratic speedups based on Grover's algorithm and its generalizations such as amplitude amplification and quantum walks. More specifically, in \cite{CKM19} the cost of solving random 14-SAT at the satisfiability phase transition using Grover's algorithm and quantum backtracking methods was estimated, and it was found that Grover's algorithm required lower resources, namely a T-depth of $10^{14}$ and a Toffoli count of $10^{19}$ on instances that they estimate would require $10^{10}$ CPU-seconds classically. 

Although the repeat-until-success version of Prange's algorithm is a standard technique that provides a useful point of comparison, we do not specifically claim it to be optimal. In \sect{sec:list-recovery} and \sect{sec:lattice} we have surveyed all of the classical algorithms that we can find in the literature applicable to OPI and find that none yield polynomial-time solutions in the parameter regime discussed here. Nevertheless, there may be classical techniques to achieve better scaling than Prange's algorithm, while still requiring exponential time; such incremental improvements to exponential runtimes are quite common in the cryptanalysis literature (see for example \cite{BCL17}). We leave the investigation of this possibility to future work.

\algnewcommand{\LineComment}[1]{\State \(\triangleright\) #1}
\begin{algorithm}
    \caption{Syndrome decoding of $RS_{q}(\vec{\gamma}, N, K)$ using Berlekamp Massey algorithm}
    \begin{algorithmic}[1]
        \State \textbf{Input:} A list $s = [s_1, s_2, \dots, s_{n}]$ of syndromes where $n = N - K$.
        \State \textbf{Output:} A list $e = [e_1, e_2, \dots, e_{N}]$ of error values such that $|\text{supp}(e)| \le \ell$ where $\ell = \frac{n}{2}$
        \LineComment{\underline{Step-1:} Find error locator polynomial $\sigma(Z)$ of degree $\le \ell$, by solving for shortest linear feedback shift register (LFSR) using Berlekamp-Massey algorithm.}
        \LineComment{Uses $\mathcal{O}(N^2)$ multiplications and $\mathcal{O}(N)$ inversions of elements in $\F_q$.}
        \State $\sigma(Z) \gets \text{BerlekampMasseyLFSR}(s, \ell)$ 
        \LineComment{\underline{Step-2:} Find error evaluator polynomial $\Omega(Z)$ via fast polynomial multiplication using Number Theoretic Transform (NTT).}
        \LineComment{Uses $\mathcal{O}(N \log{N})$ multiplications of elements in $\F_q$.}
        \State $\Omega(Z) \gets \sigma(Z) \times S(Z) \mod Z^{n+1}$
        \LineComment{\underline{Step-3:} Find roots of $\sigma(Z)$ to determine locations of errors by evaluating $\sigma(Z)$ for all $N$ roots of unity ($\vec{\gamma}$) using NTT.}
        \LineComment{Uses $\mathcal{O}(N \log{N})$ multiplications of elements in $\F_q$.}
        \State $\text{sigma\_roots} \gets \text{chien\_search}(\sigma(Z), \vec{\gamma})$
        \LineComment{\underline{Step-4:} Use evaluations of $\Omega(Z)$ and $\sigma^{\prime}(Z)$ for all N roots of unity $(\vec{\gamma})$ to determine the values of errors $e$.}
        \LineComment{Uses $\mathcal{O}(N \log{N})$ multiplications and $\mathcal{O}(N)$ divisions of elements in $\F_q$.}
        \State $e \gets \text{forneys\_algorithm}(\Omega(Z), \sigma(Z), \text{sigma\_roots}, \vec{\gamma}$)
    \end{algorithmic}
    \label{algo:bkm_decoding}
\end{algorithm}

\begin{algorithm}
    \caption{Reversible Berlekamp-Massey algorithm to find shortest LFSR over $\F_q$.}
    \begin{algorithmic}[1]
        \State \textbf{Input:} A list $s = [s_0, s_1, \dots, s_{n-1}]$ in $\F_q$ and an integer $\ell \leq n$. 
        \State \textbf{Output:} Retroaction polynomial $C(X) \in \F_q[X]$ such that $\deg(C(X)) \leq \ell$.
        \State \textbf{Storage:} Register for $s$ ($n$ elements in $\F_q$) and $C(X)$ ($\ell$ elements in $\F_q$)
        \State \textbf{Garbage:} Register for $L$ ($\log_2{\ell}$ bits), $B(X)$ ($\ell$ elements in $\F_q$), $d=[d_0, d_2, \dots, d_{n-1}]$ ($n$ elements in $\F_q$) and $v = [v_0, v_2, \dots, v_{n-1}]$ ($n$ bits)
        \State \textbf{Total Qubits:} $2\times(n + \ell)\times \lceil\log_2{q}\rceil + \log_2(\ell) + n$ 
        \State $c \gets [-1]$
        \Comment{$c$ and $b$ correspond to list of coefficients for $C(X)$ and $B(X)$; $C(X) = \sum_{i=0}^{\ell}c_{i}x^{i}$}
        \State $b \gets [-1]$
        \Comment{$-1$ is the additive inverse of $1$ in $F_q$}
        
        \State $L \gets 0$
        \For{$i \text{ in } 0, \dots, N-1$}
            \State $d_{i} \gets \sum_{j=0}^{\text{len}(c)}s_{i - j} \times c_{j}$ 
            \Comment{$\leq \ell + 1$ quantum-quantum multiplications and additions}
            \State $v_{i} \gets (2L \leq i)$
            \Comment{quantum-classical comparator to decide between case 2 and case 3}
            \If{$i < \ell$} 
                \State $c \gets \text{concatenate}(c, [0])$
                \Comment{Extends register for $C(X)$ upto a maximum length $\ell+1$}
                \State $b \gets \text{concatenate}([0], b)$
                \Comment{Equivalent to performing $B(X) \gets X\cdot B(X)$}
            \Else
                \State $b \gets \text{roll}(b, 1)$
                \Comment{Equivalent to performing $B(X) \gets X\cdot B(X)$}
            \EndIf 
            \State $r \gets (1 \textbf{ if } d_i = 0 \textbf{ else } d_i)$
            \State $b \gets b \times r$ 
            \Comment{$\leq \ell + 1$ in-place quantum-quantum multiplications}
            \If{$d_i \neq 0$}
                \State $c \gets c - b$
                \Comment{$\leq \ell + 1$ quantum-quantum controlled subtractions}
            \EndIf
            \If {$d_i \neq 0 \textbf{ and } v_{i}$}
                \State $L \gets i + 1 - L$
                \State $b \gets b + c$
                \Comment{$\leq \ell + 1$ quantum-quantum controlled additions}
            \EndIf
            \State $b \gets b \times r^{-1}$
            \Comment{$\leq \ell + 1$ in-place quantum-quantum multiplications and $1$ modular inverse}
        \EndFor
        \State \Return c
    \end{algorithmic}
    \label{algo:bkm_lfsr}
\end{algorithm}

\clearpage

\appendix

\section{Elementary Symmetric Polynomials}
\label{app:esp}

In this section we prove Lemma \ref{thm:esp}. Recall that our max-LINSAT objective function takes the form $f(\mathbf{x}) = f_1(\mathbf{b}_1\cdot\mathbf{x}) + \ldots + f_m(\mathbf{b}_m\cdot\mathbf{x})$, where $f_j:\mathbb{F}_p \to \{+1,-1\}$. Hence $f_j^2 - 1 = 0$ for all $j$. Consequently, the rescaled functions $g_j(x) = (f_j(x) - \overline{f}) / \varphi$ also obey a quadratic identity that does not depend on $j$. (Namely, $\varphi^2 g_j^2 + 2 \overline{f} \varphi g + \overline{f}^2  - 1 = 0$.) Thus the conditions of Lemma \ref{thm:esp} are met by both $f_1,\ldots,f_m$ and $g_1,\ldots,g_m$.

\begin{lemma}
\label{thm:esp}
Let $P$ be any degree-$\ell$ polynomial in a single variable. Let $x_1,\ldots,x_m$ be variables that each obey a quadratic identity $a x_j^2 + b x_j + c = 0$, where $a,b,c$ are independent of $j$ and $a \ne 0$. Then $P(x_1 + \ldots + x_m)$ can be expressed as a linear combination of elementary symmetric polynomials: $P(x_1 + \ldots + x_m) = \sum_{k=0}^\ell u_k P^{(k)}(x_1,\ldots,x_m)$.
\end{lemma}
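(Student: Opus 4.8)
The plan is to work in the quotient ring $R = \mathbb{F}[x_1,\ldots,x_m]/I$, where $I$ is the ideal generated by the $m$ quadratic relations $a x_j^2 + b x_j + c$ for $j=1,\ldots,m$. The asserted identity is naturally an equality in $R$, since it is claimed only under the promise that the $x_j$ satisfy their quadratics. Because $a\neq 0$, each relation lets us rewrite $x_j^2 \equiv -\tfrac{b}{a}x_j - \tfrac{c}{a}$, and by repeatedly substituting this for any square that appears, every element of $R$ reduces to a unique \emph{multilinear} representative, i.e. one in which each variable occurs to degree at most $1$; equivalently, the squarefree monomials $\prod_{j\in S}x_j$ for $S\subseteq\{1,\ldots,m\}$ form an $\mathbb{F}$-basis of $R$.

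First I would apply this reduction to $P(x_1+\cdots+x_m)$, and track two features of the process. Degree: each substitution replaces a factor $x_j^2$ of degree $2$ by the degree-$\leq 1$ expression $-\tfrac{b}{a}x_j-\tfrac{c}{a}$, so no step can raise the total degree; since $P$ has degree $\ell$, the polynomial $P(x_1+\cdots+x_m)$ has total degree $\ell$, and hence its multilinear normal form has total degree at most $\ell$. Symmetry: the generators of $I$ share the same coefficients $a,b,c$ (this is exactly where the hypothesis that $a,b,c$ are independent of $j$ is used), so $I$ is invariant under every permutation of the variables. Because the multilinear normal form is unique, it commutes with the $S_m$-action, and as $P(x_1+\cdots+x_m)$ is manifestly symmetric, its normal form is a symmetric multilinear polynomial of degree at most $\ell$.

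It then remains to observe that a symmetric multilinear polynomial of degree at most $\ell$ is precisely a linear combination of $P^{(0)},\ldots,P^{(\ell)}$. Indeed, its degree-$k$ part is a combination of the squarefree monomials $\prod_{j\in S}x_j$ with $|S|=k$; the symmetric group acts transitively on this set of monomials, so symmetry forces all of them to carry a common coefficient $u_k$, whose associated sum is by definition $P^{(k)}(x_1,\ldots,x_m)$. Collecting the contributions for $k=0,\ldots,\ell$ yields $P(x_1+\cdots+x_m) = \sum_{k=0}^\ell u_k P^{(k)}(x_1,\ldots,x_m)$ in $R$, which is the claimed identity.

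I expect the only genuinely delicate point to be the bookkeeping that the normal-form reduction is simultaneously degree-nonincreasing and symmetry-preserving; everything else is the standard structure theory of multilinear symmetric polynomials. An equivalent but more computational route would avoid the quotient ring altogether: expand $P(x_1+\cdots+x_m)$ in the power sums $p_t = \sum_j x_j^t$, use the quadratic relation to collapse each $p_t$ with $t\geq 2$ into an affine function of $p_1$ and the constant $p_0=m$, and then re-express the resulting polynomial in $p_1$ via the elementary symmetric polynomials. I would fall back on this second approach only if an explicit formula for the coefficients $u_k$ were desired.
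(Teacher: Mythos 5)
Your proposal is correct and follows essentially the same route as the paper's proof: reduce modulo the quadratic identities (using $a \neq 0$) to obtain a multilinear polynomial of total degree at most $\ell$, observe that the $j$-independence of $a,b,c$ makes the reduced polynomial symmetric, and conclude that a symmetric multilinear polynomial of degree at most $\ell$ is a linear combination of $P^{(0)},\ldots,P^{(\ell)}$. If anything, your write-up is slightly more careful at the two points the paper glosses over: you invoke uniqueness of the multilinear normal form in the quotient ring to justify that the reduction preserves symmetry, and you replace the paper's appeal to the fundamental theorem of symmetric polynomials with the direct orbit-transitivity argument showing that symmetry forces a common coefficient $u_k$ on all squarefree monomials of each degree $k$.
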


\begin{proof}
    Reducing modulo the quadratic identities:
    \begin{equation}
        \label{eq:identities}
        a x_j^2 + b x_j + c = 0 \quad j=1,\ldots,m
    \end{equation}
    brings $P(x_1 + \ldots + x_m)$ into a form where none of $x_1,\ldots,x_m$ is raised to any power greater than one. We can think of the resulting expression as a multilinear multivariate polynomial $P(x_1,\ldots,x_m)$, which is of total degree $\ell$. Since the coefficients in the $m$ identities described in \eq{eq:identities} are independent of $j$, this multilinear multinomial is symmetric. That is, for any permutation $\pi \in S_m$ we have $P(x_{\pi(1)}, \ldots, x_{\pi(m)}) = P(x_1,\ldots,x_m)$. By the fundamental theorem of symmetric polynomials, multilinearity and symmetry together imply that the degree-$\ell$ polynomial $P(x_1,\ldots,x_m)$ can be expressed as a linear combination of elementary symmetric polynomials of degree at most $\ell$.
\end{proof}

\section{Gallager's Ensemble}
\label{app:random_regular}

In \cite{G63} Gallager defined the following ensemble of matrices over $\mathbb{F}_2$. This ensemble is widely used in coding theory because when parity check matrices are drawn from Gallager's ensemble, the resulting LDPC codes have good parameters with high probability. Together with a choice of $\mathbf{v}$, a matrix $B$ sampled from Gallager's ensemble also induces a natural ensemble of $D$-regular max-$k$-XORSAT instances.

Given parameters $(k,D,b)$, a sample from Gallager's ensemble of matrices $B \in \mathbb{F}_2^{bk \times bD}$ is generated as follows. Let $A$ denote the horizontal concatenation of $D$ identity matrices, each $b \times b$, yielding $A = [I_1 I_2 \ldots I_D]$. For $i=1,\ldots,k$ let $M_i = A P_i$ where $P_1,\ldots,P_k$ are independent uniformly random $bD \times bD$ permutation matrices. Concatenate these vertically yielding
\begin{equation}
B^T = \left[\begin{array}{c}
M_1 \\
\hline
M_2 \\
\hline
\vdots \\
\hline
M_k
\end{array}\right]
\end{equation}
The matrix $B^T$ thus has $n = bk$ rows and $m = bD$ columns, with $k$ ones in each row and $D$ ones in each column. By choosing each function $f_1,\ldots,f_m$ independently at random to be either $f_i(x) = (-1)^x$ or $f_i(x) = -(-1)^x$ we obtain a $D$-regular instance of max-$k$-XORSAT as a special case of max-LINSAT over $\mathbb{F}_2$.

Gallager's $(k,D,b)$ ensemble is not equivalent to sampling uniformly from all $bk \times bD$ matrices with $k$ ones in each row and $D$ ones in each column. It shares many properties with such a distribution but is more convenient to sample from.

\section{Simulated Annealing Applied to OPI}
\label{app:RSSA}

By the results of \sect{sec:local} we expect simulated annealing to yield solutions to the OPI instances of \sect{sec:OPI} where the fraction $\phi_{\max}$ of constraints satisfied scales like 
\begin{equation}
    \phi_{\max} \simeq \frac{1}{2} + \frac{c}{D^\nu}.
\end{equation}
where $c$ and $\nu$ are free parameters of the fit. According to our crude theoretical model $\nu$ should be $1/2$. However, extrapolating from our empirical results with sparse instances we would expect $\nu$ to be slightly smaller than $1/2$. As shown in Fig. \ref{fig:RSSA}, experimental results match well to this prediction with $\nu = 0.45$. In OPI every variable is contained in every constraint so the degree $D$ is equal to the number of constraints $m$.

\begin{figure}[ht]
    \begin{center}
        \includegraphics[width=0.6\textwidth]{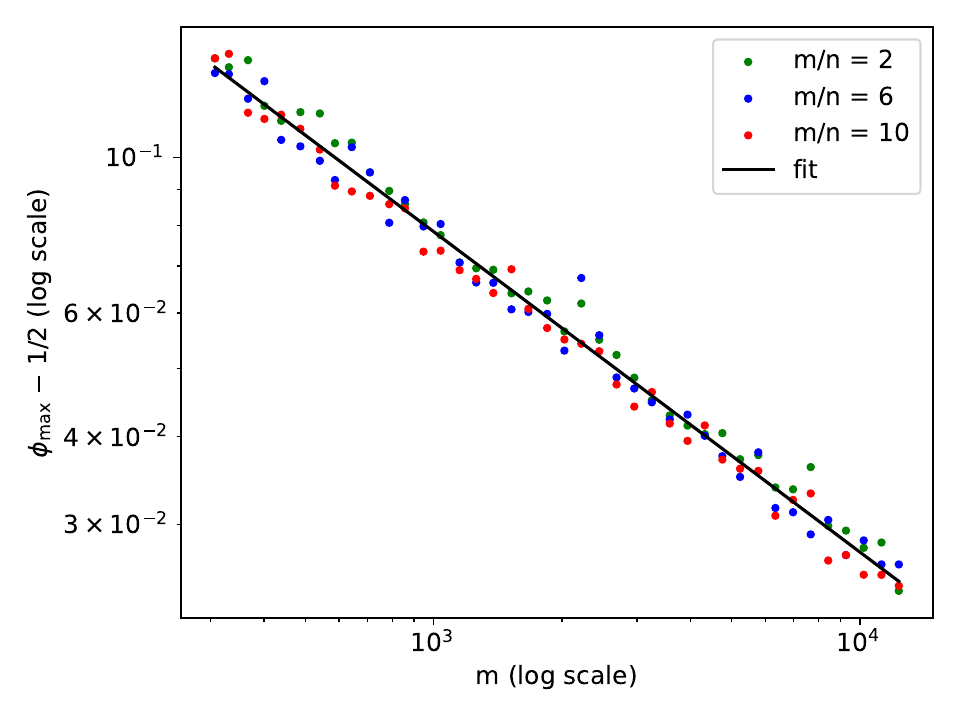}
    \end{center}
    \caption{Here we generate OPI instances over $\mathbb{F}_p$ where $p$ takes prime values from $307$ to $12,343$. The number of constraints is $m = p-1$. For each $m$ we take $n\in\{m/2,m/6,m/10\}$, rounded to the nearest integer. We find that, independent of $n/m$, the approximation achieved by simulated annealing with $10,000$ sweeps fits well to $\phi_{\max} = 1/2+1.8 D^{-0.45}$. Note that, in OPI the degree $D$ equals the number of constraints $m$, since every variable is contained in every constraint.
    \label{fig:RSSA}}
\end{figure}

\bibliography{dqi}

\end{document}